\documentclass[a4paper,USenglish,cleveref, pdfa]{lipics-v2021}

\pdfoutput=1 
\hideLIPIcs  


\bibliographystyle{plainurl}

\title{Circuits and Backdoors: Five Shades of the SETH} 


\author{Michael Lampis}{Universit\'{e} Paris-Dauphine, PSL University, CNRS UMR7243, LAMSADE, Paris, France}{michail.lampis@dauphine.fr}{https://orcid.org/0000-0002-5791-0887}{}

\authorrunning{M. Lampis} 

\Copyright{Michael Lampis}

\ccsdesc[500]{Theory of computation~Parameterized complexity and exact algorithms} 

\keywords{SETH, SAT Backdoors, Parameterized Complexity, Fine-grained Complexity} 

\category{} 

\relatedversion{} 


\funding{This work was supported by ANR project ANR-21-CE48-0022 (S-EX-AP-PE-AL).}


\nolinenumbers 

\EventEditors{John Q. Open and Joan R. Access}
\EventNoEds{2}
\EventLongTitle{42nd Conference on Very Important Topics (CVIT 2016)}
\EventShortTitle{CVIT 2016}
\EventAcronym{CVIT}
\EventYear{2016}
\EventDate{December 24--27, 2016}
\EventLocation{Little Whinging, United Kingdom}
\EventLogo{}
\SeriesVolume{42}
\ArticleNo{23}


\usepackage{tikz}

\usepackage{enumitem}

\newcommand{\tsat}{3-\textsc{SAT}}

\newcommand{\seth}{\textsc{SETH}}

\newcommand{\ppseth}{$\pw$\textsc{-SETH}}
\newcommand{\pw}{\textrm{pw}}
\newcommand{\pwM}{\textrm{pwM}}
\newcommand{\td}{\textrm{td}}
\newcommand{\tdM}{\textrm{tdM}}
\newcommand{\tw}{\textrm{tw}}
\newcommand{\twM}{\textrm{twM}}

\newcommand{\eps}{\varepsilon}

\newcommand{\circseth}{\textsc{Circuit-SETH}}
\newcommand{\hornseth}{\textsc{HornB-SETH}}
\newcommand{\wpseth}{\textsc{W[P]-SETH}}

\newcommand{\aldseth}{\textsc{LD-C-SETH}}
\newcommand{\pwmseth}{$\pw$\textsc{M-SETH}}
\newcommand{\twmseth}{$\tw$\textsc{M-SETH}}
\newcommand{\tdmseth}{$\td$\textsc{M-SETH}}
\newcommand{\wsatseth}{\textsc{W[SAT]-SETH}}
\newcommand{\logpwmseth}{$\log\!\pw$\textsc{M-SETH}}
\newcommand{\logtdmseth}{$\log\!\td$\textsc{M-SETH}}
\newcommand{\twosatseth}{\textsc{2SatB-SETH}}
\newcommand{\maxsatseth}{\textsc{MaxSat-SETH}}
\newcommand{\hubmaxsatseth}{$(\sigma,\delta)H$\textsc{-MaxSat-SETH}}
\newcommand{\sdhseth}{$(\sigma,\delta)H$\textsc{-SETH}}

\newcommand{\ncseth}{\textsc{NC-SETH}}
\newcommand{\mtsh}{\textsc{Max-3-SAT-SETH}}

\newcommand{\reach}{\textsc{Ann-s-t-Reach}}
\newcommand{\nonreach}{\textsc{Ann-s-t-Non-Reach}}
\newcommand{\kncut}{$k$\textsc{-Neighborhood-Cut}}

\newcommand{\sdh}{$(\sigma,\delta)$-hub}

\begin{document}

\maketitle

\begin{abstract}

The Strong Exponential Time Hypothesis (\seth) is a standard assumption in
(fine-grained) parameterized complexity and many tight lower bounds are based
on it.  We consider a number of reasonable weakenings of the \seth, with
sources from (i) circuit complexity (ii) backdoors for SAT-solving (iii) graph
width parameters and (iv) weighted satisfiability problems. Our goal is to
arrive at formulations which are simultaneously more plausible as hypotheses,
but also capture interesting and robust notions of complexity.  Using several
tools from classical complexity theory we are able to consolidate these
numerous hypotheses into a hierarchy of five main equivalence classes of
increasing solidity. This framework serves as a step towards structurally
classifying a variety of \seth-based lower bounds into intermediate equivalence
classes.


To illustrate the applicability of our framework, for each of our classes we
give at least one (non-\textsc{SAT}) problem which is equivalent to the class
as a characteristic example application. As our main showcase, we consider a
natural parameterization of \textsc{Independent Set} by vertex deletion
distance from several standard graph classes. Our framework allows us to make
connections that would have previously been hard to see: for instance, we
consider the question of whether \textsc{(Weighted) Independent Set} can be
solved faster than $2^kn^{O(1)}$ on graphs which are $k$ vertices away from a
class $\mathcal{C}$ where the problem is tractable, such as interval graphs or
cographs. We provide precise characterizations of the difficulty of breaking
such bounds, in particular proving that obtaining $(2-\eps)^kn^{O(1)}$ time
algorithms for Cograph$+kv$ or Block$+kv$ graphs is \emph{equivalent} to
obtaining a fast satisfiability algorithm for circuits of depth $\eps n$; while
solving the weighted version for Interval$+kv$ graphs is \emph{equivalent} to
the (seemingly) harder problem of obtaining a fast satisfiability algorithm for
\textsc{SAT} parameterized by a \textsc{2-SAT} backdoor.

\end{abstract}

\newpage

\setcounter{tocdepth}{2}

\tableofcontents

\newpage

\section{Introduction} 

The objective of this paper is to study the structural relations between
fine-grained questions in parameterized complexity theory. In order to avoid
becoming too abstract and philosophical, let us begin with a concrete example
that illustrates what we mean by that.

Consider the \textsc{Independent Set} problem and fix a class of graphs
$\mathcal{C}$ on which \textsc{Independent Set} is in P and which satisfies the
very mild condition of being hereditary (closed under vertex deletion). How
hard is it to solve \textsc{Independent Set} on $\mathcal{C}+kv$, that is, the
class of graphs which are $k$ vertex deletions away from being in
$\mathcal{C}$? Assuming the $k$ vertices are marked in the input, the obvious
time upper bound is $2^kn^{O(1)}$: guess which of the $k$ vertices to take in
the solution, simplify the instance, and solve the rest in polynomial time.
Notice that this is a very standard ``distance to triviality''
parameterization, of which numerous are considered in the parameterized
complexity literature, and that our algorithmic argument is based on a trivial
branching procedure.

Given the above, a very natural question to ask is for which classes
$\mathcal{C}$ one could hope to do better than $2^kn^{O(1)}$. Let us
cherry-pick a few example classes which satisfy our criteria, such as interval
graphs, cluster graphs, cographs, and block graphs\footnote{The precise
definitions of these classes are recalled in \cref{sec:indset}, but are not
necessary at this point.}.  Since cluster graphs are a subset of the other
classes, one should clearly start there.  However, a recent result of Esmer,
Focke, Marx, and Rzazewski \cite{EsmerFMR24} shows that even for cluster
graphs, we cannot achieve $(2-\eps)^kn^{O(1)}$, unless the \seth\ is
false\footnote{As we explain later this is not explicitly stated in
\cite{EsmerFMR24} but follows from their construction. We give details in
\cref{sec:indset}.}! It therefore appears that despite the fact that, say,
interval graphs are a much wider class than cluster graphs, the complexity of
all these problems (and many other similar cases, say $\mathcal{C}$ being
chordal, claw-free, $P_5$-free, etc.) simply collapse to one identical case.
Should we view this as cause for celebration (because we have completely
resolved all these problems)? Or should we view this as cause for concern (are
we missing something)?

Before we answer this, let us take a step back into classical complexity
theory. Consider the following questions: we are given a CNF formula $\phi$ and
are asked (i) is $\phi$ satisfiable? (ii) does $\phi$ have a unique satisfying
assignment? (iii) is the majority of assignments satisfying for $\phi$? (iv)
can we satisfy $\phi$ even if we alternate picking values for variables with an
adversary who is trying to falsify it? A knowledgeable reader can easily
recognize that the complexity of these questions is strongly believed to be
very different, as they correspond to complete problems for the classes NP,
$\Sigma_2^p$, PP, and PSPACE respectively. Nevertheless, a reader less familiar
with complexity theory may see a similarity between this situation and the
questions of the previous paragraph. Indeed, all the questions we asked about
$\phi$ can be solved in exponential time (and polynomial space), while
exponential time is necessary even for the easiest one (under standard
hypotheses). We observe then that a large effort of classical complexity theory
has been devoted into understanding the relative structural complexities of
various problems, even (and perhaps especially) in cases where the problems are
believed to have the \emph{same time and space complexity}. This effort has
been quite impactful as the different complexity nuances captured by all the
intermediate classes have often turned out to be important in other aspects
(for instance, kernelization lower bounds typically rely on the assumption that
the polynomial hierarchy does not collapse, which would not make sense as a
statement if we only cared about deterministic time complexity).

Given this history, the thesis of this paper is that if, in the context of
fine-grained parameterized complexity, we settle just for \seth-based lower
bounds, then we are indeed missing something.  Going back to our example, we
would like to replicate this success of classical complexity theory in our
setting and obtain a more nuanced understanding of the complexity of our
questions.  In the same way that traditional complexity theory allows us to go
from saying ``this problem is likely intractable (not in P)'' to saying ``this
problem is exactly as hard as $\Sigma_2^p$ or PSPACE'', we would like to be
able to go from saying ``brute-force is inevitable for \textsc{Independent Set}
on the class $\mathcal{C}+kv$ (under \seth)'' to ``avoiding brute force for
\textsc{Independent Set} on $\mathcal{C}+kv$ is exactly as hard as this class
of problems''.  The obstacle is of course that, at the moment, fine-grained
parameterized theory is missing the framework and the vocabulary necessary to
do this. In other words, we do not yet have a classification effort which would
allow us to take questions which are all ``\seth-hard'' and order them into
equivalence classes, according to their relative difficulty.  

The goal of this paper is exactly to explore this new direction, start building
the basic blocks of such a framework, and show that this is a fruitful and
promising approach.  We will do this by identifying five natural equivalence
classes (\cref{fig:results}), one of which contains the \seth, and giving
natural complete problems for each as well as an informal characterization
connecting them to classical complexity classes.  To showcase the applicability
of this framework we will also apply it to the \textsc{Independent Set} problem
discussed above for the four classes we cherry-picked (\cref{fig:indset}).  The
framework will allow us to compare the relative complexities of cases which do
not have an obvious graph-theoretic relation (for example cographs and interval
graphs are incomparable classes), but also relate them with questions that
initially seem completely unrelated, stemming from circuit complexity or
backdoors for \textsc{SAT}-solving. We see this investigation as a first step
that will allow us to more carefully classify known tight lower bounds from the
literature into equivalence classes, with the added benefit of identifying the
lower bounds which should be considered most plausible (because it must be
taken into account that the \seth\ is not a universally believed assumption).

\subsection{Background and Motivation}

Having summarized the general direction of this paper, let us give a more
detailed review of the relevant background. Recall that the Strong Exponential
Time Hypothesis (\seth\ \cite{ImpagliazzoP01}) states that the satisfiability
of CNF formulas cannot be decided significantly faster than by brute force,
that is, in time $(2-\eps)^n$, assuming the formulas have large enough
arity\footnote{To ease presentation, in this section we use informal statements
such as this one to describe all complexity hypotheses. In case of doubt, we
refer the reader to \cref{sec:hypo}, where precise definitions are given.}.
Since its introduction, the \seth\ has been an indispensable tool in the
development of fine-grained complexity theory. In this paper we focus on its
influence on fine-grained \emph{parameterized} complexity.

One of the most important techniques in the parameterized algorithms toolbox is
the use of dynamic programming algorithms over tree decompositions, and the
question of whether such algorithms can be improved is of central importance.
It was therefore a very exciting development for the field when Lokshtanov,
Marx, and Saurabh \cite{LokshtanovMS18} pioneered a line of work which allowed
to provide tight lower bounds for a large class of algorithms of this type. For
instance, one of their results was that the standard algorithm for
\textsc{3-Coloring}, which has complexity $3^{\tw}n^{O(1)}$, cannot be improved
to $(3-\eps)^{\tw}n^{O(1)}$, where $\tw$ denotes the treewidth. Similar
\emph{tight} results were given in \cite{LokshtanovMS18} and many subsequent
works for many other problems, and also for other parameters (for example
clique-width, cutwidth)
\cite{BojikianCHK23,BorradaileL16,CurticapeanM16,DubloisLP21,DubloisLP22,FockeMINSSW23,FockeMR22,GanianHKOS22,HanakaKLOS20,HegerfeldK23,HegerfeldK23b,JaffkeJ23,KatsikarelisLP19,KatsikarelisLP22,LampisV24,MarxSS21,OkrasaR21}.
An important common point of all these lower bounds is that they are generally
shown to hold \emph{assuming the} \seth. 

\subparagraph*{The \seth\ is too strong (i).} One of the motivations of this
paper is to question the wisdom of basing all these numerous lower bound
results on one-way implications of the \seth. Even though the \seth\ is mostly
considered a reasonable hypothesis, it is important to keep in mind that it is
not universally believed. Indeed, there seem to be serious reasons to doubt its
validity, much more than that of more standard assumptions (we refer the reader
to the argumentation given by Williams \cite{Williams19} who, somewhat
provocatively, assigns the \seth\ only a one-in-four probability of being true;
see also \cite{Williams16}). In other words, even if one believes that the
\seth\ is \emph{probably} true, because its status is sufficiently open one
should not necessarily consider a \seth-based lower bound as the final word on
a problem's status.

This line of reasoning has already been well-developed in the
(non-parameterized) fine-grained complexity community. In particular, there are
several works which reprove (and improve) lower bounds based on the \seth\
using a more plausible assumption, such as the \ncseth, which states the same
lower bound but for circuits of poly-logarithmic depth, rather than CNF
formulas. We refer the reader to the work of Abboud, Hansen, Vassilevska
Williams, and Williams \cite{AbboudHWW16} for a more extensive discussion
of why the \ncseth\ is a preferable hypothesis, or equivalently why
falsifying the \ncseth\ would have far more surprising consequences than
disproving the \seth; see also \cite{AbboudB18,ChenGLRR19,Polak18}. In the
parameterized complexity setting, similar work was only recently initiated
in \cite{Lampis25} (more on this below).

\subparagraph*{The \seth\ is too strong (ii).} The discussion above explains
that the \seth\ is not an ideal starting point for proving lower bounds, as its
validity is still subject to some doubt. There is, however, another, more
counter-intuitive reason to argue that the \seth\ is not ideal, and that is
that the \seth\ is ``too good'' in comparison with the lower bounds we are
interested in. This is a line of reasoning taken by several previous works
\cite{HegerfeldK22,JaffkeJ23}. In particular, Jaffke and Jansen
\cite{JaffkeJ23} showed that the correct base for $q$-\textsc{Coloring} is
exactly $q$ not only when we parameterize by pathwidth or treewidth, but
even when we parameterize by vertex deletion distance to path-forest (a
significantly more restrictive parameter).  More recently, this approach
was further explored by Esmer, Focke, Marx, and Rzazewski
\cite{EsmerFMR24}, who observed that for many of the problems considered in
\cite{LokshtanovMS18} (including $q$-\textsc{Coloring}, \textsc{Independent
Set}, and \textsc{Max-Cut}), a more refined reduction shows that the \seth\
implies a much stronger lower bound: Not only are current algorithms
optimal for treewidth, but they are still optimal for a novel, extremely
restricted parameter, which they call \sdh. A \sdh, (for $\sigma,\delta$ to
be considered absolute constants) is a set of vertices (a modulator) whose
removal leaves a collection of bounded-size (size $\sigma$) components with
bounded degree (degree $\delta$) in the modulator.  The class of graphs
with a \sdh\ of size $k$ is a tiny fraction of the class of graphs of
treewidth $k$, so this is a strong indication that the lower bounds for the
more interesting parameters should be achievable by a much weaker assumption.

Another way to see the above is that the \seth\ is talking about the hardness
of a \emph{different class} of problems than the width-based lower bounds that
it implies. Indeed, observe that we have crucially moved from problems where
the ``obvious'' algorithm is dynamic programming (parameterized by treewidth or
other widths) to problems where the ``obvious'' algorithm is branching (for
example, decide the color of the vertices of the modulator). This leads us to
believe that there are several ``complexity classes'' lying between the \seth\
and the treewidth-based lower bounds discussed above and waiting to be
discovered.

\subparagraph*{Equivalences are better than implications.} Another reason to be
unsatisfied with the state of the art is that one-way reductions from
\textsc{SAT} obscure the relative difficulties of very different problems. To
see what we mean by this, let us go back to the classical complexity setting
and try to compare the difficulty of four problems which are, say, respectively
NP-complete (\textsc{SAT}), $\Sigma_2^p$-complete
($\exists\forall$-\textsc{SAT}), $\#$P-complete ($\#$\textsc{SAT}), or
PSPACE-complete (\textsc{QBF}). A superficial understanding of complexity theory
would claim that these are all ``equally hard'': if one believes that
P$\neq$NP, all these problems need super-polynomial time and polynomial space.
Arguably, one of the main contributions of classical complexity theory is to
offer a better answer and a more nuanced understanding of the relative
complexities of these problems, by classifying them into equivalence classes.
Crucially, this work often ends up agreeing with the practical difficulty of
solving such problems (see for example \cite{MarinNPTG16} for why
\textsc{Quantified SAT} is considered harder than \textsc{SAT} in practice). 

Much of the state of the art in fine-grained parameterized complexity can then
be thought of as being close to the superficial understanding
we sketched above: we show that our problems are harder than the \seth, and
consider the case closed. The thesis of this paper is that there is something
to be gained by more carefully comparing the relative complexities of our
problems, and for this we need to organize them into equivalence classes.
Interestingly, we will show that the equivalence classes we end up with can
sometimes be orthogonal to traditional complexity classes.

\begin{figure}[t]

\input{figure.tex}

\caption{Summary of results. We identify five main equivalence classes, with
arrows indicating implication, that is, pointing to weaker hypotheses.  From
bottom to top the five classes are problems equivalent to the \seth; the \seth\
for \textsc{Max-SAT}; the \seth\ for circuits of linear depth; the \seth\ for
\textsc{2-SAT} backdoors; and the \seth\ for arbitrary circuits.  The figure
also shows some connections with related hypotheses. Full definitions are given
in \cref{sec:hypo}.}\label{fig:results}

\end{figure}

\subsection{Summary of Results}

One way to explain the main direction of our results is to come back to the
recent work of Esmer, Focke, Marx, and Rzazewski \cite{EsmerFMR24} we
mentioned. We share their initial motivation, which is the observation that the
\seth\ does not just imply tight treewidth-based lower bounds, such as those of
\cite{LokshtanovMS18}, but in fact much more. However, where their next step is
to say ``Let us then try to extract the strongest implication of the \seth'',
we say ``Let us then explore how we can replace the \seth\ by a different (more
solid) hypothesis, and try to understand what type of question we capture''. We
will therefore study numerous variants of the \seth, coming from several
different areas, including: (i) circuit versions, similar to the \ncseth\ (ii)
backdoor versions, such as ``If we have a \textsc{SAT} formula with a
\textsc{2-SAT} backdoor, can we avoid brute-forcing the backdoor''? (a backdoor
is a set of variables such that setting them simplifies the formula) (iii)
modulator to graph width versions, such as ``If we have a \textsc{SAT} formula
with a modulator to a formula of constant treewidth, can we avoid brute-forcing
the modulator?'' (iv) weighted versions where the question is whether a
satisfying assignment of weight $k$ exists (v) maximization versions where the
objective is to maximize the number of satisfied clauses.

At this point it is worth pausing and trying to reassure a worried reader. One
of the weak points of fine-grained complexity theory is the proliferation of
reasonable-sounding complexity hypotheses, with no obvious connection to each
other. It is natural to think that our decision to study a relatively large and
diverse collection of variants of the \seth\ can only make things worse.
Thankfully, the main take-home message of our results is that, even though we
consider quite a few variations, we are able to identify only five equivalence
classes (one of which contains the \seth\ itself), which align neatly with
classes from classical complexity theory. In a sense, the conclusion we draw is
optimistic as our results indicate that there is a great deal of
complexity-theoretic structure waiting to be discovered here.

Let us then summarize our results, referring also to \cref{fig:results}. We
identify the following equivalence classes of complexity hypotheses:

\begin{enumerate}

\item Possibly the most interesting of the classes we consider is the middle
one of \cref{fig:results}, which is motivated by the \ncseth. Even though the
\ncseth\ is stated for circuits of poly-logarithmic depth in
\cite{AbboudHWW16}, in practice it is often understood to apply to circuits of
depth $n^{o(1)}$ (e.g.  \cite{ChenGLRR19}).  We take this reasoning to its
logical conclusion and formulate this hypothesis for circuits of depth
$\Omega(n)$, or more precisely for depth at most $\eps n$ (we call this
\aldseth, for Linear-Depth Circuits).  We discover that this hypothesis is
equivalent to stating that brute force is unavoidable for CNF formulas supplied
with a modulator to one of three standard widths: (i) constant pathwidth (ii)
constant treewidth or (iii) logarithmic tree-depth. We find it somewhat
surprising that these parameters, which are know to strictly generalize each
other in the graph-theoretic sense, collapse to the same hypothesis.  We also
show that disproving the \aldseth\ is equivalent to avoiding brute force for
the W[SAT]-complete problem of finding a weight-$k$ satisfying assignment to a
given (general, non-CNF) formula. This class corresponds to the middle box of
\cref{fig:results}. An important remark here is that we are able to extend
these results to \textsc{MaxSAT} (see \cref{sec:maxsattd}), as this requires a
foray into arithmetic circuits and is crucial later on to obtain the precise
characterization of the complexity of \textsc{Independent Set} on graphs which
are close to being cographs or block graphs. This class therefore seems to be
quite robust and contains some very natural equivalent problems unrelated to
\textsc{SAT}.

\item Going further, we consider the \seth\ for circuits without any
restriction (\circseth), an assumption which seems much more plausible than the
\seth. We show that avoiding brute force in this case can be done if and only
if we can avoid brute force in one of two problems (i) deciding satisfiability
for a CNF supplied with a Horn backdoor (ii) deciding weight-$k$ satisfiability
for an arbitrary circuit, a problem complete for W[P]. This is the top-most
class of \cref{fig:results}.

\item Coming back to the \aldseth, we consider the graph width parameters which
are closest to the ones already considered.  For modulators to logarithmic
pathwidth (which would be the logical next step after logarithmic tree-depth)
we show that avoiding brute-force enumeration is possible if and only if we can
avoid it for \textsc{2-SAT} backdoors. As we explain below, one of our main
showcase results is to show that an equivalent characterization of this class
can be given by \textsc{Weighted Independent Set} on $\textrm{Interval}+kv$
graphs, so this class also seems quite natural.

\item In the other direction, the logical next step below modulators to
constant pathwidth is modulators to constant tree-depth. Here we show that the
hypothesis that brute-force algorithms are optimal is in fact equivalent to the
standard \seth. The parameter \sdh\ introduced by \cite{EsmerFMR24} is
therefore included here, as it is a restriction of bounded tree-depth
modulators.  In other words, our main result here is that hypotheses based on
any parameter more restricted than ``modulator to bounded tree-depth'' all
collapse to the normal \seth. This is the bottom class of \cref{fig:results}.
In a similar spirit we show that for \textsc{Max-SAT}, the standard version of
the hypothesis (for the number of variables) and the version for the size of a
\sdh\ are equivalent (this is the second class from the bottom in
\cref{fig:results}).

\end{enumerate}

\subparagraph*{Take-home message} We have identified five natural equivalence
classes of hypotheses.  In order to make sense of these results, let us attempt
to draw an informal analogy with classical complexity theory.  All the
hypotheses we consider concern problems where the obvious algorithm is a
brute-force ``guess-and-check''; what differs is the difficulty of the
verification procedure. If the verification procedure is P-complete (for
example \textsc{Horn-SAT} or evaluating an arbitrary circuit) we end up in the
top-most class of \cref{fig:results}. Similarly, if it is NL-complete
(equivalent to a reachability problem) we end up in the class right below,
while the class of the \aldseth\ seems to correspond to NC$^1$. Finally, the
class immediately above the \seth\ seems somewhat similar to (a restricted
version of) TC$^0$, since in \textsc{Max-SAT} what we need to do if given an
assignment is count the number of satisfied clauses and compare this number
with a target -- this corresponds to the functioning of a threshold gate. What
we are then essentially doing is reconstructing a parallel of a part of
traditional complexity theory around fine-grained variations of the \seth\ and
what remains is to populate these classes with further problems.  Armed with
this intuition, we can identify some further complexity questions as
characteristic examples of each class.

\subparagraph*{Applications and Implications} One motivation we gave for this
investigation was that we would like to increase our confidence in \seth-based
lower bounds for problems parameterized by treewidth. One implication of our
results is that we have managed to do so for many problems.  Indeed, we show
that improving upon the trivial algorithm for $q$-\textsc{Coloring} for
parameters modulator size to constant pathwidth, constant treewidth, or
logarithmic tree-depth is equivalent to the \aldseth.  This implies that
improving upon the standard algorithm for pathwidth or treewidth is at least as
hard as falsifying the \aldseth\ (and hence the \ncseth).  As explained, this
seems significantly more solid than a result based on the \seth. Furthermore,
recent work by Lampis \cite{Lampis25} showed that a number of fine-grained
lower bounds for problems parameterized by pathwidth (including
\textsc{Independent Set}, \textsc{Dominating Set}, \textsc{Set Cover} and
others) are equivalent to the \ppseth, the hypothesis that \textsc{SAT} cannot
be solved in time $(2-\eps)^{\pw}|\phi|^{O(1)}$, where $\pw$ is the input
formula's primal pathwidth. Since the \ppseth\ is implied by the same
hypothesis for parameter modulator to logarithmic pathwidth, all the hypotheses
we consider here except those of the top-most class of \cref{fig:results} also
imply the lower bounds of \cite{Lampis25}. We have thus succeeded in giving
further evidence that known lower bounds for pathwidth and treewidth hold. For
instance, even though it was previously known that no $(3-\eps)^\pw n^{O(1)}$
algorithm is possible for \textsc{Dominating Set} under the \seth, our results
(together with those of \cite{Lampis25}) establish this under the \ncseth.  

Beyond this, we give a characteristic example problem for each class, based on
the intuition linking the classes of \cref{fig:results} with traditional
complexity classes. In particular, we show that the \circseth\ is equivalent to
beating brute-force for the problem of deleting $k$ vertices from a graph, so
that it becomes $r$-degenerate, for any fixed $r\ge 2$; this problem is known
to be P-complete. Similarly, we give a reachability-type problem (\kncut) that
is equivalent to the \twosatseth, which is natural, since reachability-type
problems are usually NL-complete.

These applications are of particular interest for the two bottom classes in
light of the results of \cite{EsmerFMR24}. In this work it was shown that if
the \seth\ is true, it is impossible to avoid brute force for (among others)
$q$-\textsc{Coloring} and \textsc{Max-Cut}, parameterized by \sdh\ size. This
perhaps creates the impression that we should have the same amount of
confidence in the two lower bounds. Contrary to this, we show that their lower
bound for $q$-\textsc{Coloring} is in fact equivalent to the \seth, while their
lower bound for \textsc{Max-Cut} is equivalent to the \maxsatseth. If it were
possible to show that these lower bounds are equivalent to each other, this
would also establish that the \mtsh\ (the restriction of \maxsatseth\ to CNFs
of arity $3$) implies the \seth. As pointed out in \cite{EsmerFMR24}, this is a
major open problem. Our results therefore uncover some (perhaps unforeseen)
nuance in the lower bounds of \cite{EsmerFMR24}.

\subparagraph*{Relation with Time Complexity Classes.} It is worth stressing
that, even though we present \cref{fig:results} as a collection of ``complexity
classes'', and indeed we draw a parallel with classical such classes, the
results we present are in a sense orthogonal to normal notions of time
complexity. What we mean by this is that, as can be seen in \cref{fig:results},
we often show equivalence for problems which (are widely believed to) belong in
distinct classes, namely, problems which are FPT on the one hand and W[SAT]- or
W[P]-complete on the other. In other words, the equivalence classes of
\cref{fig:results} do not indicate that all the problems in each class have the
same time complexity; but rather that if we could improve the running time of
the best algorithm for one, we could improve it for the others, albeit
obtaining different types of running times for each problem. We remark that a
similar situation holds for the \ppseth\ \cite{Lampis25}.

\begin{figure}[h] \centering\input{indsetfigure.tex} \caption{Summary of our
results for \textsc{(Weighted) Independent Set} parameterized by vertex
deletion distance to four graph classes: block, cluster, interval graphs, and
cographs.  Arrows indicate implication and a problem name indicates the
hypothesis that brute-force is inevitable for this case. We consider unweighted
versions (IS), unary weighted (WIS) and binary weighted
(BinWIS).}\label{fig:indset} \end{figure}

\subparagraph*{A showcase application.} At this point we have established a
framework based on five version of the \seth\ which are, we believe, quite
natural and roughly corresponds to five complexity classes, endowed with nice
complete problems. Nevertheless, it needs to be said that complexity classes
can only be considered as interesting as the problems they contain or capture,
or as Papadimitriou and Yannakakis put it, classes that contain interesting
problems are \emph{discovered} not \emph{invented} \cite{PapadimitriouY96}.
Trying to verify that this is the case for our classes, we return to the
motivating example this paper began with: \textsc{(Weighted) Independent Set}
on $\mathcal{C}+kv$ graphs, where $\mathcal{C}$ are the graph classes we
cherry-picked: cluster graphs, cographs, block graphs, and interval graphs. Our
results are summarized in \cref{fig:indset}. We identify the \aldseth\ as a
dividing line for the complexity of this problem: for $\textrm{Cluster}+kv$
graphs the problem is at most as hard as the \aldseth, even if binary weights
are allowed, while for $\textrm{Cograph}+kv$ and $\textrm{Block}+kv$ the
problem is \emph{exactly as hard} as the \aldseth, in both the unweighted case
and the weighted case (with unary weights). However, \textsc{Weighted
Independent Set} (again with unary weights) appears to be strictly harder for
$\textrm{Interval}+kv$ graphs, as we show that this version is
\emph{equivalent} to the \logpwmseth.  Observe that in this way the framework
allows us to make comparisons which are not obvious graph-theoretically:
cographs, block graphs, and interval graphs are incomparable classes, so a
priori one would not have necessarily expected cographs and block graphs to be
equivalent but interval graphs to be harder.  Of course, a broader advantage is
that integrating these questions in our framework allows us to compare these
questions with many other, seemingly unrelated questions (such as whether
$q$-\textsc{Coloring} can be solved faster than brute-force on graphs with a
small modulator to constant pathwidth, which we discussed previously).  We
believe these results show that this is a direction that merits further
investigation.

\subsection{Overview of Techniques}

Let us now sketch the main ideas used to obtain the results mentioned above. In
all cases, a large part of the intuition needed is obtained by the parallels we
drew between these classes and traditional complexity classes. More precisely:

\begin{enumerate}

\item For the hypotheses which turn out to be equivalent to the \aldseth, one
of the main technical tools we rely on is Barrington's theorem
\cite{Barrington89}, which allows us to transform a circuit on $n$ inputs and
depth $\eps n$ into a branching program of constant width and length
exponential in $\eps n$. This program can then be simulated by a formula with
constant pathwidth and a modulator representing the inputs. For the converse
direction we need to show that an algorithm disproving the \aldseth\ can be
used to decide the satisfiability of a CNF $\phi$ with a modulator $M$ of size
$m$, such that deleting $M$ produces a formula of tree-depth at most $O(\log
|\phi|)$. For this, we need to eliminate all the variables of the formula
except those of $M$. We show how to do this in a way that is exponential in the
tree-depth (and hence polynomial in the size of the formula). A particular
technical challenge here is that, to generalize this result to \textsc{MaxSAT}
(which will be crucial for later applications) we need to extend these
arguments to arithmetic circuits and then move back to boolean circuits to
obtain equivalence with the \aldseth.

\item For the equivalence between \twosatseth\ and \logpwmseth\ we rely on the
intuition that both \textsc{2-SAT} and satisfiability for \textsc{SAT} formulas
of pathwidth $O(\log |\phi|)$ are essentially reachability problems. However, a
closer look reveals that a direct reduction is non-obvious, as deciding
\textsc{2-SAT} is in fact more readily seen to be equivalent to
\emph{non-}reachability. We therefore need to transfer the classical
Immerman–Szelepcsényi theorem, which established that NL=coNL, to our setting
\cite{Immerman88}.

\item For the equivalences related to \circseth\ we rely on the intuition that
computing the value of a circuit and deciding the satisfiability of a Horn
formula are both P-complete problems.

\item For the equivalences given in the two bottom classes, the hard part is
showing that (\textsc{Max-})\textsc{SAT} parameterized by a modulator (to
constant tree-depth, or a \sdh) can be reduced to the same problem
parameterized by the number of variables, that is, that we can eliminate the
extra variables. For \textsc{SAT} this is done by an exhaustive application of
the standard resolution rule, starting from the leaves of the rooted trees,
while for \textsc{Max-SAT} parameterized by \sdh, this is done by enumerating
all assignments to each (bounded-size) component and replacing it with new
constraints to its (bounded-size) neighborhood.

\item Finally, for our applications to \textsc{Independent Set} on graph
classes, we need to rely on several unrelated ideas. In particular, we use
standard but non-obvious facts from circuit complexity, such as the fact that
$n$ $n$-bit integers can be summed by a bounded fan-in circuit of depth $O(\log
n)$; we use the fact that the circuits of the \aldseth\ can be considered
acyclic, which in turn allows us to encode them using the tree structure hidden
in cographs and block graphs; we use the fact that the natural DP algorithm for
\textsc{Weighted Independent Set} on interval graphs can be translated to a
reachability question to obtain equivalence with the \logpwmseth\ in this case.

\end{enumerate}

\subsection{Previous Work}

As mentioned, \seth-based lower bounds are an area of very active research in
parameterized complexity. Beyond the results following \cite{LokshtanovMS18},
which establish running time lower bounds of the form $(c-\eps)^kn^{O(1)}$, for
some parameter $k$ and fixed constant $c$, the \seth\ has also been used to
obtain lower bounds for non-FPT problems. The most famous example is
perhaps the result of Patrascu and Williams that $k$-\textsc{Dominating
Set} cannot be solved in time $n^{k-\eps}$, under the \seth\
\cite{PatrascuW10}. To some extent the questions we ask here were already
anticipated in \cite{PatrascuW10}, who note that obtaining an $n^{k-\eps}$
algorithm for $k$-\textsc{Dominating Set} is at least as hard as disproving
the \seth, but at most as hard as disproving the \hornseth. In other words,
they place $k$-\textsc{Dominating Set} somewhere between the top and bottom
boxes of \cref{fig:results}. Although in this paper we have not been able
to identify the precise equivalence class of this problem, our results
improve the upper bound of Patrascu and Williams, by showing that
$k$-\textsc{Dominating Set} is at most in the third equivalence class (as
W$[2]\subseteq$W$[$SAT$]$).


The question of whether the \seth\ is the ``right'' assumption has appeared
before in the literature. In general fine-grained complexity this manifests
itself via works which attempt to replace the \seth\ with more plausible
assumptions, such as the ones mentioned above, or which attempt to prove the
same lower bound based on several hypotheses which have no known relation
\cite{AbboudBHS22,AbboudWY18}. Furthermore, the question of which problems are
exactly equivalent to the \seth\ has already been posed, though few such
problems are known \cite{CyganDLMNOPSW16}. We also note that the connection
between the \ncseth\ and Barrington's theorem was already used in
\cite{ChenGLRR19}.

In the area of parameterized complexity, investigations on whether the \seth\
is the right complexity assumption have mostly focused on whether, assuming the
\seth\ we can obtain the same lower bounds but for more restricted cases than
for treewidth (\cite{EsmerFMR24,JaffkeJ23}). However, a strongly related recent
work \cite{Lampis25} investigated the so-called \ppseth, that is, the
hypothesis that \textsc{SAT} cannot be solved in time
$(2-\eps)^{\pw}|\phi|^{O(1)}$. Even though this is closely related to our
motivation, the \ppseth\ turns out to cover a completely different algorithmic
paradigm, as it essentially speaks about the optimality of dynamic programming
algorithms, while the hypotheses we consider here refer to the optimality of
branching algorithms under different regimes. As a result, most of the
hypotheses we consider here imply the \ppseth, with the exception of \circseth,
for which this remains an interesting open problem.

\subsection{Discussion and Directions for Further Work}

We believe that the results of this paper, together with recent related works
(\cite{EsmerFMR24,Lampis25}) point the way to a new and interesting
research direction investigating the interplay between fine-grained complexity
theory and parameterized input structure. The envisioned goal here is a theory
where lower bound statements form a comprehensible hierarchy where we can
eventually pinpoint the reasons that a lower bound is believed to hold; for
example we have seen that the \aldseth\ is essentially equivalent to improving
upon a generic ``guess-and-check'' algorithm with an NC$^1$ verifier.  For
which other algorithmic paradigms can we identify such natural complete
problems?  Equally importantly, how do different paradigms relate to each
other? In particular, the \ppseth\ informally captures the difficulty of
improving upon a generic DP algorithm over a linear structure, and the
\circseth\ the difficulty of improving upon a generic ``guess-and-check''
algorithm with a polynomial-time verifier. Discovering a connection between
these two hypothesis would allow us to better understand which paradigm is more
powerful; conversely the two may be incomparable, and it would be interesting
to see if this can be expressed via the corresponding hypotheses. 

More broadly, an interesting research direction would of course be to further
populate the classes introduced in this paper. Specific questions include:

\begin{enumerate}

\item Characterize the complexity of \textsc{Independent Set} parameterized by
a modulator to an unweighted interval graph. In the weighted case, the problem
is equivalent to the \logpwmseth; could it be easier (for instance, equivalent
to the \aldseth) for unweighted graphs?

\item As mentioned, Jaffke and Jansen \cite{JaffkeJ23} showed that brute force
is unavoidable for $q$-\textsc{Coloring} parameterized by a modulator to a
path-forest. Our result improves upon this in one sense (because we show
equivalence to the \aldseth, rather than \seth-hardness), but is weaker in
another (because we parameterize by a modulator to a graph of constant
pathwidth, rather than a path). Can this gap be closed? We conjecture that
$q$-\textsc{Coloring} parameterized by feedback vertex set or modulator size to
a path-forest is in fact easier than the \aldseth. Evidence for such a claim
could be established, for instance, by coming up with a reduction from this
problem to \textsc{CNF-SAT}.

\item In a similar vein, a natural question to ask is the precise complexity of
\textsc{Independent Set} parameterized by feedback vertex set. Again, our
results can probably be used to show hardness (indeed \aldseth-equivalence)
when we have a modulator to a graph of constant treewidth, but not for a graph
of treewidth $1$. What is the smallest treewidth $c$ for which parameterizing
\textsc{Independent Set} by vertex-deletion distance to treewidth $c$ is
\aldseth-equivalent?

\end{enumerate}


\subsection{Paper Organization} 

We first remark that we have tried to use a self-explanatory naming scheme,
where \textsc{X}-\seth\ means the hypothesis that it is impossible to improve
upon brute-force search for \textsc{SAT} parameterized by a parameter
\textsc{X}, or when \textsc{X} is a complexity class, for the complete problem
of \textsc{X}.  Nevertheless, the definitions of all hypotheses used in this
paper (and shown in \cref{fig:results}) and the most basic relations between
them are given in \cref{sec:hypo}. Even though \cref{sec:hypo} serves as a
useful reference, we have made an effort to give precise statements in all
theorems, making sure that potentially non-trivial details in the formulations
of various hypotheses, such as the order of quantifications of some constants,
are clearly spelled out.  For the rest of the paper, after some preliminaries,
we present the basic connections for the two bottom classes of
\cref{fig:results} in \cref{sec:maxsat}, and then present one class in each of
\cref{sec:ald}, \cref{sec:logpw}, and \cref{sec:horn}. In
\cref{sec:applications} we present one further characteristic example for each
class. Finally, in \cref{sec:indset} we present our results on
\textsc{Independent Set} summarized in \cref{fig:indset}.

\section{Preliminaries}

We assume the reader is familiar with the basics of parameterized complexity as
given for example in \cite{CyganFKLMPPS15}. We refer the reader there for the
definitions of treewidth and pathwidth. A graph $G$ has tree-depth $k$ if there
is a bijective mapping of the vertices of $G$ to the vertices of a rooted tree
of depth $k$ so that for every edge $uv$, the images of $u,v$ have an
ancestor-descendant relation. We sometimes call such a mapping a tree-depth
decomposition. Following \cite{EsmerFMR24}, for integers $\sigma,\delta$, a
\sdh\ of a graph $G$ is a set of vertices $M$ such that each component of $G-M$
has size at most $\sigma$ and at most $\delta$ neighbors in $M$.

A modulator of a graph is a set of vertices such that removing it leaves a
graph with a desired property. When $G$ is a graph and $M$ a modulator we will
often write $G-M$ to denote the graph obtained after removing the modulator.

We deal with several variations of \textsc{SAT}, the problem of deciding if a
given formula has a satisfying assignment. We write $k$\textsc{-SAT} for the
restriction to clauses of arity at most $k$ and \textsc{Max-SAT} (or
\textsc{Max-}$k$\textsc{-SAT}) for the more general problem of determining the
assignment that satisfies the maximum number of clauses. For these problems we
allow repeated clauses in the input, or equivalently we allow the input to
assign (unary) weights to the clauses. An assignment of weight $k$ is an
assignment that sets exactly $k$ variables to True.

For a CNF formula we define its primal graph as the graph that has a vertex for
each variable and edges between variables that appear in the same clause. When
$\phi$ is a CNF formula and $M$ is a set of variables, we will write $\phi-M$
to denote the formula obtained by deleting every clause that contains a
variable of $M$ and, slightly abusing notation, we will also use $\phi-M$ to
refer to the resulting primal graph.

We will also consider satisfiability problems for general formulas and
circuits. A \emph{boolean} circuit is a DAG where vertices of positive
in-degree are labeled as disjunctions ($\lor$), conjunctions ($\land$), or
negations ($\neg$) and there is a unique sink, called the output gate. Vertices
of zero in-degree are called input gates and the satisfiability problem asks if
there is a truth assignment to the inputs that makes the output gate evaluate
to True. A formula is a circuit where the maximum out-degree of any non-input
gate is $1$. Unless otherwise stated we will assume that the fan-in (in-degree)
of all circuits is bounded, but in most cases unbounded fan-in gates can be
replaced by trees of bounded fan-in gates. The depth of a circuit is the
largest distance from an input gate to the output. The size of a circuit is the
number of arcs of the corresponding DAG. We will also consider
\emph{arithmetic} circuits, where the value of each gate is an integer rather
than a boolean and will mainly be concerned with circuits where the gates
implement the addition ($+$) and maximum ($\max$) operations, that is, they
output the sum or maximum of their inputs respectively. Boolean circuits can be
interpreted as arithmetic circuits via the natural correspondence that False is
$0$ and True is $1$. To make the converse transition, we will make use of the
following known facts (see e.g. \cite{Cook85}).

\begin{theorem}\label{thm:cook} Given an integer $n$ it is possible to
construct in time polynomial in $n$ bounded fan-in boolean circuits with $n^2$
inputs, $O(n)$ outputs, size $n^{O(1)}$ and depth $O(\log n)$ solving the
following problems: (i) computing the maximum of $n$ integers of $n$ bits each
(ii) computing the sum of $n$ integers of $n$ bits each.  Furthermore, there
exists a bounded fan-in boolean circuit with $2n$ inputs, encoding two $n$-bit
binary numbers, that has depth $O(\log n)$ and decides if the first number is
larger than the second.  \end{theorem}

Using \cref{thm:cook} we will be able to convert an arithmetic circuit which
only uses $\max$ and $+$ gates into a boolean circuit, without increasing the
depth by much, even if the $\max$ and $+$ gates do not have bounded fan-in
(because \cref{thm:cook} talks about computing the sum or maximum of $n$
inputs).

Given a CNF formula $\phi$, a \emph{strong backdoor} set is a set of variables
such that for each assignment to these variables we obtain a formula that
belongs to a base class. We will only deal with strong backdoors, so we will
simply refer to them as backdoors. Furthermore, we will deal with
\textsc{2-SAT} and Horn backdoors, where a formula is Horn if each clause
contains at most one positive literal.  Recall that \textsc{2-SAT} is
NL-complete and \textsc{Horn-SAT} is P-complete, therefore both problems are
solvable in polynomial time. This means that for a given backdoor the
satisfiability of $\phi$ can be decided in time $2^b|\phi|^{O(1)}$ by trying
all assignments to the backdoor. The term backdoor was introduced in
\cite{WilliamsGS03}. Note that backdoors and modulators are closely related
concepts and indeed the notion of modulator to constant treewidth which we use
in this paper has also been called a backdoor to constant treewidth
\cite{GaspersS13}.  For more information on backdoors see \cite{GaspersS12}.

In all cases where it is relevant, we will assume that the modulator or
backdoor related to the problem at hand is given in the input. This is because
we want to concentrate on the complexity of solving \textsc{SAT} given the
modulator, rather than the complexity of finding the modulator. This is
standard in parameterized lower bounds, see e.g. \cite{LokshtanovMS18}.

We give a detailed list of the complexity hypotheses that appear in this paper
as well as simple observations about their relations in \cref{sec:hypo}.

\section{Problems Equivalent to (Max-)CNF-SAT}\label{sec:maxsat}

In this section we deal with two hypotheses: the standard SETH and its natural
weakening for \textsc{Max-SAT}, which we refer to as the \maxsatseth. One of
our main motivations is to take a more careful look at the results of
\cite{EsmerFMR24}. Recall that in this recent work it was pointed out that the
SETH is ``too strong'', if one aims to obtain lower bound results regarding
standard width parameters, such as treewidth. For this reason,
\cite{EsmerFMR24} focuses on a parameter which they call \sdh, which is a
severe restriction of treewidth, pathwidth, and tree-depth modulators. Their
main result is that many of the standard lower bounds go through (assuming the
SETH) even for this very restricted parameter.

In this paper we offer a more nuanced view that clarifies a bit more the
picture presented in \cite{EsmerFMR24}. In particular, we show that for some of
their lower bound results, not only is the SETH not too strong, but actually
the lower bounds they obtain are \emph{equivalent} to the SETH. We achieve this
mainly via \cref{thm:seth} below, which shows that the SETH is equivalent to
the \tdmseth, that is the hypothesis that it is impossible to avoid a
brute-force enumeration for a given modulator to a graph of bounded tree-depth.
Since \sdh\ is a parameter that is a restriction of modulators to bounded
tree-depth, we obtain that avoiding brute force enumeration for this parameter
is also equivalent to the SETH (\cref{cor:sdh}). In other words,
\cref{thm:seth} collapses three seemingly distinct parameters (the number of
variables, the size of a \sdh, and the size of a constant tree-depth modulator)
into a single hypothesis.  As we show in \cref{sec:applications}, these results
establish that the lower bound of \cite{EsmerFMR24} for \textsc{Coloring} is
actually equivalent to the SETH, and this remains true even if we parameterize
by tree-depth modulator rather than \sdh.

Despite the above, our results do not indicate that the SETH is the ``correct''
assumption for all the lower bounds of \cite{EsmerFMR24}. In the second part of
this section we deal with the \maxsatseth, the hypothesis stating the same
bound as the SETH but for the problem of maximizing the number of satisfied
clauses of a CNF formula. Since this is clearly a problem at least as hard as
deciding if a CNF formula can be satisfied, this assumption is at least as
plausible as the SETH. Indeed, this assumption is also implied by the so-called
\textsc{Max-3Sat-Hypothesis}, which states the same time bound but for CNFs of
maximum arity $3$.  As mentioned in \cite{EsmerFMR24}, the relation between the
SETH and the \textsc{Max-3Sat-Hypothesis} is currently unknown, which can
perhaps be seen as a reason to believe the \maxsatseth\ as more plausible than
either of the other two hypothesis. Our second result in this section is that
the \maxsatseth\ remains equivalent if we parameterize not by the number of
variables but by \sdh\ size (\cref{thm:maxsat}). This allows us to establish in
\cref{sec:applications} that lower bounds concerning optimization problems,
such as the one on \textsc{Max Cut} parameterized by \sdh\ in
\cite{EsmerFMR24}, are more likely to be equivalent to \maxsatseth, hence
perhaps more believable, than lower bounds equivalent only to the SETH.

Before we go on, let us mention that the results of this section rely on a
precise formulation of the SETH (and related hypotheses) which assumes hardness
for $k$-CNF formulas, for sufficiently large $k$. We note that this is
important in this section, because we are dealing with extremely restrictive
graph parameters. Once we move on to \cref{sec:ald}, it will immediately become
possible to break down large clauses into smaller ones by introducing new
variables, so the arity of the considered CNFs will no longer be of material
importance (see \cref{obs:arity}).

\subsection{Equivalent Formulations of the SETH}

\begin{theorem}\label{thm:seth} The following statements are equivalent:

\begin{enumerate}

\item (SETH is false) There is an $\epsilon>0$ such that for all $k$ there is
an algorithm that solves $k$-\textsc{SAT} in time $(2-\eps)^n|\phi|^{O(1)}$.

\item (\tdmseth\ is false) There is an $\epsilon>0$ such that for all $c,k$
there is an algorithm that takes as input a $k$-CNF formula $\phi$ and a
modulator $M$ of $\phi$ of size $d$, such that $\phi-M$ has tree-depth at most
$c$, and decides if $\phi$ is satisfiable in time $(2-\eps)^d|\phi|^{O(1)}$.

\end{enumerate}

\end{theorem}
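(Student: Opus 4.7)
The direction $(2) \Rightarrow (1)$ is immediate: given a $k$-\textsc{SAT} instance on $n$ variables, take $M$ to be the full variable set, so that $\phi - M$ is empty (tree-depth $0$) and $d = n$, and apply the \tdmseth-refuting algorithm.

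For $(1) \Rightarrow (2)$, the plan is to preprocess $\phi$ by bottom-up resolution so as to eliminate every variable outside $M$, arriving at an equisatisfiable $k'$-CNF $\phi'$ on just the $d$ modulator variables for some constant $k' = k'(k,c)$, and then run the hypothesized \seth-refuting algorithm on $\phi'$. Crucially, \seth\ supplies a single $\eps > 0$ that works uniformly for every arity, so the final call costs $(2-\eps)^d|\phi|^{O(1)}$ as required. Fix a tree-depth-$c$ forest $F$ of the primal graph of $\phi - M$, interpreted so that any two non-$M$ variables co-occurring in some clause of $\phi$ are adjacent; then the non-$M$ variables of every clause of $\phi$ form a clique in this graph and hence lie on a single root-to-leaf path of $F$. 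Process the non-$M$ variables in order of decreasing depth: for each such $v$, resolve all pairs of current clauses $(C_+, C_-)$ with $v \in C_+$, $\neg v \in C_-$, discard tautologies, and then delete every clause mentioning $v$ or $\neg v$. A straightforward induction shows that the non-$M$ part of every intermediate clause continues to lie on a single root-to-leaf path of $F$, because by the time $v$ is processed all strictly deeper non-$M$ variables have already been eliminated; hence the non-$M$-arity is always at most $c$.

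The main obstacle is bounding the $M$-arity of intermediate clauses. Let $m_i$ denote the maximum number of $M$-literals in any clause after all variables of depth strictly greater than $i$ have been eliminated. Originally $m_c \le k-1$, and each resolution step merges the $M$-parts of the two participating clauses, so $m_i \le 2\, m_{i+1}$; unrolling this recurrence yields $m_0 = O(2^c k)$. Combined with the $c$-bound on the non-$M$ part, every intermediate clause has total arity at most $k' = O(2^c k)$, a constant depending only on $k$ and $c$. Since all intermediate clauses live over at most $|\phi|$ variables and have arity at most $k'$, the number of distinct non-tautological clauses is at most $(2|\phi|)^{k'} = |\phi|^{O(1)}$, so the resolution preprocessing terminates in polynomial time (with deduplication), yielding the required $k'$-CNF $\phi'$ on the $d$ modulator variables. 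The reliance on $c$ being constant is essential: the arity bound doubles per level of $F$, which is precisely why tree-depth (rather than a larger parameter) is the natural setting for this reduction.
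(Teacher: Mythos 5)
Your proposal is correct and follows essentially the same route as the paper: eliminate the non-modulator variables by exhaustive (Davis--Putnam) resolution from the deepest tree-depth layer upward, observe that the arity blows up by at most a factor of $2$ per layer (hence stays a constant $k'=O(2^ck)$), keep the formula polynomial in $|\phi|$, and then invoke the assumed $k'$-\textsc{SAT} algorithm, using that the \seth's $\eps$ is uniform over arities. The only differences are cosmetic bookkeeping — the paper bounds the clause count by squaring per layer ($m^{2^c}$) whereas you count distinct clauses of bounded arity, and your root-to-leaf-path invariant for the non-$M$ part is a slightly sharper version of the paper's induction — so no substantive gap.
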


\begin{proof}

The second statement clearly implies the first, as the set of all variables is
a modulator to tree-depth $0$. The interesting direction is to show that if we
have a $k$-\textsc{SAT} algorithm running in time $(2-\eps)^n|\phi|^{O(1)}$,
then this implies a faster algorithm parameterized by constant tree-depth
modulator.

We are given a $k$-CNF formula $\phi$ and a modulator $M$ such that removing
$M$ from the primal graph produces a graph of tree-depth $c$, for some fixed
$c$. Assume that for the resulting graph after we remove $M$ we have a forest
of rooted trees, where each tree has height at most $c$, and a bijection of the
variables of $\phi$ (outside $M$) to the vertices of this forest so that two
variables appear in the same clause if and only if one is an ancestor of the
other.

Our strategy will be to use the standard resolution rule to eliminate from
$\phi$ all the variables except those of the modulator. We will then apply the
assumed algorithm for $k'$-\textsc{SAT}, for some $k'$ depending on $c$ and
$k$.

Recall that resolution is the following rule: when a CNF formula contains two
clauses $C,C'$ such that for some variable $x$, $C$ contains $x$ and $C'$
contains $\neg x$, then we can add to the formula the clause that contains all
literals of $C\cup C'$ except $x,\neg x$ without affecting satisfiability. We
will say that we exhaustively resolve over variable $x$ if we (i) apply this
rule to each pair of clauses $C,C'$ containing $x,\neg x$ (ii) after doing so
remove from $\phi$ all clauses that contain $x$ or $\neg x$.

\begin{claim} Exhaustively resolving over a variable $x$ does not affect the
satisfiability of $\phi$. \end{claim}

\begin{claimproof} It is easy to see that the clauses added to $\phi$ do not
affect satisfiability, as they can be inferred from existing clauses. To see
that deleting all clauses that contain the variable $x$ does not affect
satisfiability, we take two cases. If the new formula has a satisfying
assignment $\sigma$ such that for all clauses $C$ which contain $x$ positive in
$\phi$, a literal of $C$ other than $x$ is set to True by $\sigma$, then we can
satisfy $\phi$ by extending $\sigma$ by setting $x$ to False. Otherwise, we
claim that all clauses $C'$ that contain $\neg x$ in $\phi$ have a literal
other than $\neg x$ set to True by $\sigma$, so we can set $x$ to True and
satisfy $\phi$. Indeed, we are now in a case where some clause $C$ which
contains $x$ positive has all its literals set to False by $\sigma$, but this
clause has been paired with all $C'$ which contained $\neg x$. Since the new
formula is satisfied by $\sigma$, each $C'$ must contain a True literal.
\end{claimproof}

Our algorithm is to apply for each $i\in[c]$, starting from $i=c$ and
descending, exhaustive resolution to all the variables which correspond to
vertices of the rooted trees which are at distance $i$ from the root. After
each iteration we have eliminated a layer of leaves and decreased the
tree-depth of $\phi-M$ by one, without affecting the satisfiability of the
formula. We now show by induction that the result of this process after $c$
iterations is a formula with maximum arity $k'\le 2^ck$ whose number of clauses
is at most $m^{2^c}$, where $m$ is the number of clauses of $\phi$.

We show this claim by induction on $c$. For $c=1$, $M$ is in fact a vertex
cover of the primal graph. For each variable $x$ of the independent set we
apply exhaustive resolution. Each new clause is made up of two old clauses, so
has arity at most $2k$. Furthermore, if $x$ appears in $d(x)$ clauses, we will
construct at most $d^2(x)/2$ clauses when resolving over $x$. The total number
of new clauses is then at most $\sum_{x\not\in M}d^2(x)/2 \le m^2/2$, where we
use the fact that $m\ge \sum_{x\not\in M} d(x)$, because the sets of clauses
that contain two distinct variables not in $M$ are disjoint, and because of
standard properties of the quadratic function. Since we can assume that $m$ is
large enough, the total number of clauses we now have is at most
$m+\frac{m^2}{2}\le m^2$.

We now proceed with the inductive step. Suppose that the claim is true for
smaller values of $c$ and we have an instance of maximum arity $k$ and $m$
clauses. By the same arguments as before, applying exhaustive resolution to all
leaves produces an instance where $\phi-M$ has tree-depth $c-1$, the maximum
arity is at most $k'\le 2k$ and the number of clauses is at most $m'=m^2$. By
inductive hypothesis, completing the procedure will produce a formula with
arity at most $2^{c-1}k'\le 2^ck$ and with at most $(m')^{2^{c-1}}\le m^{2^c}$
clauses, as desired.

In the end we have a formula with $n=|M|$ variables, arity $k'\le 2^ck$ and
size $|\phi'| \le |\phi|^{2^c}$. Running the presumed satisfiability algorithm
on this formula implies the theorem.  \end{proof}

\begin{corollary}\label{cor:sdh} There is an $\epsilon>0$ such that for all
$k,\sigma,\delta>0$ there is an algorithm which takes as input a $k$-CNF
formula $\phi$ and a $(\sigma,\delta)$-hub of size $m$ and decides if $\phi$ is
satisfiable in time $(2-\eps)^m|\phi|^{O(1)}$ if and only if the SETH is false.
\end{corollary}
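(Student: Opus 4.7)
The plan is to derive this corollary directly from \cref{thm:seth}, exploiting the observation that a \sdh\ is a structurally more restrictive notion than a modulator to bounded tree-depth, so both directions should be immediate once that equivalence is in hand.

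For the forward direction (\seth\ false implies the hub algorithm exists), I would fix arbitrary $k, \sigma, \delta > 0$ and argue that if $M$ is a \sdh\ of $\phi$, then every connected component of $\phi - M$ has at most $\sigma$ vertices. Any graph on at most $\sigma$ vertices trivially has tree-depth at most $\sigma$ (list its vertices as a path in the elimination forest). Hence $M$ is also a modulator of $\phi$ to a graph of tree-depth at most $\sigma$, and applying \cref{thm:seth} with $c = \sigma$ yields an algorithm of the desired complexity $(2-\eps)^m|\phi|^{O(1)}$. Notice that the second hub parameter $\delta$ is not used at all here; the component-size bound alone suffices to control tree-depth.

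For the reverse direction, given any $k$ and any $k$-CNF formula $\phi$ on $n$ variables, I would take $\sigma = \delta = 1$ and let $M$ consist of all $n$ variables of $\phi$. Then $\phi - M$ is the empty graph, which is vacuously a $(1,1)$-hub of size $n$. Running the assumed hub algorithm on this trivial instance gives a $k$-\textsc{SAT} algorithm of complexity $(2-\eps)^n|\phi|^{O(1)}$ for every $k$, which is precisely the negation of the \seth. There is no meaningful technical obstacle in either direction; all of the substantive work --- the exhaustive resolution procedure that eliminates non-modulator variables layer by layer up the elimination forest, keeping the arity and clause count under control --- has already been carried out in proving \cref{thm:seth}.
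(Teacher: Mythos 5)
Your proof is correct and follows essentially the same route as the paper, which likewise observes that a \sdh\ is a modulator to tree-depth at most $\sigma$ and invokes \cref{thm:seth}; your reverse direction (taking the full variable set as a trivial hub) is the same easy observation the paper makes implicitly. No issues.
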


\begin{proof} The proof consists of observing that a $(\sigma,\delta)$-hub is a
modulator to tree-depth at most $\sigma$ and invoking
\cref{thm:seth}.\end{proof}

\subsection{Equivalent Formulation of the SETH for Max-SAT}

\begin{theorem}\label{thm:maxsat}

The following are equivalent:

\begin{enumerate}

\item (\maxsatseth\ is false) There exists $\eps>0$ such that for all $k$ there
is an algorithm that takes as input a $k$-CNF $\phi$ on $n$ variables and
computes an assignment satisfying the maximum number of clauses in time
$(2-\eps)^n|\phi|^{O(1)}$.

\item (\hubmaxsatseth\ is false) There exists $\eps>0$ such that for all
$k,\sigma,\delta$ there is an algorithm that takes as input a $k$-CNF $\phi$
and a \sdh\ $M$ of size $m$ and computes an assignment satisfying the maximum
number of clauses in time $(2-\eps)^m|\phi|^{O(1)}$.

\end{enumerate}

\end{theorem}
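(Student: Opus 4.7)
The direction from \hubmaxsatseth\ being false to \maxsatseth\ being false is immediate: taking $M$ to be the set of all variables of $\phi$ gives a $(\sigma,\delta)$-hub with $\sigma=\delta=0$, so any algorithm for the hub version specializes to one for the standard version. The substantive direction is to show that a $(2-\eps)^n|\phi|^{O(1)}$ algorithm for Max-$k'$-SAT (for all $k'$, with the same $\eps$) yields a $(2-\eps)^m|\phi|^{O(1)}$ algorithm for Max-$k$-SAT given a $(\sigma,\delta)$-hub $M$ of size $m$. The plan is a variable-elimination reduction in the spirit of the sketch in the overview: enumerate all assignments on each (constant-size) component of $\phi-M$ and encode the result as local constraints on its (constant-size) neighborhood in $M$.

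The first step is a structural observation about the primal graph: every clause $C$ of $\phi$ contains variables that all lie in $M\cup K$ for a single connected component $K$ of the primal graph of $\phi-M$. This is because any two non-modulator variables appearing in the same clause are adjacent in $\phi-M$ and hence in the same component. So clauses partition naturally into those entirely inside $M$ and, for each component $K$, a local sub-formula $\phi_K$ whose variables lie in $K\cup N(K)$ with $|K|\le \sigma$ and $|N(K)|\le \delta$. For each component $K$ and each assignment $\tau:N(K)\to\{0,1\}$ (there are at most $2^\delta$ such $\tau$), I would compute in time $2^{\sigma+\delta}|\phi|^{O(1)}$ the value $f_K(\tau)$ equal to the maximum number of clauses of $\phi_K$ satisfiable by any extension of $\tau$ to $K$. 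Let $F_K:=\max_\tau f_K(\tau)\le|\phi|$.

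Next, I would build a new formula $\phi'$ on the $m$ variables of $M$ as follows. Keep every clause of $\phi$ whose variables lie entirely in $M$. For each component $K$ and each $\tau:N(K)\to\{0,1\}$, add $F_K-f_K(\tau)$ copies of the unique clause on the variables $N(K)$ that is falsified precisely by $\tau$ (that is, $\bigvee_{x\in N(K):\tau(x)=1}\neg x\ \lor\ \bigvee_{x\in N(K):\tau(x)=0}x$). A direct count shows that for any assignment $\sigma$ to $M$ with $\sigma|_{N(K)}=\tau$, the number of gadget-clauses for $K$ satisfied by $\sigma$ is $\sum_{\tau'}(F_K-f_K(\tau')) - (F_K-f_K(\tau))$, which equals a constant depending only on $K$ plus $f_K(\tau)$. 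Summing over components, the Max-SAT value of $\phi'$ equals the Max-SAT value of $\phi$ plus a fixed, computable constant, so recovering one yields the other.

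Finally, $\phi'$ has $m$ variables and maximum arity $k':=\max(k,\delta)$, and its size is polynomial in $|\phi|$, since each component contributes at most $2^\delta$ distinct clauses of weight at most $|\phi|$ (in unary), and there are at most $|\phi|$ components. Running the assumed Max-$k'$-SAT algorithm on $\phi'$ yields total time $(2-\eps)^m|\phi|^{O(1)}$, which falsifies \hubmaxsatseth\ with the same $\eps$. The potentially delicate point is the encoding: I need the gadget for $K$ to be a bona fide CNF on $N(K)$ whose contribution to the satisfied-clause count is an affine function of $f_K(\sigma|_{N(K)})$ with constant slope $1$ and the same additive constant for every $\sigma$. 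The ``negate-the-bad-assignment'' clauses achieve exactly this, and the fact that clauses may be repeated (equivalently, have unary weights) is what makes the weights $F_K-f_K(\tau)$ representable without inflating arity; this is the only nontrivial detail, and it is precisely what forces us to allow weighted/repeated clauses in the formulation of the hypotheses.
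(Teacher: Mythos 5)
Your proposal is correct and follows essentially the same route as the paper: eliminate each constant-size component by precomputing, for every assignment to its neighborhood in the hub, the optimal local contribution, and then encode these values via weighted (repeated) clauses on the neighborhood, each falsified by exactly one neighborhood assignment, so that the satisfied weight is an affine function with slope $1$ of the local optimum. The only difference is the cosmetic choice of weights ($F_K - f_K(\tau)$ versus the paper's $\sum_{s'\neq s} w_{s'}$), which yields the same affine relationship up to the additive constant one adjusts the target by.
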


\begin{proof}

It is easy to see that the second statement implies the first, as the set of
all variables is a \sdh. Recall that we allow clauses to be repeated in the
input, so the running time of statement 1 cannot be written as $O((2-\eps)^n)$,
because even for $k$-CNF, for $k$ fixed, the size of the formula cannot be
bounded by a function of $n$ alone.

For the interesting direction, we consider one by one the components of
$\phi-M$, each of which has size at most $\sigma$ and at most $\delta$
neighbors in $M$. Without loss of generality we assume that each component has
size exactly $\sigma$ and exactly $\delta$ neighbors, by adding dummy variables
or neighbors as needed.

Take such a component $C$ and for each assignment $s$ to its $\delta$ neighbor
variables in $M$, calculate the assignment to the variables of $C$ that
maximizes the number of clauses involving variables of $C$ which are satisfied
(this can be done by going through the $2^\sigma$ possible assignments to the
variables of the component).  Let $w_s$ be the number of such clauses. Our goal
is to remove the variables of $C$ from $\phi$ and add some clauses involving
the variables of $N(C)$ such that for each assignment $s$ the total weight of
satisfied new clauses is proportional to $w_s$. More precisely, we do the
following: for each assignment $s$ to the variables of $N(C)$ we construct a
clause \emph{falsified} by this assignment (and only this assignment) to
$N(C)$, and give it weight $\sum_{s'\neq s} w_{s'}$, that is give it weight
that is equal to the sum of the weights corresponding to all other assignments
to $N(C)$.  We now observe that if we use the assignment $s$ for $N(C)$, the
total weight we obtain from the new clauses is $w_s+(2^\delta-2)\sum_{s'}
w_{s'}$, that is, the contribution of each assignment (including $s$) is
counted $2^\delta-2$ times, but $w_s$ is counted once more, as a $w_s$ term
appears in the weight of all satisfied clauses.  We increase the target value
of the \textsc{Max-SAT} instance by $(2^\delta-2)\sum_{s'} w_{s'}$ and
implement the weights by taking copies of the constructed clauses. After
performing this process exhaustively we are left with a \textsc{Max-SAT}
instance with maximum arity $\max\{k,\delta\}$ and size at most
$O(2^{\sigma+\delta}|\phi|)$, so executing the supposed algorithm gives the
lemma.  \end{proof}

\section{Modulators to Constant Width and Linear Depth Circuits}\label{sec:ald}

In this section we present a sequence of reductions which show that several
seemingly distinct fine-grained complexity hypotheses are actually the same.
Our results are summarized in \cref{thm:ald} below, where we initially focus on
the decision version of \textsc{SAT} (we discuss \textsc{MaxSAT} further
below).

\begin{theorem}\label{thm:ald} The following statements are equivalent:

\begin{enumerate}

\item (\aldseth\ is false) There is an $\eps>0$ and an algorithm which takes as
input a (bounded fan-in) boolean circuit of size $s$ on $n$ inputs with depth
at most $\eps n$ and decides if the circuit is satisfiable in time
$(2-\eps)^ns^{O(1)}$.

\item (\pwmseth\ is false) There is an $\eps>0$ such that for all fixed $c>0$
we have an algorithm which takes as input a CNF formula $\phi$ and a size $m$
modulator of $\phi$ to pathwidth at most $c$ and decides if $\phi$ is
satisfiable in time $(2-\eps)^m|\phi|^{O(1)}$.

\item (\twmseth\ is false) The same as the previous statement but for a
modulator to treewidth at most $c$.

\item  (\logtdmseth\ is false) The same as the previous statement, but for a
modulator to tree-depth at most $c\log |\phi|$.

\item (\wsatseth\ is false) There is an $\eps>0$ such that there is an
algorithm which takes as input a general Boolean formula $\phi$ on $n$
variables and an integer $k$ and decides if $\phi$ has a satisfying assignment
that sets exactly $k$ variables to True in time $n^{(1-\eps)k}|\phi|^{O(1)}$.

\end{enumerate}

\end{theorem}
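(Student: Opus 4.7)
The plan is to establish the cycle $(2) \Rightarrow (1) \Rightarrow (4) \Rightarrow (3) \Rightarrow (2)$ for the first four statements, and then handle $(1) \Leftrightarrow (5)$ as a separate pair. Throughout, $X \Rightarrow Y$ will mean that an algorithm witnessing the failure of $X$ yields one witnessing the failure of $Y$. The two easy arrows $(4) \Rightarrow (3)$ and $(3) \Rightarrow (2)$ follow from parameter inclusions: a modulator to pathwidth $c$ is also a modulator to treewidth $c$, and a modulator to treewidth $c$ gives $\phi - M$ of tree-depth $O(c \log |\phi|)$, so an algorithm for the more permissive parameter solves the more restrictive one.

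For $(2) \Rightarrow (1)$, the tool is Barrington's theorem. Given a circuit $C$ of depth $\eps n$ on $n$ inputs, I would convert $C$ to a width-$5$ permutation branching program of length $L = 4^{\eps n}$ and then encode the branching program as a CNF $\phi$ whose variables are (i) the $n$ original inputs (serving as the modulator $M$) and (ii) $O(L)$ constantly-many-bit state variables, linked by transition clauses that involve two consecutive states and the queried input. After deleting $M$, the remaining formula is a chain of bounded-size bags, hence has constant pathwidth. Invoking the hypothetical $(2)$-algorithm with $m = n$ gives time $(2-\eps_0)^n \cdot 2^{O(\eps n)} \cdot s^{O(1)}$; choosing $\eps$ sufficiently small relative to $\eps_0$ makes this $(2-\eps_1)^n s^{O(1)}$ for some $\eps_1 > 0$.

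For $(1) \Rightarrow (4)$, given a CNF $\phi$ with modulator $M$ of size $m$ such that $\phi - M$ has tree-depth $d \le c \log |\phi|$, I would construct a circuit $C^\star$ on the $m$ modulator inputs by dynamic programming over the td forest. For each node $v$ and each assignment $\alpha$ to the non-modulator variables on the root-to-$v$ path, define a sub-circuit $C_{v,\alpha}$ which, given an assignment to $M$, returns true iff the subtree at $v$ admits an extension consistent with $\alpha$ satisfying all relevant clauses. The recursion is $C_{v,\alpha} = P_{v,\alpha} \wedge \bigwedge_u \bigvee_b C_{u, \alpha \cup \{x_u = b\}}$, where $P_{v,\alpha}$ is a small constant-depth circuit checking the clauses whose deepest non-modulator variable is $v$. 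Each td level adds $O(1)$ to the depth, so $C^\star$ has depth $O(d) = O(\log |\phi|)$, and the total number of pairs $(v, \alpha)$ is at most $|\phi| \cdot 2^d = \mathrm{poly}(|\phi|)$, giving polynomial size. Then split on $m$: if $m \ge C \log |\phi| / \eps_0$, the depth is $\le \eps_0 m$ and the $(1)$-algorithm runs in $(2-\eps_0)^m |\phi|^{O(1)}$; otherwise $2^m$ is polynomial and brute force works.

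For $(1) \Leftrightarrow (5)$, in one direction I would encode the weight-$k$ question on a formula with $n$ variables by partitioning into $k$ blocks of size $n/k$, representing each block by $\log(n/k)$ binary selector bits, converting the (polynomial-size) formula to a log-depth circuit, and applying the $(1)$-algorithm (with brute force for constant $k$). In the other direction I would unroll a depth-$\eps n$ circuit into a formula of size $2^{\eps n} s$, partition its $n$ inputs into $k \approx \log n$ blocks of size $n/k$, introduce $2^{n/k}$ selector variables per block with pairwise ``at most one'' clauses, and ask for weight $k$; the $(5)$-algorithm then runs in $(k \cdot 2^{n/k})^{(1-\eps_0)k} \cdot \mathrm{poly}(|\phi|)$, which reduces to $(2-\delta)^n s^{O(1)}$ for some $\delta > 0$ after tuning $\eps$ and $k$. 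I expect the main obstacle to be the DP in $(1) \Rightarrow (4)$: one must simultaneously keep the circuit's depth logarithmic in $|\phi|$ (to fit inside $\eps_0 m$) and its size polynomial (to be absorbed by $|\phi|^{O(1)}$), and both requirements rest on the log-bound on tree-depth making $2^d$ polynomial in $|\phi|$.
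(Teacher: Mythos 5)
Your overall architecture is sound and diverges from the paper's in an interesting way. The paper closes the loop as $1\Rightarrow 5\Rightarrow 4\Rightarrow 3\Rightarrow 2\Rightarrow 1$, deliberately routing statement 5 \emph{into} the cycle; you instead prove $1\Rightarrow 4$ directly (which is exactly the construction the paper performs inside its proof of $5\Rightarrow 4$, namely the bottom-up elimination of the tree-depth forest, except you target a circuit rather than a formula) and then attach statement 5 via a separate biconditional. Your $2\Rightarrow 1$ via Barrington and the two width-inequality arrows coincide with the paper's \cref{lem:barrington} and its $4\Rightarrow 3\Rightarrow 2$. Your $5\Rightarrow 1$ direction does not appear in the paper at all (it is subsumed there by $5\Rightarrow 4\Rightarrow\cdots\Rightarrow 1$), but the argument you sketch for it is correct. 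However, two steps need repair.

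First, in $1\Rightarrow 4$ the claim that ``each td level adds $O(1)$ to the depth'' is false for bounded fan-in circuits (which is what statement 1 requires). A node $v$ of the decomposition forest may have many children, and $P_{v,\alpha}$ may aggregate many clauses, so each level contributes $\Theta(\log|\phi|)$ depth in the worst case, for a total of $\Theta(\log^2|\phi|)$. This breaks your case split: the ``otherwise'' branch becomes $m=O(\log^2|\phi|)$, where brute force costs $2^m=|\phi|^{O(\log|\phi|)}$, which is quasipolynomial and not absorbed by $(2-\eps)^m|\phi|^{O(1)}$. The fix is standard but must be stated: your construction is tree-structured and of polynomial size, so rebalance it (Spira/Bonet--Buss, which the paper itself invokes in \cref{lem:ald}) to an equivalent circuit of depth $O(\log|\phi|)$ before applying the case split.

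Second, your encoding in $1\Rightarrow 5$ is incorrect as written: partitioning the $n$ variables into $k$ blocks and selecting one variable per block with $\log(n/k)$ bits only captures weight-$k$ assignments having exactly one true variable in each block, whereas a general weight-$k$ assignment may concentrate its true variables in a single block. You need $k$ groups of $\lceil\log n\rceil$ selector bits, each group encoding an index into \emph{all} $n$ variables, together with $O(k^2)$ small comparator gadgets forcing the $k$ indices to be pairwise distinct (this is what the paper does in \cref{lem:ald}); the depth and input-count bounds survive this change. You should also make the threshold on $|\phi|$ explicit: the rebalanced formula has depth $O(\log|\phi|)$, which fits under $\eps\cdot k\log n$ only when $|\phi|\le n^{O(\eps k)}$, and in the complementary case the trivial $n^k|\phi|^{O(1)}$ enumeration is already $|\phi|^{O(1)}$ --- your parenthetical ``brute force for constant $k$'' does not cover this case.
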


In other words, we make connections between three types of hypotheses: one
involving the difficulty of satisfying a circuit (with a mild restriction on
its depth); three different hypotheses involving structural graph parameters;
and a hypothesis on the complexity of the standard complete problem for the
class W[SAT]. Before we proceed to the proofs, let us give some high-level
intuition regarding these connections.

Among the five statements of \cref{thm:ald} three have an obvious chain of
implications: solving \textsc{SAT} parameterized by the size of a modulator to
logarithmic tree-depth is at least as hard as solving it for a modulator to
constant treewidth, which is at least as hard as for a modulator to constant
pathwidth. These relations follow readily from the fact that $\pw\le tw$ and
$\td \le \tw \log n$ for all graphs. What is perhaps a bit surprising is that
these three parameters turn out to be of equivalent complexity.

In order to add the remaining chains of implications to complete \cref{thm:ald}
we need several steps. The one that is conceptually the least obvious is
perhaps the implication 2$\Rightarrow$1. Here we need to prove that the
satisfiability problem for any circuit (with the stated restriction on its
depth) can be reduced to \textsc{SAT} with a modulator to only constant
pathwidth. The idea now is that the variables of the modulator will correspond
to the inputs of the circuit, so we need to encode the workings of the circuit
via a formula of constant pathwidth. This sounds counter-intuitive, since
evaluating a circuit is a P-complete problem, but constant pathwidth
\textsc{SAT} instances can be solved in logarithmic space. This observation
makes it clear why it is crucial that we have placed a restriction on the depth
of the circuit.  In order to achieve our goal we rely on a classical result
from computational complexity theory known as Barrington's theorem
\cite{Barrington89} which allows us to transform any boolean circuit of depth
$d$ into a branching program of length $2^{O(d)}$ and \emph{constant} width.
The fact that the branching program has constant width means that its workings
can be encoded via a constant pathwidth formula, while the fact that the depth
of the input circuit is at most $\eps n$ means that the exponential blow-up in
the size can be tolerated by carefully dealing with the involved constants.

The other interesting implication is 5$\Rightarrow$4. To better explain this,
suppose we were instead trying to prove 1$\Rightarrow$4. In that case we would
try to construct a circuit whose inputs are the variables of the modulator
given to us in statement 4. For this, we would need to somehow eliminate all
the remaining variables, using the fact that they form a graph of logarithmic
tree-depth. For this we use an inductive construction which proceeds bottom-up,
constructing for each node of the tree a formula that eliminates all variables
in its sub-tree. To do this we try both values of the variable and construct
two copies of the formulas we have for its children, one for each value. This
construction is of course exponential in the tree-depth, but since the
tree-depth is $O(\log |\phi|)$, the result is polynomial in the input formula.
To obtain instead 5$\Rightarrow$4 we use standard tricks that encode an
assignment to $n$ variables via a weight $k$ assignment to $2^{n/k}$ variables.

Finally, for 1$\Rightarrow$5 we can construct a circuit with $k\log n$ inputs,
encoding the $k$ variables to be set to True in the sought weight-$k$
assignment. A subtle point here is that we need to ensure that the circuit we
construct does not have too large depth. However, for this we can rely on the
fact that the W[SAT]-complete problem of statement 5 deals with boolean
formulas, that is, circuits of out-degree $1$, which always admit an equivalent
reformulation with logarithmic depth. Hence, we construct an instance with
$k\log n$ inputs and depth $O(\log n)$, which for $k$ large enough satisfies the
requirements of statement 1.

\begin{proof}[Proof of \cref{thm:ald}]

The implications $4\Rightarrow 3 \Rightarrow 2$ follow immediately from the
fact that on all $n$-vertex graphs we have $\tw\le \pw$ and $\td \le \tw\log
n$. The implication $5\Rightarrow 4$ is given in \cref{lem:wsat}. The
implication $2\Rightarrow 1$ is given in \cref{lem:barrington}. The implication
$1\Rightarrow 5$ is given in \cref{lem:ald}.

\end{proof}

\subparagraph*{MaxSAT} In addition to the above we extend the results of
\cref{thm:ald} to show equivalence also with \textsc{MaxSAT} parameterized by
the size of a modulator to logarithmic tree-depth (hence also constant
pathwidth and treewidth).  We present this result in \cref{sec:maxsattd}. This
extension is non-trivial, as it requires to generalize the basic idea used in
\cref{thm:ald} (namely, eliminating variables using a tree-depth decomposition)
from the realm of boolean to the realm of arithmetic circuits, because now to
select the best assignment for a variable we have to calculate the number of
satisfied clauses (this naturally corresponds to a circuit with $\max$ and $+$
gates). The main challenge is to reduce such circuits to boolean ones without
significantly increasing their depth, so that we can still obtain equivalence
with the \aldseth, and for this we need to rely on classical facts from circuit
complexity such as the fact that computing the sum of $N$ integers on $N$ bits
can be done with boolean circuits of depth only $O(\log N)$ (mentioned
previously in \cref{thm:cook}).

\subsection{W[SAT] to Pathwidth, Treewidth, Logarithmic Tree-depth Modulators}

\begin{lemma}\label{lem:wsat} Statement 5 of \cref{thm:ald} implies  statement
4. \end{lemma}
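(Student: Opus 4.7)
The goal is to show that, given a CNF $\phi$ equipped with a modulator $M$ of size $m$ such that $\phi-M$ has tree-depth $d\le c\log|\phi|$, one can decide satisfiability in time $(2-\eps')^m|\phi|^{O(1)}$ whenever the W[SAT]-type algorithm of statement 5 is available. The plan is to split the reduction into two independent stages. Stage A converts $(\phi,M)$ into an equisatisfiable \emph{general} Boolean formula $F$ on only the $m$ variables of $M$, with $|F|=|\phi|^{O(c)}$. Stage B applies the classical block-encoding trick to transform the satisfiability of $F$ into a weight-$k$ satisfiability instance of W[SAT] (with $k$ a constant to be chosen after $\eps$), at which point we invoke the hypothesized algorithm.

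\textbf{Stage A: tree-depth elimination.} I would fix a tree-depth forest decomposition of $\phi-M$ and attach each clause of $\phi$ to the deepest $\phi-M$ variable it contains (clauses over $M$ alone are simply conjoined at the end). For each node $v$ of the decomposition, I would inductively build a formula $F_v$ whose free variables are $M\cup\mathrm{anc}(v)$, with the property that $F_v$ evaluates to True exactly when some assignment to the subtree rooted at $v$ satisfies every clause attached to that subtree. Writing $C_v$ for the conjunction of clauses attached to $v$, the recurrence is
\[
F_v \;=\; \bigvee_{b\in\{0,1\}}\Bigl( C_v[x_v{:=}b] \;\land\; \bigwedge_{c\;\text{child of }v} F_c[x_v{:=}b]\Bigr).
\]
Each child formula appears twice (once per value of $b$), so the size at most doubles at every level ascended, giving $|F_{\mathrm{root}}|\le |\phi|\cdot 2^d = |\phi|^{O(c)}$, which is polynomial thanks to the logarithmic tree-depth bound. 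Conjoining the root formulas of the forest with the clauses confined to $M$ yields $F$, a general Boolean formula on $m$ variables, equisatisfiable with $\phi$.

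\textbf{Stage B: block encoding.} Let $k$ be a constant to be chosen later. I would partition $M$ into $k$ blocks of size $\lceil m/k\rceil$ and introduce, for each block $i$ and each possible assignment $a\in\{0,1\}^{m/k}$, a fresh variable $z_{i,a}$. I would add the pairwise exclusion clauses $\neg z_{i,a}\lor \neg z_{i,a'}$ for $a\neq a'$ in the same block, which forces any weight-$k$ satisfying assignment to pick exactly one $z_{i,a}$ per block, and I would replace each occurrence of the $p$th variable of block $i$ inside $F$ by the disjunction $\bigvee_{a:\,a_p=1} z_{i,a}$. The resulting formula $F'$ has $n'=k\cdot 2^{m/k}$ variables, size $O(|F|\cdot 2^{m/k})$, and admits a weight-$k$ satisfying assignment iff $F$ (equivalently $\phi$) is satisfiable.

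\textbf{Running time and main obstacle.} Applying the hypothesized algorithm yields a running time of
\[
(n')^{(1-\eps)k}\cdot |F'|^{O(1)} \;=\; 2^{(1-\eps)m}\cdot k^{(1-\eps)k}\cdot \bigl(|\phi|\cdot 2^{m/k}\bigr)^{O(1)},
\]
which, for $k$ a large enough constant so that the $O(m/k)$ correction in the exponent is strictly smaller than $\eps m$, collapses to $(2-\eps')^m|\phi|^{O(1)}$ for some $\eps'>0$. I expect the main delicate point to be precisely this calibration of $k$ as a constant depending on $\eps$: once Stage A is shown to keep the formula polynomial in $|\phi|$ and Stage B introduces only an extra $2^{O(m/k)}$ factor into the running time, fixing $k$ appropriately closes the argument. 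The remaining details---handling disconnected components of $\phi-M$ uniformly, preserving clauses confined to $M$, and checking that the output of Stage A is genuinely a fan-out-$1$ formula and not a circuit (which holds because each child formula is materialized as two independent copies)---are routine.
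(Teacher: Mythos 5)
Your proposal follows essentially the same two-stage strategy as the paper's proof: a bottom-up formula construction over the tree-depth decomposition that eliminates the non-modulator variables (the recurrence $F_v=\bigvee_{b}\bigl(C_v[x_v{:=}b]\land\bigwedge_c F_c[x_v{:=}b]\bigr)$ and the $2^{\mathrm{depth}}\cdot|\phi|$ size bound are exactly the paper's), followed by the block encoding that converts satisfiability of the resulting $m$-variable formula into weight-$k$ satisfiability with $N=O(2^{m/k})$ variables and $k$ a constant calibrated against $\eps$. The only cosmetic difference is that you enforce one selection per block via pairwise exclusion clauses while the paper uses an at-least-one clause per block, but both correctly combine with the weight-$k$ budget to force exactly one per block.
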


\begin{proof}

We suppose there exist constants $\eps>0, d>0$ and an algorithm that takes as
input a Boolean formula $\psi$ (that is, a Boolean circuit of maximum
out-degree $1$) with $N$ inputs and an integer $k$ and decides if $\psi$ has a
satisfying assignment of weight $k$ in time $N^{(1-\eps)k}|\psi|^d$.

Fix now a constant $c>0$ and we are given a CNF formula $\phi$ and a modulator
$M$ of size $m$ such that removing all vertices of $M$ from the primal graph of
$\phi$ results in a graph of tree-depth $h$ with $h\le c\log |\phi|$.  Suppose
we are given a rooted tree of height $h$ such that the variables of $\phi$
(outside $M$) can be bijectively mapped to nodes of the tree in a way that if
two variables appear in a clause together, then one is an ancestor of the
other. We want to decide the satisfiability of $\phi$ in time $(2-\eps')^m
|\phi|^{O(1)}$.

Our high-level plan is to construct from $\phi$ a Boolean formula $\psi$ and an
integer $k$ satisfying the following:

\begin{enumerate}

\item $\phi$ is satisfiable if and only if $\psi$ has a satisfying assignment
of weight $k$.

\item $\psi$ has $N=O(2^{m/k})$ input variables and can be constructed in time
$2^{m/k}|\phi|^{O(1)}$.

\end{enumerate}

In particular, we will set $k=\lceil\frac{2d}{\eps}\rceil$. Then
$N^{(1-\eps)k}|\psi|^{d}$ can be upper bounded as follows: $N^{(1-\eps)k} =
O(2^{(1-\eps)m})$ and $|\psi|^d = 2^{dm/k}|\phi|^{O(1)}$. But $\frac{d}{k}\le
\frac{\eps}{2}$, therefore the algorithm runs in
$2^{(1-\frac{\eps}{2})m}|\phi|^{O(1)}$.

We now describe how to construct $\psi$ from $\phi$, assuming we have the
rooted tree of height $h\le c\log |\phi|$ we mentioned covering the primal
graph of $\phi$ outside of $M$.  We first describe the construction of an
auxiliary formula $\psi_r$, whose input variables are only the variables of
$M$, where $r$ is the root of the tree of height $h$. The formula will be built
inductively, by constructing a formula $\psi_b$ for each node $b$ of the tree,
starting from the leaves.  In the remainder we use, for a node $b$ of the tree,
$h_b$ to denote the height of $b$ in the tree, that is its maximum distance
from any of its descendants, and $\phi_b$ to denote the formula made up of the
clauses of $\phi$ that contain a variable associated with $b$ or one of its
descendants.  We can assume without loss of generality that all clauses of
$\phi$ contain a variable outside of $M$ (if not, replace $C$ with $(C\lor x),
(C\lor \neg x)$, where $x$ is a new variable outside $M$), hence we have
$\phi_r=\phi$ and also if $b,b'$ are nodes such that $h_{b}=h_{b'}$, then
$\phi_b$ and $\phi_{b'}$ are made up of disjoint sets of clauses of $\phi$. To
see this, if there were a clause that contains a variable associated with $b$
(or one of its descendants) and $b'$ (or one of its descendants) this would add
to the primal graph an edge that is not between an ancestor and a descendant,
contradiction.

The formula $\psi_b$ will have the following properties:

\begin{enumerate}

\item The input variables of $\psi_b$ are the variables of $M$ and all the
variables associated with ancestors of $b$ in the tree.

\item $\psi_b$ is satisfied by an assignment to its input variables if and only
if it is possible to extend this assignment to the variables associated with
$b$ and its descendants, so that all clauses of $\phi_b$ are satisfied.

\item $|\psi_b| = O( 2^{h_b}|\phi_b|)$.

\end{enumerate}

As mentioned, $\psi_r$ will be the formula obtained by this procedure for the
root of the tree and will only have $M$ as input variables.

In order to construct $\psi_b$ we proceed by induction on $h_b$. If $h_b=0$
then $\psi_b$ is constructed by taking $\phi_b$, and obtaining two formulas
$\phi_b^0, \phi_b^1$ by setting the variable associated with $b$ to False,
True, respectively. We set $\psi_b=\phi_b^0\lor \phi_b^1$.  Clearly, the three
properties are satisfied.

Now suppose $h_b>0$ and let $b_1,\ldots,b_s$ be the children of $b$ in the
tree. Let $x$ be the variable associated with $b$. We have already constructed
formulas $\psi_{b_1},\ldots,\psi_{b_s}$. For each $i\in[s]$ let $\psi_{b_i}^0$
(respectively $\psi_{b_i}^1$) be the formula obtained from $\psi_{b_i}$ if we
set $x$ to False (respectively True) and simplify the formula accordingly. Let
$C_b$ be the set of clauses of $\phi$ containing $x$ but no variable associated
with a descendant of $b$. Let $C_b^0$ (respectively $C_b^1$) be the CNF formula
made up of the clauses of $C_b$ we obtain if we set $x$ to False (respectively
True), that is $C_b^0$ is the set of clauses containing $x$ positive, but with
$x$ removed (similarly for $C_b^1$). We define $\psi_b$ as follows:

\[ \psi_b = \left( C_b^0 \land \bigwedge_{i\in[s]} \psi_{b_i}^0 \right) \lor
\left( C_b^1 \land \bigwedge_{i\in[s]} \psi_{b_i}^1 \right) \]

Let us argue why $\psi_b$ satisfies the three properties. First, $\psi_b$ has
the right set of variables, as we have eliminated $x$ from all formulas. 

Second, suppose $\psi_b$ is satisfied by some assignment to its input variables
and suppose without loss of generality that the first term of the disjunction
is satisfied (the other case is symmetric), therefore $C_b^0$ is satisfied. We
set $x$ to False, and thus satisfy all of $C_b$. For all $i\in[s]$ we have that
$\psi_{b_i}^0$ is satisfied, therefore $\psi_{b_i}$ is satisfied by the current
assignment to $M$ and the ancestors of $b$, extended with $x$ set to False. By
induction, this can be extended to an assignment satisfying $\phi_{b_i}$. Since
we have satisfied $C_b$ and $\phi_{b_i}$ for all $i\in[s]$, we have satisfied
$\phi_b$. For the converse direction, if it is possible to extend some
assignment to $M$ and the ancestors of $b$ so that $\phi_b$ is satisfied, then
this assignment satisfies $\psi_b$. Indeed, suppose that in the assumed
extension $x$ is set to False (again the other case is symmetric). This implies
that all the clauses of $C_b^0$ must be satisfied, as they contain $x$
positive. Furthermore, by induction, since we can satisfy $\phi_{b_i}$ (for all
$i\in[s]$) by setting $x$ to False, we also satisfy $\psi_{b_i}$ by setting $x$
to False, hence $\psi_{b_i}^0$ evaluates to True. Therefore, the first term of
the disjunction of $\psi_b$ evaluates to True.

Third, to bound the size of $\psi_b$ we have $|C_b^0|+|C_b^1|\le 2|\phi_b|$ and
we also have $2\sum_{i\in[s]} \psi_{b_i} = O(2^{b_h}\sum_{i\in[s]}|\phi_{b_i}|)
= O(2^{b_h}|\phi_b|)$, (where we used the fact that the $\phi_{b_i}$ are
disjoint) giving the desired bound.  We therefore have $|\psi_r| =
|\phi|^{O(c)} = |\phi|^{O(1)}$.

We now transform $\psi_r$ into a formula $\psi$ such that $\psi$ has a
satisfying assignment of weight $k$ if and only if $\psi_r$ has a satisfying
assignment. To do so, we partition the $m$ input variables of $\psi_r$ into $k$
sets $V_i$, $i\in[k]$, each of size at most $\lceil m/k\rceil$. For each
assignment $\sigma$ to the variables $V_i$ we construct a new variable
$x_{i,\sigma}$. For each $i\in[k]$ we add a clause $\left(\bigvee_{\sigma}
x_{i,\sigma}\right)$. For each $y\in V_i$ let $\sigma^1(y)$ (respectively
$\sigma^0(y)$) be the set of assignments to $V_i$ which set $y$ to True
(respectively False). Then, we replace every appearance of $y$ in $\psi_r$ with
the clause $\bigvee_{\sigma\in \sigma^1(y)} x_{i,\sigma}$ and every appearance
of $\neg y$ with the clause $\bigvee_{\sigma\in \sigma^0(y)} x_{i,\sigma}$.
Note that now we have $|\psi|=O(2^{m/k}|\psi_r|) = 2^{m/k}|\phi|^{O(1)}$, as
promised. Furthermore, it is not hard to see that $\psi$ can only be satisfied
by an assignment of weight $k$ if we select for each $i\in[k]$ a unique
variable $x_{i,\sigma}$ to set to True, thus creating a one-to-one
correspondence between satisfying assignments to $\phi$ and weight-$k$
satisfying assignments to $\psi$.  \end{proof}

\subsection{Linear-depth Circuits and Constant Pathwidth Modulators}

In this section we present \cref{lem:barrington} which, as mentioned will rely
on the transformation of Boolean circuits to bounded-width branching programs
given by Barrington's theorem. Before we begin let us recall some definitions
and background.

\begin{definition} A layered branching program of width $w$ and length $\ell$
with input variables $x_1,\ldots,x_n$ is a DAG with the following properties:

\begin{enumerate}

\item Vertices are partitioned into $\ell$ sets $V_1,\ldots,V_\ell$ (the
layers), such that each arc goes from a vertex in $V_i$ to a vertex in
$V_{i+1}$. For each $i\in[\ell]$ we have $|V_i|\le w$.

\item There is a unique source (the start vertex), while $V_\ell$ contains two
vertices (the accept and reject vertices). All other vertices have strictly
positive in- and out-degree.

\item Each arc is labeled True or False and each vertex with positive
out-degree is labeled with an input variable.

\end{enumerate} 

We say that a branching program accepts an assignment $\sigma$ to its variables
if when we start at the source and we follow a path where for each vertex
labeled $x_i$ we follow the out-going arc labeled $\sigma(x_i)$, we reach the
accept vertex.  \end{definition}

We now recall Barrington's classical result.

\begin{theorem}\label{thm:barrington}(\cite{Barrington89}) There is an
algorithm which, given a boolean circuit on $n$ inputs with depth $d$, produces
an equivalent branching program on the same inputs with width $5$ and length
$O(4^d)$ in time $O(4^d)$.  \end{theorem}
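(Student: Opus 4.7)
The plan is to proceed by induction on the depth of the circuit, representing Boolean values by elements of the symmetric group $S_5$ and building an equivalent \emph{permutation branching program} that can then be repackaged as the layered branching program claimed in the theorem. The key definition is this: a program \emph{computes} a function $f$ \emph{with respect to} a non-identity permutation $\sigma \in S_5$ if, reading its instruction sequence and composing the chosen permutations on $\{1,\ldots,5\}$, one obtains $\sigma$ when $f(\mathbf{x}) = 1$ and the identity when $f(\mathbf{x}) = 0$. Such a program converts to a layered branching program of width $5$ by having each layer record the current image of the element $1$; choosing the top-level $\sigma$ to be a $5$-cycle makes the final accept decision amount to checking whether $1$ has been moved to $\sigma(1)$.

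For the inductive construction I would handle the gate types separately. An input gate $x_i$ at depth $0$ gives a length-$1$ program with the single instruction ``apply $\sigma$ if $x_i = 1$, apply identity otherwise''. Negation is obtained by post-composing the program with a fixed permutation so that the roles of identity and non-identity are swapped, at no cost to length, and OR reduces to AND and negation via De Morgan. The interesting case is AND. Given programs $P_1, P_2$ computing $f_1, f_2$ with respect to $\sigma$ and $\tau$, form the concatenation $P_1 P_2 \widetilde{P}_1 \widetilde{P}_2$, where $\widetilde{P}_j$ is obtained from $P_j$ by inverting all permutation labels so that $\widetilde{P}_1$ computes $f_1$ with respect to $\sigma^{-1}$ and $\widetilde{P}_2$ computes $f_2$ with respect to $\tau^{-1}$. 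A four-way case analysis shows that the composed output is $[\sigma,\tau] = \sigma\tau\sigma^{-1}\tau^{-1}$ when $f_1 = f_2 = 1$ and the identity in every other case, so the concatenation computes $f_1 \land f_2$ with respect to the commutator. The length exactly doubles twice, so the recurrence $L(d) \le 4 L(d-1)$ yields $L(d) = O(4^d)$.

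A subtle point that needs care is keeping the induction alive by ensuring the output permutation is always again a $5$-cycle that can be fed into the next AND gate. This is where the choice of $S_5$ is essential: in any solvable group the iterated commutators collapse to the identity, and $S_5$ is the smallest non-solvable group. Concretely, I would fix a pair of $5$-cycles $\sigma_0,\tau_0 \in S_5$ whose commutator $[\sigma_0,\tau_0]$ is itself a $5$-cycle (a short check shows this is possible), and then exploit conjugation to realign arbitrary programs: if $P$ computes $f$ with respect to $\sigma$, pre- and post-composing with $\rho^{-1}$ and $\rho$ yields a program of the same length computing $f$ with respect to $\rho\sigma\rho^{-1}$. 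Since all $5$-cycles of $S_5$ are conjugate, we can always bring two subprograms into the canonical form $\sigma_0,\tau_0$ before applying the AND construction, and the result is again a $5$-cycle.

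The main obstacle, and essentially the whole content of the argument, is this AND construction together with the choice of group: it is not at all obvious that AND can be simulated with only a multiplicative length blow-up of $4$ and \emph{no} width blow-up at all. Once this is in place, converting the final permutation branching program to a layered branching program of width $5$ and length $O(4^d)$ and verifying that everything can be produced in time $O(4^d)$ is routine bookkeeping.
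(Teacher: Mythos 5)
Your proposal is correct and is precisely the argument of Barrington's original paper, which is all this statement rests on here (the paper gives no proof of its own, only the citation to \cite{Barrington89}). The key points are all in place: the commutator construction for AND with the four-way case analysis, the conjugation lemma exploiting that all $5$-cycles are conjugate in $S_5$ so the induction can be kept alive with a $5$-cycle output, the non-solvability of $S_5$ as the reason a $5$-cycle commutator of $5$-cycles exists, and the recurrence $L(d)\le 4L(d-1)$ giving length $O(4^d)$.
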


\begin{lemma}\label{lem:barrington} Statement 2 of \cref{thm:ald} implies
statement 1. \end{lemma}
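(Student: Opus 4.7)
The plan is to reduce the satisfiability of a circuit of depth at most $\epsilon n$ on $n$ inputs to the problem in statement 2: \textsc{SAT} on a CNF with a size-$n$ modulator to constant pathwidth. The bridge is \cref{thm:barrington}: given a circuit $C$ on $n$ inputs of depth $d \le \epsilon n$, we can build in time $2^{O(\epsilon n)}$ an equivalent layered branching program $\mathcal{B}$ of width $5$ and length $\ell = O(4^{\epsilon n})$. The task then becomes to encode $\mathcal{B}$ as a CNF formula $\phi$ whose input variables form the modulator and whose remaining variables describe the state of $\mathcal{B}$ at each layer.

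Concretely, the variables of $\phi$ are (i) the $n$ circuit inputs $x_1,\ldots,x_n$, which will make up the modulator $M$, and (ii) for every layer $i\in[\ell]$ and every node $j\in[5]$, a ``state'' variable $s_{i,j}$ signifying that the computation is currently at node $j$ of layer $i$. For each layer we add an exactly-one gadget on its five state variables (one at-least-one clause and $\binom{5}{2}$ binary at-most-one clauses). For every arc of $\mathcal{B}$ going from a node $j$ in layer $i$ labeled with input $x_k$ to a node $j'$ in layer $i+1$ under label $v\in\{0,1\}$, we add the ternary clause $\neg s_{i,j} \lor \neg x_k^{v} \lor s_{i+1,j'}$, where $x_k^{v}$ is $x_k$ or $\neg x_k$ depending on $v$. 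We also force $s_{1,\text{source}}$ and $s_{\ell,\text{accept}}$ to be True via unit clauses. By construction, satisfying assignments of $\phi$ correspond exactly to assignments to $x_1,\ldots,x_n$ accepted by $\mathcal{B}$, so $\phi$ is satisfiable iff $C$ is. Moreover, every clause containing an $M$-variable is a transition clause, so $\phi-M$ consists of $\ell$ disjoint exactly-one gadgets on $5$ variables each, and hence has pathwidth bounded by an absolute constant $c$.

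To finish, we run the algorithm guaranteed by statement 2 for this fixed $c$ on $(\phi, M)$. Since $|\phi| = O(\ell) = 2^{O(\epsilon n)}$ and $|M|=n$, its runtime is $(2-\epsilon')^n |\phi|^{d'} = (2-\epsilon')^n \cdot 2^{O(d'\epsilon n)}$ for constants $\epsilon' > 0$ and $d'$ depending only on $c$. The main (and only real) obstacle is choosing the depth-parameter $\epsilon$ of statement 1 small enough to absorb this polynomial blow-up: picking $\epsilon$ so that $\log_2(2-\epsilon') + O(d'\epsilon) < 1$ ensures the overall runtime is $(2-\epsilon'')^n s^{O(1)}$ for some $\epsilon''>0$, establishing statement 1. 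Conceptually, Barrington's theorem converts circuit evaluation of depth $\epsilon n$ (a bounded-fan-in NC$^1$-style computation) into a polynomial-size, constant-width local computation, which a CNF of constant pathwidth outside the input-modulator can simulate gate by gate; the depth restriction in \aldseth\ is precisely what keeps the length of the Barrington program small enough for the reduction to be useful.
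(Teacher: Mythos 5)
Your proposal is correct and follows essentially the same route as the paper: apply Barrington's theorem to get a width-5, length-$O(4^{\epsilon n})$ branching program, encode its layer-by-layer execution as a CNF whose $n$ circuit inputs form a modulator to constant pathwidth, and choose the depth coefficient $\epsilon$ small enough relative to the polynomial exponent of the assumed algorithm to absorb the $2^{O(\epsilon n)}$ blow-up in formula size. The only (immaterial) difference is that you use a one-hot encoding of the five states per layer with exactly-one gadgets, whereas the paper encodes each layer's state in three binary variables.
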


\begin{proof}

The proof relies heavily on \cref{thm:barrington}: given an arbitrary Boolean
circuit we will first construct an equivalent branching program and then
convert this branching program into a CNF formula. The formula will have a
modulator to small pathwidth because the branching program has constant width.

More precisely, suppose that for some $\eps>0$ there exists an algorithm
which for all $c>0$ takes as input a CNF formula $\phi$ and a modulator $M$
of its primal graph to pathwidth $c$, with $|M|=m$, and decides if $\phi$
is satisfiable in time $2^{(1-\eps)m}|\phi|^{c'}$, for some $c'>0$. Let
$\eps' = \frac{\eps}{4c'}$.  We claim that we can obtain an algorithm which,
given an $n$-input circuit $C$ of size $s$ and depth $\eps'n$ decides if $C$ is
satisfiable in time $2^{(1-\eps')n}s^{O(1)}$.

We first invoke \cref{thm:barrington} and obtain a branching program of length
$O(4^{\eps'n})$ and width $5$. We then construct a CNF formula whose variables
are the $n$ variables $x_1,\ldots,x_n$ representing the inputs to the branching
program, plus the $3\ell$ variables $y_{1,t}, y_{2,t}, y_{3,t}$, for
$t\in[\ell]$. The intuitive meaning of these variables is that the assignment
to $y_{1,t}, y_{2,t}, y_{3,t}$ should encode the vertex which the execution of
the branching program will reach in layer $t$. We make an injective mapping
from the vertices of $V_t$ to the assignments of $y_{1,t}, y_{2,t}, y_{3,t}$
and add clauses to exclude assignments not associated with a vertex of $V_t$.
We then add clauses to ensure that arcs are followed: for each assignment to
$y_{1,t}, y_{2,t}, y_{3,t}, y_{1,t+1}, y_{2,t+1}, y_{3,t+1}$ as well as the at
most five input variables used as labels for the vertices of $V_t$, we
construct clauses ensuring the transitions conform to the arcs of the branching
program. Finally, we add clauses ensuring that in layer $1$ we begin at the
start vertex and in the last layer we reach that accept vertex. It is not hard
to see that the constructed CNF formula is satisfiable if and only if there
exists an assignment to the input variables accepted by the branching program,
and hence the circuit.

We observe that in the formula we have constructed there is a modulator of size
$n$ such that removing those variables results in a graph of pathwidth at most
5 (we can use a sequence of bags containing $\{y_{1,t}, y_{2,t}, y_{3,t},
y_{1,t+1}, y_{2,t+1}, y_{3,t+1}\}$ for all $t$). If we execute the supposed
algorithm on this instance the running time will be at most
$2^{(1-\eps)n}|\phi|^{c'}$. We have $|\phi| = O(4^{\eps'n}) = O(2^{\frac{\eps
n}{2 c'}})$. The algorithm therefore runs in $O(2^{(1-\frac{\eps}{2})n}) =
O(2^{(1-\eps')n})$.  \end{proof}

\subsection{Linear-Depth Circuits and Weighted SAT}

\begin{lemma}\label{lem:ald} Statement 1 of \cref{thm:ald} implies statement 5.
\end{lemma}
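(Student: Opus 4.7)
The plan is to reduce weight-$k$ satisfiability of a Boolean formula $\phi$ on $n$ variables to the satisfiability of a Boolean circuit of depth at most $\eps N$ on $N = k\lceil\log n\rceil$ inputs, and then invoke Statement~1. The combinatorial encoding is the standard one, mirroring the end of the proof of \cref{lem:wsat} in reverse: encode a weight-$k$ assignment by $k$ blocks of $\lceil\log n\rceil$ bits each, where the $j$-th block names the index of one of the $n$ variables of $\phi$ to be set to True. Given $\phi$ and $k$, I would build a circuit $C$ on these $N$ input bits with three parts: (i) for each block, a decoder that outputs, for every variable $x_i$ of $\phi$, a bit indicating whether this block encodes index $i$; (ii) for every $x_i$, a sub-circuit that ORs together the decoder bits across blocks to produce the effective assignment, plus a global sub-circuit verifying that the $k$ selected indices are pairwise distinct (rejecting otherwise, which is what enforces \emph{exact} weight $k$); (iii) an evaluator for $\phi$ applied to the effective assignment.

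For the depth analysis, parts (i) and (ii) are standard and have depth $O(\log n + \log k)$. The crucial point for part (iii) is that Statement~5 concerns Boolean \emph{formulas} (circuits of fan-out $1$), and so by the classical depth-rebalancing result for Boolean formulas I may assume $\phi$ has depth $O(\log|\phi|)$ at the cost of only a polynomial blow-up in size. Hence $C$ has depth $D \leq \tau \log|\phi|$ for some absolute constant $\tau$ (assuming WLOG $n,k \leq |\phi|$), and size $s = |\phi|^{O(1)}$. To meet the required depth bound $D \leq \eps N = \eps k\log n$, I split into two cases. If $k \geq \tau\log|\phi|/(\eps\log n)$, I run the algorithm hypothesized by Statement~1 on $C$; its running time is $(2-\eps)^N s^{O(1)} = n^{(1-\eps')k}|\phi|^{O(1)}$, where $\eps' := \log_2(2/(2-\eps)) > 0$. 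Otherwise $k\log n < (\tau/\eps)\log|\phi|$, so $n^k \leq |\phi|^{\tau/\eps}$, and I can simply brute-force all $\binom{n}{k}\leq n^k$ weight-$k$ assignments in total time $n^k\cdot|\phi|^{O(1)} = |\phi|^{O(1)}$, which is again within the target bound $n^{(1-\eps')k}|\phi|^{O(1)}$.

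The main obstacle, and the reason this reduction places the W[SAT]-complete problem in the \aldseth\ equivalence class rather than in the strictly larger W[P] class treated in \cref{sec:horn}, is precisely the depth bound on $C$. The construction works because the input in Statement~5 is a \emph{formula} and so admits rebalancing to logarithmic depth; for an arbitrary circuit (as in the W[P] hypothesis) no such guarantee exists, and $C$ would generally exceed the depth budget $\eps N$. The remaining technical work is purely constant-arithmetic bookkeeping: choosing $\eps'$ in terms of $\eps$ and managing the two-case split to cover the regime where $\log|\phi|$ is much larger than $k\log n$.
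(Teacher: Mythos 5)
Your proposal is correct and follows essentially the same route as the paper's proof: encode the weight-$k$ assignment by $k$ blocks of $\lceil\log n\rceil$ input bits with decoders, an OR-based effective assignment, and a pairwise-distinctness gadget, rebalance the formula to logarithmic depth (this is exactly where the formula/circuit distinction is exploited), and dispose of the regime where the depth budget $\eps k\log n$ is not met by observing that there brute force over all $n^k$ assignments is already polynomial in $|\phi|$. The remaining details you defer (absorbing the $(2-\eps)^{O(k)}$ slack from the ceiling into a slightly smaller $\eps'$) are handled the same way in the paper.
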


\begin{proof}

Suppose there exist $\eps>0, c>0$ such that satisfiability of $n$-input
circuits of depth $\eps n$ and size $s$ can be decided in time
$2^{(1-\eps)n}s^c$. 

We are given an $N$-variable formula $\phi$ and an integer $k$ and are asked to
decide if $\phi$ has a weight-$k$ satisfying assignment in time
$N^{(1-\eps')k}|\phi|^{O(1)}$. We can assume without loss of generality that
$|\phi|\le N^{\eps k/6}$, otherwise the problem can be solved in
$N^k|\phi|^{O(1)} = |\phi|^{O(1/\eps)} = |\phi|^{O(1)}$ time by trying out all
assignments. Therefore, $\log |\phi| \le \frac{\eps k\log N}{6}$.

We now consider $\phi$ as a boolean circuit. Using standard techniques, we can
balance $\phi$ to obtain an equivalent circuit of depth at most $3\log |\phi|
\le \frac{\eps k \log N}{2}$ (see e.g. \cite{BonetB94}) while maintaining a
size that is polynomial in $|\phi|$. We add to $\phi$ $k\lceil\log N\rceil$ new
inputs, partitioned into $k$ groups of $\lceil\log N\rceil$ inputs, whose
intuitive meaning is that they encode the indices of the $k$ inputs to $\phi$
which are meant to be set to True in a satisfying assignment.  For each group
we enumerate all $O(N)$ assignments and associate each with an input of $\phi$.
We transform the $N$ input gates of $\phi$ into disjunction gates of fan-in
$k$, whose inputs are conjunctions of fan-in $\lceil \log N \rceil$ connected
to the new inputs in one of the $k$ groups via appropriate $\neg$ gates so that
an input of $\phi$ is set to True if and only if its index is the assignment to
one of the $k$ groups of new inputs. This construction can be made to have
depth $O(\log\log N+\log k)$ by replacing high fan-in gates with binary trees.
It is now not hard to see that the new circuit has a satisfying assignment that
gives distinct assignments to each of the $k$ groups if and only if $\phi$ has
a weight-$k$ satisfying assignment. What remains is to add for each distinct
$i,i'\in[k]$ some gadget that ensures that the assignments to groups $i,i'$ are
not identical (and hence we select an assignment of weight exactly $k$ for
$\phi$). This can easily be done with $O(k^2)$ formulas of size $O(\log N)$ so
the total depth remains at most the depth of $\phi$ plus $O(\log\log N+\log
k)$. We can therefore assume that the depth of the circuit we have now
constructed is at most $\eps k\log N$. The circuit has at most $k\log N+k$
inputs and has been constructed in polynomial time, so running the supposed
satisfiability algorithm takes $2^{(1-\eps)k\log N}2^k|\phi|^{O(1)}$ which for
$N>2^{2/\eps}$ is upper-bounded by $N^{(1-\eps/2)k}|\phi|^{O(1)}$.  \end{proof}

\subsection{MaxSAT and Modulators of Logarithmic
Tree-depth}\label{sec:maxsattd}

We are now ready to show that the results of \cref{thm:ald} remain valid if
instead of \textsc{SAT} we consider the more general \textsc{MaxSAT} problem.
In particular, a fast satisfiability algorithm for circuits of depth $\eps n$
would be sufficient not only to decide satisfiability of formulas which are
close to having logarithmic tree-depth, but also to decide the maximum number
of satisfiable clauses of such formulas. This will become crucial later,
because the versions of \textsc{Independent Set} we consider will be relatively
straightforward to reduce to \textsc{MaxSAT} on formulas which have a small
modulator to instances of constant treewidth (and hence logarithmic
tree-depth).

\begin{theorem}\label{thm:aldsethmaxsat}

The following two statements are equivalent:

\begin{enumerate}

\item (\aldseth\ is false) There is an $\eps>0$ and an algorithm which takes as
input a (bounded fan-in) boolean circuit of size $s$ on $n$ inputs with depth
at most $\eps n$ and decides if the circuit is satisfiable in time
$(2-\eps)^ns^{O(1)}$.

\item (\textsc{MaxSAT} for logarithmic tree-depth modulators) There is an
$\eps>0$ such that for all fixed $c>0$ we have an algorithm which takes as
input a CNF formula $\phi$, a target value $t$, and a size $m$ modulator of
$\phi$ to tree-depth at most $c\log |\phi|$ and decides if there exists an
assignment that satisfies at least $t$ clauses of $\phi$ in time
$(2-\eps)^m|\phi|^{O(1)}$.

\end{enumerate}

\end{theorem}

\begin{proof}

The implication $2\Rightarrow 1$ follows immediately from \cref{thm:ald},
because an algorithm that can solve \textsc{MaxSAT} can also solve
\textsc{SAT}, implying the fourth statement of \cref{thm:ald}. We therefore
focus on the implication $1\Rightarrow 2$. Fix an $\eps>0$ for which we have an
algorithm that can decide the satisfiability of boolean circuits of depth $\eps
n$ in time $(2-\eps)^ns^{O(1)}$.

Fix a $c>0$ and suppose we are given a formula $\phi$ with a modulator $M$ of
size $m$ such that $\phi-M$ has tree-depth at most $c\log|\phi|$. We want to
find an assignment that satisfies as many clauses of $\phi$ as possible in time
$(2-\eps')^m|\phi|^{O(1)}$, for some $\eps'>0$. We will assume without loss of
generality that, for all fixed $\delta>0$ we have $|\phi|<2^{\delta m}$, as
otherwise we can enumerate all assignments to $M$ in time
$2^m<|\phi|^{1/\delta} = |\phi|^{O(1)}$ and for each such assignment compute
its best extension to the rest of the instance in time $\phi^{O(1)}$ (using the
fact that the remaining instance has logarithmic tree-depth).  By choosing an
appropriate $\delta$ we can therefore assume that any quantity that is $O(\log
|\phi|)$ is upper-bounded by $\eps m$.

Let us first give a high-level overview of our approach. We will begin by
constructing an arithmetic circuit using bounded fan-in $\max$ and $+$ gates
and depth $O(\log |\phi|)$. The idea is that the gates of the circuit will
allow us to eliminate all the variables of $\phi$ except those of $M$. In
particular, for each variable of $\phi-M$ we have a $\max$ gate (we select the
better assignment for the variable) whose inputs are $+$ gates with two inputs:
the number of clauses immediately satisfied by giving a specific value to the
variable, and the circuit corresponding to the remaining formula. Hence,
feeding an assignment to the variables of $M$ to the circuit will compute the
maximum number of clauses that can be satisfied by extending this assignment.
Once we have this arithmetic circuit we would like to convert it into a boolean
circuit, however doing this in a naive way will increase the depth to
super-logarithmic in $|\phi|$. We will therefore first ``compress'' the
circuit: we will replace the bounded fan-in gates with gates of fan-in
$\log\log|\phi|$, in the process reducing the depth to
$\log|\phi|/\log\log|\phi|$. We then use the fact that the sum or maximum of
$N$ numbers of $N$ bits can be computed by a boolean circuit of depth $O(\log
N)$, or in our case, the sum of $\log|\phi|$ numbers on $\log|\phi|$ bits can
be computed by a boolean circuit of depth $\log\log|\phi|$. Plugging this
circuit in the place of high fan-in gates produces the final boolean circuit of
the desired depth.

\smallskip

Let us proceed to the construction. We will construct a boolean circuit with
$n=m$ inputs and depth at most $O(\log|\phi|)<\eps m$, on which we can apply
the supposed algorithm.  The circuit will be satisfiable if and only if there
exists an assignment satisfying $t$ clauses of $\phi$, so we will obtain the
implication.

We first construct an arithmetic circuit. In such a circuit the value of each
gate is an integer and we will use the operations $+$ and $\max$. We describe
the construction of such a circuit by induction.

Our arithmetic circuit will have as inputs the variables of $M$ with the
natural correspondence between boolean and integer values, namely, False is $0$
and True is $1$. In order to allow us to encode literals we will also use
$\neg$ gates, which given input $x$ output $1-x$; these will only be used in
the layer immediately after the inputs, while the rest of the circuit will
consist only of $+$ and $\max$ gates of fan-in $2$, as well as integer
constants.

We make the following inductive claim: we can produce from $\phi$ and $M$ an
arithmetic circuit $C$ which satisfies the following:

\begin{enumerate}

\item For each assignment $\sigma$ to $M$, if we feed $\sigma$ to the circuit
$C$ we obtain as output the number of clauses satisfied by the best assignment
of $\phi$ that is consistent with $\sigma$.

\item If the primal graph of $\phi-M$ is connected, then the depth of $C$ is at
most $3\td(\phi-M)+2\log |\phi|+1$.

\item If the primal graph of $\phi-M$ is not connected, then the depth of $C$
is at most $3\td(\phi-M)+2\log |\phi| + 2$.

\end{enumerate}

\begin{claim} An arithmetic circuit satisfying the above can be constructed in
polynomial time, given $\phi$ and $M$ as well as a tree-depth decomposition of
$\phi-M$.  \end{claim}

\begin{claimproof}

We show the claim by induction. For the base case, suppose $M$ contains all
variables of $\phi$. Then, we can construct a circuit as follows: for each
clause, construct a $\max$ gate, and feed it as input all the literals of the
clause (directly or through $\neg$ gates); then add a $+$ gate that computes
the sum of the gates constructed for the clauses. This circuit uses $+$ and
$\max$ gates of unbounded fan-in, but can easily be converted to one that uses
gates of fan-in $2$ and depth at most $2\log|\phi|+1$ as desired.

We now have two inductive cases, depending on whether $\phi-M$ is connected or
not. First, suppose that $\phi-M$ is connected, there exists therefore a
variable $x\not\in M$ such that $\td(\phi-M-\{x\})=\td(\phi-M)-1$. We compute
two formulas $\phi_0,\phi_1$, by setting $x$ to $0$ or $1$ respectively and
simplifying $\phi$ (that is, deleting satisfied clauses and falsified
literals).  Inductively construct circuits $C_0,C_1$ for the two formulas. The
circuit $C$ has a $\max$ gate as output; we feed into this gate two inputs,
calculated by $+$ gates: the first $+$ gate takes as input the output of $C_0$
and an integer equal to the number of clauses that contain the literal $\neg
x$, while the other $+$ gate takes as input the output of $C_1$ and an integer
equal to the number of clauses that contain $x$.  Correctness of the new
circuit can be established by induction.  The depth is at most
$2+\max\{d_0,d_1\}$, where $d_0,d_1$ are the depths of $C_0, C_1$ respectively.
We have $d_0\le 3\td(\phi-M-\{x\})+2\log |\phi| + 2 \Rightarrow 2+d_0 \le
3\td(\phi -M)+2\log |\phi| +1$, as desired.

Suppose then that $\phi-M$ is disconnected and the connected components
correspond to formulas $\phi_1,\phi_2,\ldots, \phi_\ell$. Intuitively, we would
like to construct the circuits $C_1,\ldots, C_\ell$ and then feed their output
to a $+$ gate, but this is slightly complicated from the fact that we want to
keep the fan-in at $2$. If one of the formulas, say $\phi_1$ is large, that is,
$|\phi_1|\ge |\phi|/2$, then we can proceed by building one circuit $C_1$ for
$\phi_1$ and another $C_2$ for $\phi_2\cup\ldots\cup \phi_\ell$, and feeding
their outputs to a $+$ gate. The circuit $C_1$ has depth $1$ less than the
desired bound (as $\phi_1$ has a connected primal graph), while $C_2$ has
sufficiently smaller depth as its size has decreased by a factor of at least
$2$, compared to $\phi$.

We are therefore left with the case where $\phi-M$ is disconnected into
formulas $\phi_1,\ldots,\phi_\ell$, and all these subformulas have size at most
$|\phi|/2$. Partition the formulas $\phi_1,\ldots,\phi_\ell$ into two sets
$L,R$ by initially setting $L=\{\phi_1,\ldots,\phi_\ell\}$ and $R=\emptyset$
and then repeatedly selecting an arbitrary formula from $L$ and moving it to
$R$ as long as we maintain the invariant that the total size of formulas in $L$
is at least $|\phi|/2$. We now are at a situation where the total size of
formulas in $R$ is at most $|\phi|/2$; $L$ is such that if we remove any
formula it contains then its size becomes at most $|\phi|/2$; all formulas
$\phi_i$ have size at most $|\phi|/2$. We now construct three circuits: a
circuit $C_R$ representing all formulas of $R$; a circuit $C_1$ representing an
arbitrarily chosen formula $\phi_i\in L$; and a circuit $C_L$ representing
$L\setminus\{\phi_i\}$. We feed the outputs of $C_1$ and $C_L$ into a $+$ gate,
and the output of that gate with the output of $C_R$ into an $+$ gate, which is
the final output. We observe that because all three circuits represent formulas
whose size is at least a factor of $2$ smaller than $\phi$, their depths are at
least $3$ less than the desired bound, so the two extra layers we added allow
us to stay within the target depth.  \end{claimproof}

We now show how the circuit we have produced can be turned into a circuit of
smaller depth by increasing the fan-in. The reason this will be helpful is that
the depth of a boolean circuit required to add $N$ integers of $N$ bits is
asymptotically the same as for $2$ integers of $N$ bits, so increasing the
fan-in will allow us to more efficiently convert the arithmetic circuit into a
boolean circuit.

\begin{claim} There is a polynomial-time algorithm that takes as input an
arithmetic circuit using $\max$ and $+$ gates of depth $d$ and fan-in $2$ and
outputs an equivalent circuit of depth $O(d/\log d)$ using $\max$ and $+$ gates
of fan-in at most $O(d)$.  \end{claim}

\begin{claimproof}

To simplify notation we assume that $d$ is a power of $2$. Suppose without loss
of generality that the given circuit is arranged in layers, where layers
alternate between $\max$ and $+$ gates.  Suppose the layers are numbered
$1,\ldots,d$ and gates at layer $i$ have inputs in layer $i-1$. We will
conserve the gates in one out of each $\log d$ layers, call these layers the
selected layers.

Consider now a gate $g$ at a selected layer. Its value depends on the values of
at most $2^{\log d}=d$ gates in the previous selected layer. Consider then the
binary tree rooted at $g$ with the $d$ gates from the previous selected layer
as leaves. We can replace this tree with a constant-depth circuit of fan-in
$O(d)$. In particular, our new tree will have a $\max$ gate as output, which
takes as input $O(d)$ $+$ gates, which each take as input some subset of the
$d$ gates of the previous selected layer. This can be done inductively: for
each gate $g'$ in the path from the previous selected layer to $g$ we construct
such a bounded-depth circuit: if $g'$ is a $\max$ gate we take the circuits
constructed for its inputs and merge their outputs; while if $g'$ is a $+$ gate
we apply the distributive property $\max\{a,b\}+\max\{c,d\} =
\max\{a+c,a+d,b+c,b+d\}$ to maintain the property that the output is a $\max$
gate.  Doing this for all gates in selected layers allows us to eliminate
intermediate layers and obtain the desired depth bound.  \end{claimproof}

We now recall (\cref{thm:cook}) that one can perform additions and maximum
calculations of $N$ numbers on $N$ bits with depth $O(\log n)$ boolean
circuits.  Putting this together with the two claims we obtain the proof of the
statement. In particular, we initially constructed a bounded fan-in arithmetic
circuit of depth $O(\log |\phi|)$ and then after applying the second claim and
\cref{thm:cook} converted it to an equivalent boolean circuit of depth
$O(\log|\phi|)$.  We add to the end a boolean comparator that checks if the
output is the binary encoding of at least the target value $t$. The whole
circuit now has depth $O(\log|\phi|)$, which as we explained can be assumed to
be at most $\eps m$, so the supposed algorithm for circuit satisfiability can
be used to solve \textsc{MaxSAT} on the given instance.  \end{proof}

\section{Modulators to Logarithmic Pathwidth and 2-SAT
Backdoors}\label{sec:logpw}

In this section we show an equivalence between the following two hypotheses.

\begin{theorem}\label{thm:2sat}

The following are equivalent:

\begin{enumerate}

\item (\twosatseth\ is false) There is an algorithm that takes as input a CNF
formula $\phi$ and a strong \textsc{2-SAT} backdoor of size $b$, and decides if
$\phi$ is satisfiable in time $(2-\eps)^b|\phi|^{O(1)}$.

\item (\logpwmseth\ is false) There is an algorithm that takes as input a CNF
formula $\phi$, a modulator $M$ of size $m$, and a path decomposition of the
primal graph of $\phi-M$ of width $O(\log|\phi|)$ and decides if $\phi$ is
satisfiable in time $(2-\eps)^m|\phi|^{O(1)}$.

\end{enumerate}

\end{theorem}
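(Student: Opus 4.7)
The plan is to establish each direction of the equivalence by an instance reduction that preserves the parameter (modulator or backdoor size) exactly and blows up the formula size only polynomially. Both directions hinge on the observation that the two problems involved are in some sense ``dual'' NL problems: satisfiability of a CNF of logarithmic pathwidth, given a decomposition, is naturally a reachability question in the standard DP graph and hence in NL, whereas satisfiability of a 2-SAT formula is naturally the complement of reachability in the implication graph, hence in coNL. Bridging the two therefore requires the Immerman--Szelepcs\'enyi theorem: we invoke NL $=$ coNL on each side to obtain an NL algorithm for what would otherwise look like a coNL problem.

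\textbf{Direction 2 $\Rightarrow$ 1.} Given $\phi$ with a strong 2-SAT backdoor $B$ of size $b$, I would construct $\phi'$ with modulator $M'=B$ of the same size to pathwidth $O(\log|\phi'|)$. Using Immerman--Szelepcs\'enyi, fix a log-space NTM $N$ that accepts a 2-SAT formula iff it is satisfiable. The variables of $\phi'$ are those of $B$ together with configuration variables $c_1,\ldots,c_T$, one block of $O(\log|\phi|)$ bits per step of $N$'s polynomial-length computation. Clauses enforce that $c_1$ is the initial configuration, $c_T$ is accepting, and each $c_{t+1}$ is a legal successor of $c_t$ on input $\phi|_\tau$. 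Since a single step of $N$ reads only a constant number of input bits, and each such bit is either a fixed bit of $\phi$ or some $\tau_i \in B$, each transition constraint can be written as a CNF clause over the bits of $c_t$, $c_{t+1}$, and at most one $B$-variable. After removing $B$, the consecutive bags $\{c_t,c_{t+1}\}$ form a path decomposition of width $O(\log|\phi'|)$, and $\phi'$ is satisfiable iff there exist $\tau$ and an accepting run of $N$ on $\phi|_\tau$, iff $\phi$ is satisfiable.

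\textbf{Direction 1 $\Rightarrow$ 2.} Given $\phi$ with modulator $M$ of size $m$ to pathwidth $O(\log|\phi|)$, I would build $\phi'$ with strong 2-SAT backdoor $M'=M$. Log-pathwidth SAT (given the decomposition) is in NL via the standard DP, so by Immerman--Szelepcs\'enyi there is a log-space NTM $\overline{N}$ that accepts iff $\phi|_\sigma$ is \emph{unsatisfiable}; let $H_\sigma$ be its polynomial-size configuration graph on input $\phi|_\sigma$. Then $\phi|_\sigma$ is satisfiable iff there is no path in $H_\sigma$ from start to accept, and non-reachability has a standard 2-SAT encoding: variables $x_v$ per vertex, implication clauses $\neg x_u \lor x_v$ per arc, and the units $x_{\mathrm{start}}$ and $\neg x_{\mathrm{accept}}$. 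To realize this uniformly over $\sigma$, the variables of $\phi'$ are $M$ together with all $x_v$, and for every potential arc of $H_\sigma$ whose existence is conditional on some $\sigma_i$ taking value $b$, I include a CNF clause consisting of the $M$-literal falsified exactly when $\sigma_i=b$, followed by $\neg x_u \lor x_v$. Once $\sigma$ is fixed, each such clause either vanishes (its $M$-literal evaluates to true) or simplifies to the intended 2-clause, so $M$ is indeed a strong 2-SAT backdoor, and $\phi'$ is satisfiable iff $\phi$ is.

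The main obstacle is setting up Immerman--Szelepcs\'enyi so that the resulting NTM has the locality property we need: each transition must depend on only $O(1)$ bits of the modulator/backdoor, so that a single short clause of $\phi'$ cleanly becomes either trivially true or the desired 2-clause (respectively the desired log-pathwidth-compatible clause) after the modulator/backdoor is assigned. This can be ensured by choosing a machine that reads its input one bit at a time; once this is arranged, the size blow-up from $|\phi|$ to $|\phi'|$ is polynomial, so the speed-ups of $(2-\eps)^m$ and $(2-\eps)^b$ carry over to the reduced instance.
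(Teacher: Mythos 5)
Your proposal is correct, but it takes a genuinely different route from the paper's. The paper introduces two explicit intermediate problems, \reach\ and \nonreach\ (reachability in a DAG whose arcs are annotated with literals over the modulator set, under an outer existential quantification over the modulator assignment), shows \logpwmseth\ $\equiv$ \reach\ and \twosatseth\ $\equiv$ \nonreach\ by direct combinatorial reductions, and then---this is its technical heart---\emph{reproves} the Immerman--Szelepcs\'enyi inductive-counting argument in the annotated setting to make \reach\ and \nonreach\ interreducible (\cref{lem:immerman}). You instead invoke $\mathrm{NL}=\mathrm{coNL}$ purely as a black box to obtain fixed logspace NTMs for the ``wrong-quantifier'' languages, and push all remaining work into generic machine simulations: a Cook--Levin-style tableau whose transition clauses touch at most one backdoor variable (for $2\Rightarrow1$), and a non-reachability 2-CNF over the configuration graph with conditional arcs each carrying one modulator literal (for $1\Rightarrow2$). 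The enabling observation is the same in both treatments---the dependence on the quantified-over assignment is local, one literal per arc or transition, so the reduction is built uniformly for all $2^m$ assignments in polynomial time---and you identify it correctly; in fact the locality requirement is even milder than you state, since backdoor literals are all evaluated away and clauses meeting the modulator are deleted from $\phi-M$ anyway, so several such literals per clause would also be harmless. Your route is shorter and more modular; the paper's buys self-containedness and the reusable annotated-reachability problems, which it needs again for the \kncut\ application. One point to make explicit when fleshing out $1\Rightarrow2$: for a clause of $\phi$ containing both modulator and non-modulator variables, your NL verifier must be able to check it while holding only one bag's assignment, which requires its non-modulator variables to lie in a common bag of the given decomposition of $\phi-M$; the paper makes the same assumption in \cref{lem:logpw}.
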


Before we proceed, let us give some intuition. One can think of both problems
that we are dealing with here as reachability problems, in the following sense:
if we are given an assignment to a \textsc{2-SAT} backdoor, we can verify if it
can be extended to a satisfying assignment by solving \textsc{2-SAT}, which is
known to be complete for the class NL, that is, equivalent to reachability. On
the other hand, if we have an instance of logarithmic pathwidth, the intuition
of the class EPNL defined in \cite{IwataY15} (see also \cite{Lampis25})
is that solving such an instance corresponds exactly to deciding if a
non-deterministic machine with space $O(\log n)$ will accept a certain input,
that is, this problem is again complete for NL. This intuition leads us to
believe that modulators to logarithmic pathwidth and \textsc{2-SAT} backdoors
are equivalent.

There is, however, one complication in the above reasoning. What we have
established is that once someone supplies us with an assignment to the
\textsc{2-SAT} backdoor, checking if this assignment is correct has the same
complexity as checking whether an assignment to the logarithmic pathwidth
modulator is correct. However, what we really care about is whether an
algorithm which avoids enumerating all assignments in one case can be used to
avoid enumerating the assignments in the other. Hence, an argument which
assumes that the assignment is known cannot be used as a black box. What we
need is a reduction that transfers the answer to all possible assignments for a
backdoor to the possible assignments to a modulator, without going through the
assignments (ideally running in polynomial time).

Given this, it is natural to revisit the reasons for which both \textsc{2-SAT}
and \textsc{SAT} on graphs of logarithmic pathwidth are essentially equivalent
to reachability. We then run into the following obstacle: while \textsc{SAT} on
graphs of logarithmic pathwidth easily reduces to and from reachability,
\textsc{2-SAT} is actually only easily seen to be equivalent to the
\emph{complement} of reachability. The reason \textsc{2-SAT} ends up being
complete for NL is the famous Immerman–Szelepcsényi theorem, which established
that NL=coNL \cite{Immerman88}.

We are therefore in a situation where, if we attempt to go through
reachability, one problem we are dealing with has a quantification of the form
$\exists\exists$ (for logarithmic pathwidth modulators the question is if there
exists an assignment to the modulator so that the resulting graph has an $s\to
t$ path) and the other has the form $\exists\forall$ (does there exist an
assignment to the backdoor so that in the graph representing the \textsc{2-SAT}
implications there is no cycle creating a contradiction). What we need is to
change the inner quantifier without knowing the value of the variables of the
first quantifier block. We are thus led to the following definition of two
auxiliary problems:

\begin{definition} 

Suppose we are given a DAG $G=(V,E)$, a designated source $s$ and sink $t$ of
$G$, a set $M$ of $m$ boolean variables $x_1,\ldots,x_m$, and an annotation
function which assigns to some of the arcs of $G$ a literal (that is, a
variable or its negation) from $M$.

The \reach\ problem is the following: decide if there exists an assignment to
$M$ such that if we only keep in the graph arcs with no annotation or with an
annotation set to True by the assignment, then there is a directed path from
$s$ to $t$.

The \nonreach\ problem is the following: decide if there exists an assignment
to $M$ such that if we keep the same arcs as above, then there is no directed
path from $s$ to $t$.

\end{definition}

Intuitively, \reach\ allows us to encode $2^m$ instances of reachability
simultaneously: for each assignment to the variables of $M$ we obtain a
distinct graph. It is important to note here that \reach\ and \nonreach\ are
\emph{not} complementary problems. In both problems the outer quantification is
existential: we ask if there exists an assignment to $M$ such that, in one case
a path exists, and in the other no path exists.

Our strategy is now the following: we will show that \reach\ is equivalent to
\textsc{SAT} parameterized by a logarithmic pathwidth  modulator (intuitively
this is easy, because both problems have an $\exists\exists$ quantification);
we will show that \textsc{SAT} parameterized by a \textsc{2-SAT} backdoor is
equivalent to \nonreach\ (again, both problems have an $\exists\forall$
quantification).  Finally, we will revisit the proof of the
Immerman–Szelepcsényi theorem and show that the key arguments still go through.
In particular, the Immerman–Szelepcsényi theorem can be seen as a reduction
from reachability to its complement. What we need to show is that this
reduction can still be executed if arcs are annotated, as in our problems, in a
way that preserves the answer for all assignments to $M$. 

\begin{proof}[Proof of \cref{thm:2sat}] We split the proof into three parts. In
\cref{lem:2sat} we show an equivalence between \textsc{2-SAT} backdoors and
\nonreach; in \cref{lem:logpw} we show an equivalence between \textsc{SAT} with
a modulator to logarithmic pathwidth and \reach; and in \cref{lem:immerman} we
show that \reach\ and \nonreach\ have the same complexity by revisiting the
proof of the Immerman–Szelepcsényi theorem.  \end{proof}

\subsection{2-SAT and Non-Reachability}

\begin{lemma}\label{lem:2sat} There is an algorithm solving \nonreach\ in time
$(2-\eps)^mn^{O(1)}$, for some $\eps>0$ if and only if the \twosatseth\ is
false.  \end{lemma}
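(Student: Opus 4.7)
The plan is to establish both directions of the equivalence by direct polynomial reductions that keep the parameter unchanged ($|M|$ in \nonreach\ will always equal the backdoor size $b$). For the ``if'' direction (\twosatseth\ false $\Rightarrow$ fast \nonreach), given an annotated DAG $(G,s,t,M)$ with $|M|=m$, I introduce a fresh Boolean variable $r_v$ for every vertex $v$ and build the CNF $\phi$ consisting of the unit clauses $r_s$ and $\neg r_t$, a 2-clause $\neg r_u \lor r_v$ for every unannotated arc $u\to v$, and a 3-clause $\neg \ell \lor \neg r_u \lor r_v$ for every arc $u\to v$ annotated with a literal $\ell$ on $M$. Under any fixed assignment $\sigma$ to $M$, each 3-clause is either satisfied (when $\ell$ is false under $\sigma$) or collapses to $\neg r_u \lor r_v$, so $M$ is a \textsc{2-SAT} backdoor of size $m$. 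A unit-propagation argument from $r_s$ shows that the residual 2-SAT is satisfiable iff $t$ is unreachable from $s$ in the $\sigma$-active subgraph; hence $\phi$ is satisfiable iff \nonreach\ is a YES instance, and running the presumed $(2-\eps)^b|\phi|^{O(1)}$ algorithm on $\phi$ delivers the desired bound for \nonreach.

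For the converse direction, given a CNF $\phi$ with a \textsc{2-SAT} backdoor $B$ of size $b$, I construct an annotated DAG on variable set $M=B$ whose \nonreach\ answer captures the satisfiability of $\phi$. The starting point is the 2-SAT implication digraph of the residual $\phi_\sigma$: each clause $C$, with backdoor literals $\ell^B_1,\dots,\ell^B_j$ and non-backdoor literals $\ell^Y_1,\dots,\ell^Y_k$ (so $k\le 2$), should contribute the standard implication arcs between its non-backdoor literals, ``active'' precisely when all $\ell^B_i$ are false under $\sigma$; unit reductions ($k=1$) contribute an arc $\neg \ell^Y_1 \to \ell^Y_1$, and empty reductions ($k=0$) should directly force $s\to t$ under the offending $\sigma$. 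The target invariant is that $s\to t$ under $\sigma$ iff $\phi_\sigma$ is \emph{unsatisfiable}, so that the \nonreach\ question ``does some $\sigma$ make $s$ unable to reach $t$?'' is exactly the question whether some $\sigma$ makes $\phi_\sigma$ satisfiable, i.e., whether $\phi$ is satisfiable.

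The main obstacle is reconciling the DAG and single-literal constraints of \nonreach\ with the cyclic, conjunctively-conditioned nature of the 2-SAT implication graph. To make the graph acyclic, I will unfold it in time, keeping $|Y|+1$ layered copies of the literal vertex set with identity arcs $v_i\to v_{i+1}$ and replacing every conditional implication $u\to v$ by arcs $u_i\to v_{i+1}$, so that reachability in the original digraph corresponds to a monotone path in the layered DAG. Since activation of each arc is a conjunction $\bigwedge_i \overline{\ell^B_i}$, I will break every conditional arc into a chain of $j$ arcs, each labelled with one $\overline{\ell^B_i}$. Finally, to translate 2-SAT unsatisfiability---existence of some non-backdoor variable $y$ with both $y\to^*\neg y$ and $\neg y\to^* y$---into single-source--single-sink reachability, I will branch $s$ into chooser vertices $v_y$ for each non-backdoor $y$, route $v_y$ through a first layered copy aiming to reach $\neg y$ in its final layer, feed that vertex into a second layered copy aiming to reach $y$, and wire the final-layer image of $y$ to $t$. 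All blowups are polynomial, so invoking the assumed \nonreach\ algorithm yields the $(2-\eps)^b|\phi|^{O(1)}$ bound that falsifies \twosatseth, completing the equivalence.
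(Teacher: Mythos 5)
Your proposal follows essentially the same route as the paper's proof, which splits the lemma into exactly the two reductions you describe: the same CNF encoding of \nonreach\ (unit clauses at $s$ and $t$, one implication clause per arc, one extra backdoor literal per annotated arc), and, for the converse, the same annotated implication digraph with conjunctive conditions broken into annotated chains, a time-layered unfolding to obtain a DAG, and a per-variable pair of copies wired between $s$ and $t$ to detect both contradictory paths $y\to^*\neg y$ and $\neg y\to^* y$. Two remarks: your layer count $|Y|+1$ is too small, since an implication path may pass through all $2|Y|$ literal vertices and hence use up to $2|Y|-1$ implication steps (the paper uses $2n$ layers), though this is a trivial fix; and where you implicitly invoke the standard implication-graph characterization of \textsc{2-SAT} satisfiability for each residual formula $\phi_\sigma$ — which is legitimate, since for fixed $\sigma$ the active subgraph restricted to literal vertices is exactly the implication graph of $\phi_\sigma$ — the paper instead reproves that direction self-containedly via a minimal-counterexample argument, so you should either cite the characterization explicitly or supply that argument.
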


\begin{proof} The proof is broken down into \cref{lem:2sat1} and
\cref{lem:2sat2}. \end{proof}

\begin{lemma}\label{lem:2sat1} If there is an algorithm falsifying \twosatseth,
then there is an algorithm solving \nonreach\ in time
$(2-\eps)^mn^{O(1)}$.\end{lemma}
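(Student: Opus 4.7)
The plan is to reduce \nonreach\ to \textsc{SAT} with a strong \textsc{2-SAT} backdoor in polynomial time, increasing the ``parameter'' only from $m$ to $m$ (plus lower-order terms), so that an algorithm falsifying \twosatseth\ directly yields the desired algorithm for \nonreach. The key intuition is that non-reachability has the natural $\exists\forall$ shape of a 2-SAT-backdoor question: we existentially pick an assignment to $M$ (which selects the live arcs), and then non-reachability between $s$ and $t$ in the resulting DAG is expressible by a 2-SAT instance over reachability variables.

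Concretely, given an instance $(G=(V,E), s, t, M=\{x_1,\dots,x_m\}, \alpha)$ of \nonreach, I would introduce a fresh Boolean variable $r_v$ for each $v\in V$, intended to indicate ``$v$ is reachable from $s$ through live arcs'', and build a CNF $\phi$ with the following clauses:
\begin{enumerate}
\item the unit clauses $(r_s)$ and $(\neg r_t)$;
\item for each unannotated arc $(u,v)\in E$, the 2-clause $(\neg r_u \lor r_v)$;
\item for each arc $(u,v)$ annotated with literal $\ell$ over $M$, the 3-clause $(\neg \ell \lor \neg r_u \lor r_v)$.
\end{enumerate}
I would then declare $B=\{x_1,\dots,x_m\}$ as the backdoor. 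For any assignment to $B$ each 3-clause of type (3) either becomes trivially satisfied (its $M$-literal is false) or collapses to the 2-clause $(\neg r_u \lor r_v)$; combined with the unit clauses from (1) and the 2-clauses from (2), the residual formula is a \textsc{2-SAT} instance, so $B$ is indeed a strong \textsc{2-SAT} backdoor of size $m$, and $|\phi|$ is polynomial in $|G|+m$.

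Correctness is the only nontrivial verification. For a fixed assignment $\tau$ to $M$, let $G_\tau$ be the subgraph of live arcs. If $t$ is unreachable from $s$ in $G_\tau$, setting $r_v=\text{True}$ exactly for the vertices reachable from $s$ in $G_\tau$ satisfies all implication clauses together with $(r_s)$ and $(\neg r_t)$. Conversely, any $s$-to-$t$ path in $G_\tau$ propagates $r_s=\text{True}$ along a chain of satisfied implications and forces $r_t=\text{True}$, contradicting $(\neg r_t)$. Hence $\phi$ is satisfiable iff $(G,s,t,M,\alpha)$ is a yes-instance of \nonreach. Applying the hypothesised $(2-\eps)^b|\phi|^{O(1)}$ algorithm for \textsc{SAT} with a \textsc{2-SAT} backdoor to $(\phi,B)$ solves \nonreach\ in time $(2-\eps)^m n^{O(1)}$.

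There is no real obstacle here beyond making sure the annotation is properly encoded into a clause that turns into a 2-clause once the backdoor is set; the argument is otherwise a direct implementation of the standard 2-SAT encoding of reachability, parametrised by $M$.
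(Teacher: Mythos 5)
Your proposal is correct and matches the paper's proof almost verbatim: you build exactly the same CNF (a reachability variable per vertex, the unit clauses $(r_s)$ and $(\neg r_t)$, implication clauses for unannotated arcs, and implication clauses weakened by $\neg\ell$ for arcs annotated with $\ell$), declare $M$ as the backdoor, and argue correctness by the same two-direction reachability argument. No meaningful difference in approach.
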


\begin{proof}

We use a standard reduction of non-reachability to \textsc{2-SAT}, except we
have to take into account the arc annotations. Suppose we are given a
\nonreach\ instance. Let $G=(V,E)$ be the given DAG on $n$ vertices. We
construct a CNF formula with variables the variables of $M$, given with the
\nonreach\ instance, to which we add a variable $x_v$ for each $v\in V$. We
will set $M$ to be the backdoor set. Now, for each arc $(v_1,v_2)\in E$ we add
the clause $(\neg x_{v_1}\lor x_{v_2})$ to the formula if the arc is not
annotated, otherwise we add the clause $(\neg x_{v_1}\lor x_{v_2}\lor \neg
\ell)$ if the arc is annotated with the literal $\ell$. We also add the unit
clauses $(x_s)$ and $(\neg x_t)$. This completes the construction.

We now claim that if the formula we constructed is satisfiable, using the same
assignment to $M$ yields a digraph where there is no path from $s$ to $t$.
Indeed, suppose that for the same assignment there is a path from $s$ to $t$,
meaning that the assignment sets to True all the literals used as annotations
to the edges of the path. This implies that in the corresponding clauses, the
backdoor literals we have added are set to False by the satisfying assignment.
But then this would create a chain of implications given $x_s\to x_t$, which in
conjunction with the two unit clauses would falsify the formula.

For the converse direction, suppose that there is an assignment to $M$ that
makes $t$ unreachable from $s$. We use the same assignment in the formula and
extend it by setting $x_v$ to True if and only if $v$ is reachable from $s$
under the same assignment to $M$. It is not hard to see that this assignment
will validate the formula: the only possibility of failure is for a clause
representing the arc $(v_1,v_2)$, where $v_1$ is reachable from $s$ and $v_2$
is not. In such a case, the arc must be annotated by a literal that is set to
False by the assignment to $M$, so the corresponding clause is satisfied.

Finally, we observe that the construction is polynomial and the size of the
backdoor is exactly $m$, as each clause contains at most $2$ of the new
variables.  \end{proof}

\begin{lemma}\label{lem:2sat2} If there is an algorithm solving \nonreach\ in
time $(2-\eps)^mn^{O(1)}$, for some $\eps>0$, then the \twosatseth\ is false.
\end{lemma}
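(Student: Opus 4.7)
The plan is to give a polynomial-time reduction from \textsc{SAT} parameterized by a 2-SAT backdoor $B$ of size $b$ to \nonreach\ with $|M|=b$, so that invoking the assumed algorithm decides satisfiability of $\phi$ in $(2-\eps)^b|\phi|^{O(1)}$. I will set $M=B$ and exploit the classical characterization that a 2-CNF $\psi$ is unsatisfiable iff in its implication graph some variable $x$ admits directed paths $x \leadsto \neg x$ and $\neg x \leadsto x$. The reduction must encode this condition uniformly over all $2^b$ restrictions of $\phi$, using arc annotations to select which implications are active under a given $\sigma$.

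The first step is to build a ``conditional implication graph''. Since $B$ is a strong 2-SAT backdoor, each clause has at most two non-$B$ literals $\ell_1,\ell_2$ and some $B$-literals $m_1,\ldots,m_k$; the implications $\neg\ell_1\to\ell_2$ and $\neg\ell_2\to\ell_1$ must be active precisely when $\sigma$ sets every $m_i$ to False. I realize each one as a length-$k$ chain whose consecutive arcs carry the annotations $\neg m_1,\ldots,\neg m_k$; the chain is traversable exactly when the clause is not already satisfied by $\sigma$. Unit clauses (one non-$B$ literal) give conditional arcs $\neg\ell\to\ell$, while $B$-only clauses give conditional $s\to t$ chains that fire exactly when $\sigma$ violates them, correctly forcing a path (and hence the ``unsatisfiable'' verdict) in those cases.

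To encode the ``exists a bad variable'' quantifier I take two disjoint copies $G_1,G_2$ of this conditional implication graph and, for every non-$B$ variable $x$, add a branching vertex $u_x$ with arcs $s\to u_x\to x_{G_1}$, together with arcs $\neg x_{G_1}\to\neg x_{G_2}$ and $x_{G_2}\to t$. Under any assignment $\sigma$ to $M$, an $s$-$t$ path exists iff some $x$ witnesses both $x\leadsto\neg x$ in $G_1$ and $\neg x\leadsto x$ in $G_2$, or some $B$-only clause is violated; these are precisely the conditions for $\phi[\sigma]$ to be unsatisfiable. Hence an assignment leaving $t$ unreachable exists iff $\phi$ is satisfiable, and $|M|=b$ is preserved.

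The main subtlety is that \reach/\nonreach\ is defined on a DAG, whereas implication graphs of 2-CNFs generally contain cycles. I would neutralize this with the standard layering trick: replicate each of $G_1,G_2$ into $N$ successive copies (for $N$ the number of vertices in the conditional implication graph), direct every arc from layer $i$ to layer $i+1$ while preserving its annotation, and have $s$ feed into the top layer while $t$ collects from the bottom layer of each copy. Reachability within $N$ steps in the original captures all reachability and corresponds exactly to reachability in the layered DAG, which has polynomial size, still at most one literal per arc, and $|M|=b$. Running the assumed algorithm then finishes the proof; making the subdivision, layering, and annotation bookkeeping mesh so that every conditional implication is traversable exactly when its $B$-precondition holds is the single piece that will require careful attention.
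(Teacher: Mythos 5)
Your overall strategy matches the paper's: build a ``conditional implication graph'' whose arcs are realized as chains annotated with the negations of the clause's backdoor literals, layer it into a DAG, and detect unsatisfiability of $\phi[\sigma]$ via the Aspvall--Plass--Tarjan criterion that some variable $x$ has both $x\leadsto\neg x$ and $\neg x\leadsto x$. (Your appeal to that classical criterion per fixed assignment is legitimate; the paper instead reproves it in the annotated setting via a minimal-counterexample argument, but that is a presentational choice, not a gap.)

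However, your gadget for the existential quantifier over $x$ is unsound. You use only \emph{two} shared copies $G_1,G_2$ and add, for \emph{every} non-backdoor variable, an entry arc $s\to u_x\to x_{G_1}$, a crossover arc $\neg x_{G_1}\to\neg x_{G_2}$, and an exit arc $x_{G_2}\to t$. Nothing forces a single $s$--$t$ path to use the entry, crossover, and exit arcs of the \emph{same} variable. A path of the form $s\to x_{G_1}\leadsto\neg y_{G_1}\to\neg y_{G_2}\leadsto z_{G_2}\to t$ only certifies the implications $x\Rightarrow\neg y$ and $\neg y\Rightarrow z$, which is consistent with satisfiability. Concretely, for the satisfiable $2$-CNF $(\neg x\lor\neg y)\wedge(y\lor z)$ (with empty backdoor) the implication graph contains $x\to\neg y\to z$, so your construction has an $s$--$t$ path and wrongly reports unsatisfiability. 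The fix is the one the paper uses: give each variable $x_i$ its \emph{own private pair} of copies $D_{i,0},D_{i,1}$, with the only entry into $D_{i,0}$ at $x_i$, the only crossover from $\neg x_i$ in $D_{i,0}$ to $\neg x_i$ in $D_{i,1}$, and the only exit from $x_i$ in $D_{i,1}$ to $t$; then any $s$--$t$ path necessarily witnesses $x_i\leadsto\neg x_i$ and $\neg x_i\leadsto x_i$ for the same $i$. This blows the size up by a factor of $O(n)$ only, so the reduction remains polynomial and $|M|=b$ is preserved. The rest of your construction (chain annotations, handling of unit and backdoor-only clauses, layering into a DAG) is correct.
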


\begin{proof}

We want to reduce \textsc{2-SAT} to \nonreach\ while using the annotation to
encode the backdoor variables. We are given a CNF formula $\phi$ with a set
$B=\{y_1,\ldots,y_b\}$ of $b$ backdoor variables and $n$ non-backdoor variables
$X=\{x_1,\ldots,x_n\}$. We assume without loss of generality that each clause
contains exactly $2$ non-backdoor variables: indeed, no clause can contain
three or more such variables (otherwise $B$ would not be a strong backdoor),
and if $C$ contains fewer than $2$ such variables, we can replace it with the
pair of clauses $(C\lor x_i), (C\lor \neg x_i)$, for some $x_i$ that does not
appear in $C$.

We now describe a basic implication digraph: for each variable $x_i\in X$ we
construct two vertices $x_i, \neg x_i$. For each clause $C$, let
$\ell_1,\ell_2$ be the literals of $C$ made up of variables of $X$. We
construct an arc from $\neg \ell_1$ to $\ell_2$ and an arc from $\neg \ell_2$
to $\ell_1$. If $C$ has size $2$ we do not annotate these arcs. Otherwise, if
$C$ contains $b'\le b$ other literals, we subdivide these arcs so that they
become directed paths of length $b'$ and annotate each arc of the resulting
path with the negation of a distinct literal of $C$ involving a variable of
$B$. The intuitive meaning of this construction is that we have a path from
$\neg \ell_1$ to $\ell_2$ and from $\neg \ell_2$ to $\ell_1$ if and only if all
the backdoor literals of $C$ are set to False.

The digraph above is not necessarily a DAG, but we can construct an equivalent
DAG as follows: take $2n$ copies of the graph above, number them $1,\ldots,
2n$, for each vertex $\ell$ in copy $i$ add non-annotated arcs to the same
vertex in all later copies, and for each edge $(\ell_1,\ell_2)$ and each
$i\in[2n-1]$ add an arc from the vertex $\ell_1$ in copy $i$ to the vertex
$\ell_2$ in copy $i+1$ maintaining the annotation of the arc. We now have the
property that there is a path from $\ell_1$ to $\ell_2$ in the original graph
if and only if there is a path from $\ell_1$ in the first copy to $\ell_2$ in
the last copy of the vertex set in the resulting DAG. This equivalence
continues to hold taking into account the arc annotations. Call the DAG we have
constructed the implication DAG and denote it by $D$.

Now our construction is as follows: start with two vertices $s,t$ and for each
$i\in[n]$ construct $2$ copies of the implication DAG, call them $D_{i,0},
D_{i,1}$ for $i\in[n]$.  For $i\in[n]$ put an arc from $s$ to the first vertex
$x_i$ in $D_{i,0}$ from the last $\neg x_i$ in $D_{i,0}$ to the first $\neg
x_i$ in $D_{i,1}$ and from the last $x_i$ in $D_{i,1}$ to $t$.  We set $M=B$
and this completes the construction which can clearly be carried out in
polynomial time. What remains is to argue for correctness.

Suppose now that the original CNF is satisfiable. We want to show that by using
the same assignment to $M$ we eliminate all $s\to t$ paths. Indeed, suppose we
find such a path, which for some $i\in[n]$ must go through $D_{i,0}, D_{i,1}$.
In particular, this means we have a path from $x_i$ to $\neg x_i$ and a path
from $\neg x_i$ to $x_i$ in $D$ while respecting the annotations. However,
recall that for each clause we have added a path of annotated arcs containing
the negations of all backdoor literals of the clause, meaning that if we can
follow an arc, the assignment to $M$ must be setting all the backdoor literals
to False. We conclude that the assignment cannot be satisfying to the formula,
as we can infer the implications $(x_i\to \neg x_i)$ and $(\neg x_i\to x_i)$.

For the converse direction, suppose there is an assignment to $B$ which
eliminates all $s\to t$ paths. We claim the same assignment can be extended to
a satisfying assignment of the original formula. Suppose for contradiction that
this is not the case and suppose that we are working with a maximal
counter-example, that is, for the given $n,b$ we have started our reduction
from a formula $\phi$ that satisfies the following: (i) $\phi$ is unsatisfiable
(ii) there is an assignment to $B$ that eliminates all $s\to t$ paths (iii)
every other $\phi'$ with the same backdoor size $b$ that satisfies (i) and (ii)
either has $n'>n$ non-backdoor variables or the \nonreach\ instance constructed
for $\phi'$ has at most as many arcs as that for $\phi$. In other words, $\phi$
is the counter-example with smallest $n$ and among these has the maximum number
of arcs in the resulting \nonreach\ instance. Assuming these we will reach a
contradiction, which will show that a counter-example cannot exist.

We will attempt to find a good assignment to $\phi$. Consider an $i\in [n]$ and
since there is no path from $s$ to $t$ it must be the case that either there is
no path from (the first) $x_i$ to (the last) $\neg x_i$ in $D$, or that there
is no path from (the first) $\neg x_i$ to (the last) $x_i$ in $D$, or both.

Suppose that there is no path from $x_i$ to $\neg x_i$, but that there is a
path from $\neg x_i$ to $x_i$, for some $i\in[n]$. We then set $x_i$ to True.
Symmetrically, suppose that there is no path from $\neg x_i$ to $x_i$, but that
there is a path from $x_i$ to $\neg x_i$. Then we set $x_i$ to False. Do this
for all variables that satisfy one of these conditions.

We claim that this assignment so far satisfies all clauses containing a
variable that has been set. Indeed, suppose $x_i$ has been set to True but
there is a clause such that all its backdoor literals are false and its other
non-backdoor literal is $\ell$. The interesting case is if $C$ contains $x_i$
negated (otherwise $C$ is satisfied by $x_i$), so we have the directed paths
$x_i\to \ell$ and $\neg \ell \to \neg x_i$, which can be followed under the
current assignment to $M$.  Then, there is a path $\neg \ell \to \neg x_i \to
x_i \to \ell$, which implies that the variable involved in $\ell$ has been set
to a value that satisfies $C$. The reasoning is the same if $x_i$ has been set
to False.

At this point we have an assignment to $B$ and an assignment to $X'\subseteq X$
which together satisfy all clauses involving variables of $X'$. If $X'\neq
\emptyset$ then we are done, because we can give the assignment we have
calculated to $X'$ and remove from $\phi$ all clauses containing a variable
from $X'$. The new formula $\phi'$ would be a smaller counter-example,
contradicting the selection of $\phi$. Indeed, the \nonreach\ instance for
$\phi'$ is a sub-graph of the one for $\phi$, so it cannot contain any $s\to t$
path, while $\phi'$ must be unsatisfiable, because if it were satisfiable,
using the assignment we calculated for $X'$ would satisfy all of $\phi$.

We then must have $X'=\emptyset$. This implies that for all $i\in[n]$ we have
neither a path from $x_i$ to $\neg x_i$ nor from $\neg x_i$ to $x_i$ in $D$. We
claim that setting any $x_i$ to True and simplifying the formula also yields a
smaller counter-example. Indeed, it will be more convenient to do this by
adding the clause $(x_i\lor x_i)$ to $\phi$, obtaining an \nonreach\ instance
with more arcs (this clause cannot have already existed, as then we would have
had $X'\neq \emptyset$). The new formula is clearly also unsatisfiable (we
added a clause to $\phi$ which was assumed unsatisfiable). Furthermore, there
is still no $s\to t$ path in the \nonreach\ instance for $\phi'$. To see this,
suppose that there is now such a path $s\to x_j \to \neg x_j \to \neg x_j \to
x_j \to t$. Previously, neither the $x_j\to \neg x_j$ path (in $D_{j,0}$) nor
the $\neg x_j\to x_j$ path (in $D_{j,1}$) existed, so both paths must be using
the arc $\neg x_i \to x_i$ which is the only newly added arc to $D$. So we
already had paths $x_i \to \neg x_j$ and $\neg x_j\to \neg x_i$, which means we
already had a path $x_i\to \neg x_i$, contradicting the assumption that
$X'=\emptyset$.  We conclude that $\phi'$ is a counter-example with the same
number of variables but strictly more arcs in the corresponding \nonreach\
instance, contradicting the selection of $\phi$.  Hence, no counter-example
exists.  \end{proof}

\subsection{Log-Pathwidth and Reachability}

\begin{lemma}\label{lem:logpw} There is an algorithm solving \reach\ in time
$(2-\eps)^mn^{O(1)}$, for some $\eps>0$ if and only if the \logpwmseth\ is
false.  \end{lemma}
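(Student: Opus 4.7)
I would prove both directions by polynomial-time reductions that preserve the parameter $m$ exactly. Both \reach\ and \logpwmseth\ have an ``$\exists\exists$'' shape (existentially guess the assignment to $M$, then existentially guess a path or a truth assignment), which is what makes the equivalence plausible; the reductions just make this precise.

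\textbf{Direction 1 (\logpwmseth\ false $\Rightarrow$ fast \reach).} Given a \reach\ instance on an $n$-vertex DAG with annotation set $M$, I simulate a length-$n$ walk via a CNF $\phi'$. For each time $t\in\{0,1,\dots,n\}$ introduce a block of $\lceil\log n\rceil$ fresh variables $v_{t,*}$ encoding the vertex visited at time $t$; unit clauses pin $v_{0,*}$ to the encoding of $s$ and $v_{n,*}$ to the encoding of $t$, and short clauses rule out codes that do not correspond to any vertex. For each $t$ and each ordered pair $(u,v)$ that is neither an arc of the DAG nor the trivial stay $u=v$, add a clause forbidding ``$v_{t,*}$ encodes $u$ and $v_{t+1,*}$ encodes $v$''; for each annotated arc $(u,v)$ with literal $\ell$, add the same forbidding clause with $\ell$ added as an extra literal, so the clause can be escaped only when $\ell$ is true. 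The only clauses that touch $M$ are these annotation constraints, so $\phi'-M$ decomposes along the time axis into bags $\{v_{t,*},v_{t+1,*}\}$ of size $2\lceil\log n\rceil$, yielding pathwidth $O(\log n)=O(\log|\phi'|)$. Satisfying assignments are in bijection with pairs (assignment to $M$, length-$n$ walk from $s$ to $t$ using only stays and available arcs), so applying the hypothesized \logpwmseth\ algorithm yields $(2-\eps)^m n^{O(1)}$.

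\textbf{Direction 2 (fast \reach\ $\Rightarrow$ \logpwmseth\ false).} Given $\phi$, a modulator $M$ of size $m$, and a path decomposition of $\phi-M$ of width $w=O(\log|\phi|)$, I assume WLOG (by the remark preceding \cref{sec:ald}) that $\phi$ has bounded arity, and build a DAG whose vertices are pairs $(i,\alpha)$ with $i$ a bag index and $\alpha$ an assignment to $B_i$; there are $|\phi|^{O(1)}$ of them since $2^w$ is polynomial. Arcs go from $(i,\alpha)$ to $(i+1,\alpha')$ whenever the two assignments agree on $B_i\cap B_{i+1}$. For each clause $C$ of $\phi$ I install a ``check'' on the appropriate transition: clauses contained in $\phi-M$ are handled by the usual pathwidth-DP rule; clauses with only modulator literals are enforced by a prefix gadget attached to a fresh source $s$; clauses mixing modulator and non-modulator literals are first locally rewritten so that their non-modulator footprints share a bag, after which I insert a gadget of serially-chained checkpoints whose outgoing arcs are parallel bundles, one arc per modulator literal of the corresponding clause and annotated with that literal (series enforces the conjunction across simultaneously finalized clauses, parallel arcs enforce the disjunction within one clause). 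A fresh sink $t$ collects arcs from every valid terminal $(\ell,\alpha)$. An $s\to t$ path under some assignment $\sigma$ to $M$ exists iff $\phi$ is satisfiable under $\sigma$, and the hypothesized \reach\ algorithm then runs in $(2-\eps)^m|\phi|^{O(1)}$.

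\textbf{Main obstacle.} Direction 2 is the delicate one: a clause that mixes modulator and non-modulator literals contributes no edges to the primal graph of $\phi-M$, so its non-modulator variables need not share any bag of the given path decomposition. The key technical step is a local rewriting (using a few auxiliary non-modulator variables per clause, exploiting bounded arity) that groups each such clause's non-modulator footprint into a common bag while keeping the pathwidth of $\phi-M$ at $O(\log|\phi|)$; once this is done the combinatorial translation of a ``conjunction of disjunctions of modulator literals'' into series--parallel gadgets of single-literal-annotated arcs is routine, as is the verification that the DAG remains of size $|\phi|^{O(1)}$.
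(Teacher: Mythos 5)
Your Direction 1 (\logpwmseth\ false $\Rightarrow$ fast \reach) coincides with the paper's second half: time-indexed blocks of $\lceil\log n\rceil$ fresh variables encoding a length-$n$ walk, start/end pinned, non-arcs forbidden, annotated arcs escapable by their literal, bags $\{v_{t,*},v_{t+1,*}\}$ giving pathwidth $O(\log n)$. Your Direction 2 also reproduces the paper's layered-DAG construction, including the key gadget: for a clause mapped to a bag and not satisfied by the bag-assignment $\sigma$, put a bundle of parallel arcs, one per modulator literal of that clause, annotated with that literal (the paper subdivides to avoid a multigraph, and maps clauses injectively to bags by repeating bags, which plays the role of your ``serially-chained checkpoints'').

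Where the proposal goes wrong is the ``main obstacle'' and the step you rely on to fix it. The obstacle is an artifact of reading $\phi-M$ as in the preliminaries (``delete every clause containing a variable of $M$''). Under that reading the paper's own sentence ``for each clause $C$ there is a bag that contains all variables of $C$ (outside of $M$)'' would indeed be unjustified; but the proofs in the paper (this lemma, \cref{thm:seth}, and the proof of \cref{obs:arity}, which explicitly uses that every clause induces a clique on the primal graph) consistently treat $\phi-M$ as the primal graph of $\phi$ with the vertices of $M$ removed. Under that convention, the non-$M$ variables of every clause — including mixed clauses — already form a clique in $\phi-M$ and hence share a bag of the given path decomposition, so no rewriting is needed and the rest of your construction is exactly the paper's. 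By contrast, the ``local rewriting'' you declare to be routine is in fact the crux under your own reading and is not specified: the natural implementation threads a chain of fresh non-$M$ variables through the decomposition to bring a clause's two far-apart non-$M$ endpoints into a common bag, which costs $+1$ pathwidth per mixed clause and destroys the $O(\log|\phi|)$ bound once there are more than logarithmically many such clauses. You should drop the rewriting and instead note that the structural hypothesis (as the paper actually uses it) already supplies a bag containing each clause's non-$M$ footprint.
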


\begin{proof}

For one direction, assume that there is an algorithm solving \reach\ in time
$(2-\eps)^mn^{O(1)}$ and we are given a CNF formula $\phi$ and a modulator $M$
of size $m$ such that removing $M$ from $\phi$ results in a graph of pathwidth
$O(\log |\phi|)$. Suppose we are given a path decomposition of the resulting
graph, and the bags are numbered $B_1,\ldots, B_r$. Recall that each clause of
$\phi$ induces a clique on the primal graph and by standard properties of tree
decompositions for each clause $C$ there is a bag that contains all variables
of $C$ (outside of $M$). Repeating bags as necessary we assume that there is an
injective mapping from the clauses to the bags so that each clause is mapped to
a bag that contains all its variables outside of $M$. Note that if a clause
contains only variables of $M$, then it can be mapped to any bag.

We construct an \reach\ instance as follows. For each $i\in [r]$ and for each
assignment $\sigma$ to the variables of $B_i$ we construct two vertices
$v_{i,\sigma,1}, v_{i,\sigma,2}$. For each $i\in[r-1]$, for each assignment
$\sigma$ to $B_i$ and $\sigma'$ to $B_{i+1}$ such that $\sigma,\sigma'$ agree
on $B_i\cap B_{i+1}$, we add an arc $(v_{i,\sigma,2}, v_{i+1,\sigma',1})$
without annotation. 

For each $i\in[r]$ and assignment $\sigma$ to $B_i$ we do the following: if
$\sigma$ satisfies the (at most one) clause $C$ mapped to $B_i$ or if no clause
is mapped to $B_i$, then add an arc $(v_{i,\sigma,1}, v_{i,\sigma,2})$ without
annotation. If there is a clause $C$ mapped to $B_i$ but $C$ is not satisfied
by $\sigma$, we add an arc $(v_{i,\sigma,1}, v_{i,\sigma,2})$ annotated with
each literal of $C$ from $M$. In particular, if $C$ contains no literals from
$M$ we do not add an arc $(v_{i,\sigma,1}, v_{i,\sigma,2})$, while if it
contains several, we add several parallel arcs (we can then subdivide them to
avoid constructing a multi-graph), each annotated with a distinct literal. We
add a vertex $s$ and arcs $(s, v_{1,\sigma,1})$ for all $\sigma$, as well as a
vertex $t$ and arcs $(v_{r,\sigma,2},t)$. We retain the set $M$ between the two
instances.

It is now not hard to see that there is a one-to-one correspondence between
satisfying assignments to $\phi$ and $s\to t$ paths. Fix an assignment to $M$.
If there is an $s\to t$ path, it must use for each $i$ an arc $(v_{i,\sigma,1},
v_{i,\sigma,2})$, which allows us to infer an assignment $\sigma$ for $B_i$.
This assignment is consistent because we only added arcs between bags for
consistent pairs of assignments. It is satisfying because if a clause $C$ is
mapped to $B_i$ we can only use the arc $(v_{i,\sigma,1}, v_{i,\sigma,2})$ if
$\sigma$ satisfies the clause or if the assignment to $M$ agrees with one of
the literals of the clause; furthermore, all clauses are mapped to some bag.
For the converse direction, if there is an assignment we construct a path by
using in each bag the arc $(v_{i,\sigma,1}, v_{i,\sigma,2})$ corresponding to
the assignment $\sigma$ used for $B_i$.

Let us now move to the other half of the lemma. Suppose there is an algorithm
deciding satisfiability with a modulator $M$ to pathwidth $O(\log |\phi|)$ and
we want to use this algorithm to solve \reach. We construct a formula which
retains the set $M$ of annotation variables as the modulator.  Given a digraph
$G=(V,E)$ with $n$ vertices we assume without loss of generality that $n$ is a
power of $2$ (otherwise add some dummy vertices). We construct a formula with
$n\log n$ extra variables, called $x_{i,j}$, $i\in[n], j\in[\log n]$. The
intended meaning is that the $\log n$ variables $x_{i,1},\ldots,x_{i,\log n}$
encode the index of the $i$-th vertex we would visit in an $s\to t$ path. To
enforce this meaning we assume the vertices of $G$ are numbered from $0$ to
$n-1$ and with each vertex we associate a distinct truth assignment to $\log n$
variables by reading these variables as a binary number. We add clauses
enforcing the variables $x_{1,j}, j\in[\log n]$ to take the value corresponding
to $s$ and the variables $x_{n,j}, j\in[\log n]$ to take the value
corresponding to $t$. For each $i\in[n-1]$ we add at most $n^2$ clauses to the
formula as follows: for each $(u,v)\not\in E$ such that $u\neq v$, we add a
clause ensuring that if the $x_{i,j}$ variables encode $u$ and the $x_{i+1,j}$
variables encode $v$, then the formula is not satisfied; for each $(u,v)\in E$
which is annotated with a literal $\ell$ we add the same clause (which is
falsified if $x_{i,j}, x_{i+1,j}$ encode $u,v$), but we add to this clause the
literal $\ell$. If $(u,v)\in E$ is not annotated or if $u=v$, we do not add a
clause. It is not hard to see that, if we remove the variables of $M$, this
formula has pathwidth $O(\log n)$, because we can make $n-1$ bags, each
containing $x_{i,j}, x_{i+1,j}, j\in[\log n]$ and this covers all edges.

To argue for correctness, if there is an $s\to t$ path in the original instance
for some assignment to $M$ we use the same assignment in $\phi$ and extend it
to a satisfying assignment as follows: if the path visits $r\le n$ vertices in
total, we set for each $i\in[r]$ the assignment to $x_{i,j}, j\in[\log n]$ to
encode the $i$-th vertex of the path; all remaining groups encode $t$. This
satisfies the formula. For the converse direction, we can infer from a
satisfying assignment to the formula a path of length at most $n-1$ by
interpreting the assignment to $x_{i,j}, j\in[\log n]$ as the encoding of the
$i$-th vertex of the path, and the clauses we have added ensure that the path
is valid, that is we are only followed to stay in the same vertex, follow an
arc without annotation, or follow an arc whose annotation is set to True by the
assignment to $M$.  \end{proof}

\subsection{Reachability and Non-Reachability}

\begin{lemma}\label{lem:immerman} For all $\eps>0$, there is an algorithm
solving \reach\ in time $(2-\eps)^mn^{O(1)}$ if and only if there is an
algorithm solving \nonreach\ in time $(2-\eps)^mn^{O(1)}$. \end{lemma}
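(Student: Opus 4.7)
The plan is to lift the classical proof of the Immerman--Szelepcs\'enyi theorem \cite{Immerman88} to the annotated setting. Since I--S gives a polynomial-time reduction between reachability and non-reachability, and both of our problems retain the set $M$ of annotation variables, it should suffice to show that the standard construction can be carried out in a way that is \emph{uniform over all assignments} $\sigma$ to $M$: for each $\sigma$, the reduced graph must behave correctly when restricted to the arcs active under $\sigma$. Once this is established, feeding the reduced instance into the supposed algorithm for the other problem gives an algorithm with the same time bound $(2-\eps)^m n^{O(1)}$, since $|M|$ is preserved and the construction is polynomial.

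Concretely, I would fix a standard nondeterministic log-space machine $\mathcal{N}$ which, by the I--S inductive-counting argument, accepts iff $t$ is not reachable from $s$ in $G$. Its configuration graph $H$, together with designated initial and accepting configurations $s_H, t_H$, is a polynomial-size graph in which an $s_H \to t_H$ path exists iff $\mathcal{N}$ accepts. Applied under an assignment $\sigma$, this already yields: non-reach in $G|_\sigma$ iff reach in $H|_\sigma$, which is exactly the content of one reduction. Taking complements on both sides gives reach in $G|_\sigma$ iff non-reach in $H|_\sigma$, yielding the other reduction from the \emph{same} construction. Hence a single annotated construction handles both directions of the equivalence.

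The heart of the matter is handling the annotations. The machine $\mathcal{N}$ interacts with $G$ in only two ways: it \emph{follows} an arc $(u,v)\in E$ when guessing a witness path for reachability, and it \emph{verifies the absence} of an arc $(u,v)$ when certifying that some vertex is unreachable. Each ``follow'' step of $(u,v)$ is encoded in $H$ by a transition edge carrying precisely the annotation of $(u,v)$ in $G$, so that this transition is usable under $\sigma$ iff $(u,v)$ is active under $\sigma$. Each ``verify $(u,v)\notin E$'' step is encoded by a small local gadget with three cases according to the structure of $G$: if $(u,v)\notin E$, insert a plain bypass arc (always usable); if $(u,v)\in E$ is unannotated, insert no bypass arc (never usable); if $(u,v)\in E$ is annotated with a literal $\ell$, insert a bypass arc annotated with $\neg\ell$, which is usable under $\sigma$ exactly when $\ell$ is false. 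A short induction on the length of a computation of $\mathcal{N}$ then shows that for every $\sigma$, the set of $s_H\to t_H$ paths in $H|_\sigma$ corresponds bijectively to the set of accepting runs of $\mathcal{N}$ on $G|_\sigma$.

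The step I expect to be the main obstacle is checking that this local annotation scheme is globally consistent: flipping $\sigma$ on a single literal $\ell$ must simultaneously deactivate the ``follow'' transitions and activate the matching ``verify absence'' transitions in exactly the places where the I--S count of vertices reachable in $G|_\sigma$ depends on $\ell$. This needs some care because the same arc $(u,v)$ may be queried many times during a computation of $\mathcal{N}$, and the inductive count is sensitive to the consistency of these queries across steps. Once this is verified, the remainder is routine: $|H|$ is polynomial in $|G|$, the set $M$ is passed through unchanged, and a $(2-\eps)^m n^{O(1)}$ algorithm for either problem immediately yields a $(2-\eps)^m n^{O(1)}$ algorithm for the other by running it on $H$.
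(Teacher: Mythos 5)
Your proposal is correct and follows essentially the same route as the paper: the paper also lifts the Immerman--Szelepcs\'enyi inductive-counting construction to the annotated setting (writing out the configuration graph explicitly as layered counter gadgets rather than via a machine $\mathcal{N}$), uses the identical key trick of annotating the ``verify absence of $(u,v)$'' transition with $\neg\ell$ when $(u,v)$ carries annotation $\ell$, and likewise derives both directions of the equivalence from the single answer-complementing reduction. The ``main obstacle'' you flag is in fact vacuous: since the assignment $\sigma$ to $M$ is fixed by the outer existential quantifier, every query to the same arc is answered consistently by construction, so the classical I--S correctness argument applies verbatim under each fixed $\sigma$.
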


\begin{proof}

We describe a polynomial-time algorithm which takes as input a DAG
$G_1=(V_1,E_1)$, with $|V_1|=n$, a source $s$, a sink $t$, with the arcs
annotated with variables from a set $M$. The algorithm produces a new DAG
$G_2=(V_2,E_2)$, a source $s_2$, a sink $t_2$, with arcs annotated from the
same set $M$. It guarantees the following: for each assignment to $M$, $G_2$
has an $s_2\to t_2$ path respecting the assignment if and only if $G_1$ does
not have any $s\to t$ path respecting the assignment.  Observe that if we
achieve this, we immediately obtain reductions from \reach\ to \nonreach\ in
both directions, implying the lemma.

We will reuse the ideas of the famous Immerman–Szelepcsényi theorem, so the
reader may think of $G_2$ as a graph that encodes the working of a
non-deterministic machine which is verifying that $G_1$ has no $s\to t$ path.
The construction is of course made more complicated because we also need to
take into account the arc annotations of $G_1$.

Before we begin, let us slightly edit $G_1$. Take $n=|V_1|$ copies of $V_1$,
constructing for each $v\in V_1$ the vertices $v^1,\ldots,v^{n}$. For each
$i\in [ n-1 ]$ and $v\in V_1$ we add the arc $(v^i, v^{i+1})$ without
annotation; for each arc $(u,v)$ and each $i\in [ n-1 ]$ we add the arc $(u^i,
v^{i+1})$, maintaining the annotation of $(u,v)$. Finally, for the vertex
$s^{n-1}$ and all $v\in V_i\setminus\{t\}$ we add the non-annotated arc
$(s^{n-1}, v^{n})$. This transformation has achieved the following: for each
assignment to $M$, an $s\to t$ path exists in $G_1$ if and only if an $s^1\to
t^{n}$ path exists in the new graph; there is a path from $s_1$ to $v^{n}$ for
all $v\in V_1\setminus\{t\}$. As a result the question of whether an $s\to t$
path exists in $G_1$ is equivalent to asking whether the number of vertices
reachable from $s^1$ in $\{ v^{n}\ |\ v\in V_1\}$ is equal to $n$. Call the new
graph $G_1'$ and we want to construct $G_2$ such that for all assignments to
$M$, there is a path $s_2\to t_2$ in $G_2$ if and only if $n-1$ vertices of $\{
v^{n}\ |\ v\in V_1\}$ are reachable from $s^1$ in $G_1'$. In the remainder,
when we refer to the $i$-th layer of $G_1'$ we will mean the set $\{ v^i\ |
v\in V_1\}$. Observe that $G_1'$ has the property that all arcs go from a layer
$i$ to a layer $i+1$. 

We begin the construction of $G_2$ with $n^2$ vertices $s_{i,j}$ for $i,j \in
[n]$, plus a source $s_2$ and a sink $t_2$. The intended meaning is that (for
all assignments to $M$) $s_{i,j}$ will be reachable from $s_2$ if and only if
there are exactly $j$ vertices of $G_1'$ in layer $i$ which are reachable from
$s^1$. We add a non-annotated arc $(s_{n,n-1},t_2)$ and a non-annotated arc
$(s_2,s_{1,1})$. For each $i,j\in [n]$ we want to construct a checker gadget
$C_{i,j}$ such that there will be an arc from $s_{i,j}$ to $C_{i,j}$ and arcs
from $C_{i,j}$ to $s_{i+1,j'}$ for some $j'\in[n]$. We want to achieve the
following property for all assignments $\sigma$ to $M$. Suppose that under some
assignment $\sigma$, there are exactly $j$ vertices of layer $i$ of $G_1'$
reachable from $s^1$. Then, there will be a path from $s_{i,j}$ to $s_{i+1,j'}$
through $C_{i,j}$ if and only if the number of vertices of layer $i+1$ of
$G_1'$ reachable from $s^1$ is $j'$. If we achieve this property we are done,
as any path starts from $s_{1,1}$ and it is the case that exactly $1$ vertex of
layer $1$ is reachable from $s^1$ ($s^1$ itself), so we can proceed by
induction and we see that there is a path to $s_{n,n-1}$ (hence to $t_2$) if
and only if $n-1$ vertices of the last layer of $G_1'$ are reachable.  Hence,
there will be a $s_2\to t_2$ path in $G_2$ if and only if there is no $s_1\to
t_1$ path in $G_1$.

Let us now describe the construction of the checker $C_{i,j}$. For each
$\alpha,\beta \in \{0,\ldots,n\}$ we construct a vertex $x_{i,j,\alpha,\beta}$.
We place an incoming non-annotated arc to $x_{i,j,0,0}$ from $s_{i,j}$. The
intended meaning of these vertices is the following: assuming that $j$ is
correct (that is, there are exactly $j$ vertices of layer $i$ reachable from
$s^1$ in $G_1'$), there should exist a path from $s_{i,j}$ to
$x_{i,j,\alpha,\beta}$ if and only if among the $\alpha$ first vertices of
layer $i+1$ of $G_1'$ there are exactly $\beta$ which are reachable from $s^1$.
Clearly, the property holds for $x_{i,j,0,0}$. Given this intended meaning we
place a non-annotated arc from $x_{i,j,n,\beta}$ to $s_{i+1,\beta}$ for all
$\beta\in [n]$. Now if we establish that the vertices $x_{i,j,\alpha,\beta}$
satisfy the property we stated, then the checker gadget functions correctly.
What remains then is to add some gadgets that imply this property.

For each $\alpha\in\{0,\ldots,n-1\}$ and $\beta\in\{0,\ldots,n-1\}$ we will
construct a gadget called $\mathrm{Adj}_{i,j,\alpha,\beta}$ and a gadget called
$\mathrm{NonAdj}_{i,j,\alpha,\beta}$, each with a single input and output
vertex.  The intended meaning is that a path should exist from the input to the
output of the $\mathrm{Adj}$ gadget if and only if the $(\alpha+1)$-th vertex
of layer $i+1$ of $G_1'$ is reachable from $s^1$; such a path should exist in
the $\mathrm{NonAdj}$ gadget if and only if the same vertex is \emph{not}
reachable from $s^1$. Given this meaning we place a non-annotated arc from
$x_{i,j,\alpha,\beta}$ to the input of $Adj_{i,j,\alpha,\beta}$ and from the
output of that gadget to $x_{i,j,\alpha+1,\beta+1}$; and an annotated arc from
$x_{i,j,\alpha,\beta}$ to the input of $NonAdj_{i,j,\alpha,\beta}$ and from the
output of that gadget to $x_{i,j,\alpha+1,\beta}$. What remains is to describe
these gadgets and prove they work as expected.

The $\mathrm{Adj}_{i,j,\alpha,\beta}$ gadget is conceptually simple. Suppose
that $v$ is the $(\alpha+1)$-th vertex of $V_1$ and we are testing if $v^{i+1}$
is reachable from $s^1$ in $G_1'$. The gadget is simply a copy of the first
$i+1$ layers of $G_1'$, with input vertex $s^1$ and output vertex $v^{i+1}$.
Clearly, there is a path from the input to the output of the gadget if and only
if $v^{i+1}$ is reachable from $s^1$ in $G_1'$.

The $\mathrm{NonAdj}_{i,j,\alpha,\beta}$ gadget is more complicated as it
relies on the key insight of the Immerman–Szelepcsényi theorem which is that we
can certify that a vertex $v$ is not reachable from $s$ at distance $i+1$ if we
know how many vertices are reachable at distance $i$, because we can list all
such vertices, verify that their cardinality is correct, and verify that none
of them have an arc to $v$. For each $\gamma \in \{0,\ldots,n\}$ and $\delta\in
\{0,\ldots, j\}$ we construct a vertex $y_{i,j,\alpha,\beta,\gamma,\delta}$,
whose intended role is the following: assuming $j$ is correct, there is a path
from the input of the gadget to $y_{i,j,\alpha,\beta,\gamma,\delta}$ if and
only if among the first $\gamma$ vertices of the $i$-th layer of $G_i$ there
are at least $\delta$ vertices which are reachable from $s^1$ but which do not
have an arc that leads to $v$ under the current assignment, where again $v$ is
the $(\alpha+1)$-th vertex of $V_1$. The input vertex of the gadget is
$y_{i,j,\alpha,\beta,0,0}$ and the output is $y_{i,j,\alpha,\beta,n,j}$. For
all $\gamma\in \{0,\ldots,n-1\}$ and $\delta\in \{0,\ldots, j\}$ we add a
non-annotated arc from $y_{i,j,\alpha,\beta,\gamma,\delta}$ to
$y_{i,j,\alpha,\beta,\gamma+1,\delta}$. The intended meaning is that if among
the first $\gamma$ vertices of the $i$-th layer there are $\delta$ satisfying
the property above, then there certainly are at least $\delta$ vertices
satisfying this property among the first $\gamma+1$ vertices. What remains is
to add a gadget so that we can transition from
$y_{i,j,\alpha,\beta,\gamma,\delta}$ to
$y_{i,j,\alpha,\beta,\gamma+1,\delta+1}$ if and only if the $(\gamma+1)$-th
vertex for the $i$-th layer, call it $u^i$, satisfies the property. If there is
a non-annotated arc from $u^i$ to $v^{i+1}$, we do nothing. If there is no such
arc or the arc is annotated, we need to check if $u^i$ is reachable from $s^1$.
We therefore add an $\mathrm{Adj}$ gadget checking this, and add a
non-annotated arc from $y_{i,j,\alpha,\beta,\gamma,\delta}$ to the input of
that gadget. From the output of that gadget we add an arc to
$y_{i,j,\alpha,\beta,\gamma+1,\delta+1}$ annotated as follows: if
$(u^i,v^{i+1})$ is not an arc of $G_1'$, then we add no annotation; otherwise
$(u^i,v^{i+1})$ is annotated with a literal $\ell$, and we use the annotation
$\neg \ell$ for the new arc. This completes the construction.  

Correctness now follows from the description of the gadgets. It is worth noting
that a vertex $u^i$ can be used as part of the certificate that $v^{i+1}$ is
not reachable only when the (possible) arc $(u^i,v^{i+1})$ cannot be used
either because it does not exist or because of its annotation. Since we have
added the same annotation negated to the output arc of the $\mathrm{NonAdj}$
gadget, this correctly checks if $u^i$ is a valid part of the certificate that
$v^{i+1}$ is not reachable. The rest of the proof follows along the lines of
the classical Immerman–Szelepcsényi theorem \cite{Immerman88}.  \end{proof}

\section{Circuit-SAT and Horn Backdoors}\label{sec:horn}

Our goal in this section is to prove the following theorem:

\begin{theorem}\label{thm:horn}

The following statements are equivalent:

\begin{enumerate}

\item (\circseth\ is false) There exists $\eps>0$ and an algorithm that takes as input a Boolean
circuit of size $s$ on $n$ inputs and decides if the circuit is satisfiable in
time $(2-\eps)^ns^{O(1)}$.

\item (\hornseth\ is false) There exists $\eps>0$ and an algorithm that takes as input a CNF formula
$\phi$ together with a strong Horn backdoor of size $b$ and decides if $\phi$
is satisfiable in time $(2-\eps)^b|\phi|^{O(1)}$.

\item (\wpseth\ is false, weak version) There exist $\eps>0, k_0>0$ and an
algorithm which for all $k>k_0$ takes as input a \emph{monotone} Boolean
circuit of size $s$ and decides if the circuit can be satisfied by setting at
most $k$ inputs to True in time $O(s^{(1-\eps)k})$.

\item (\wpseth\ is false, strong version) There exists $\eps>0$ and an
algorithm which takes as input a Boolean circuit of size $s$ on $n$ inputs and
an integer $k$ and decides if the circuit can be satisfied by setting exactly
$k$ of its inputs to True in time $n^{(1-\eps)k}s^{O(1)}$.

\end{enumerate}

\end{theorem}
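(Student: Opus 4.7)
The plan is to establish the cycle of implications $1 \Rightarrow 2$, $1 \Leftrightarrow 4$, $4 \Rightarrow 3$, $3 \Rightarrow 1$, and $2 \Rightarrow 1$, which together yield all four equivalences. The easy direction $1 \Rightarrow 2$ follows by hardcoding a polynomial-time Horn-SAT solver into a circuit whose $b$ inputs are the backdoor variables: the resulting Circuit-SAT instance has size polynomial in $|\phi|$ and is satisfiable iff $\phi$ is, giving running time $(2-\eps)^b |\phi|^{O(1)}$. The equivalence $1 \Leftrightarrow 4$ is handled in both directions by the standard ``group-and-guess'' encoding. For $1 \Rightarrow 4$ I would convert a weight-$k$ SAT instance over $N$ variables into a Circuit-SAT instance on $k \log N$ inputs encoding the indices of the chosen variables; for $4 \Rightarrow 1$ I would partition the $n$ inputs of the target circuit into $k$ groups of $n/k$ bits, introduce $N = k \cdot 2^{n/k}$ auxiliary variables $x_{i,\sigma_i}$ (one per partial group-assignment), and build a general circuit that checks ``exactly one per group'' and evaluates the original circuit on the decoded $\sigma$. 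For $k$ a sufficiently large constant (depending on $\eps$), the strong W[P] running time $N^{(1-\eps)k} \mathrm{poly}(s) = 2^{(1-\eps)n + O(n/k)} \mathrm{poly}(s)$ matches $(2-\eps')^n \mathrm{poly}(s)$. The implication $4 \Rightarrow 3$ is essentially trivial, since monotone is a special case of general and, by monotonicity, ``at most $k$'' and ``exactly $k$'' coincide.

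For $3 \Rightarrow 1$ I would use the same group encoding but now force the constructed circuit to be monotone. The key observation is that the group encoding allows expressing \emph{both} polarities of each input bit monotonely: ``$y_j = 1$'' is the OR of $x_{i(j), \sigma_i}$ over those $\sigma_i$ whose $p(j)$-th bit is $1$, and symmetrically for ``$y_j = 0$''. Converting $C$ to NNF (pushing negations to the inputs) and substituting the appropriate monotone formula for each literal occurrence yields a monotone circuit $M$; adding a monotone ``at least one per group'' constraint (AND over groups of OR over that group's inputs) ensures that weight-$\le k$ satisfying assignments of $M$ are forced to be exactly one-per-group and thus correspond exactly to valid encodings $\sigma$ with $C(\sigma) = 1$. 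Running the weak W[P] algorithm yields time $O(s_M^{(1-\eps)k}) = (2-\eps')^n \mathrm{poly}(s)$.

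The most delicate direction is $2 \Rightarrow 1$: given a circuit $C$ on $n$ inputs, I must construct a CNF $\phi$ with Horn backdoor of size exactly $n$ such that $\phi$ is satisfiable iff $C$ is. The plan is to form $C' = \neg C$, convert $C'$ to NNF, and apply the classical min-model Horn encoding of monotone Circuit Value: introduce a Horn variable $v_g$ for each gate and each literal occurrence, add only the \emph{upward}-propagation clauses (for AND gates, $(v_a \land v_b \to v_g)$; for OR gates, $(v_a \to v_g)$ and $(v_b \to v_g)$), connect literal variables to the backdoor inputs via clauses that become Horn after fixing, and append the negative unit clause $\neg v_{\mathrm{out}}$. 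After fixing the backdoor $\sigma$, the residual is Horn and its minimum model has $v_g = 1$ iff gate $g$ of $C'$ evaluates to $1$ on $\sigma$; hence the residual is satisfiable iff the minimum model has $v_{\mathrm{out}} = 0$, iff $C'(\sigma) = 0$, iff $C(\sigma) = 1$. Therefore $\phi$ is satisfiable iff $C$ is, and running the assumed Horn-backdoor algorithm yields Circuit-SAT in $(2-\eps)^n \mathrm{poly}(s)$.

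The main obstacle is precisely this last step. A naive Horn-encoding of $C$ itself fails because OR gates require the non-Horn ``downward'' clause $(\neg v_g \lor v_a \lor v_b)$, which has two positive non-backdoor literals even after fixing the backdoor. The trick that makes everything work is to instead Horn-encode $\neg C$ using only upward implications and exploit the minimum-model semantics of Horn-SAT: satisfiability of the residual corresponds precisely to the negation of $C'$, i.e., to $C$ evaluating to $1$. The $3 \Rightarrow 1$ monotone dual-rail construction is also subtle but follows once one notices that the group encoding makes both polarities of each original input simultaneously available as monotone disjunctions of the auxiliary variables.
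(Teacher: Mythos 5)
Your overall structure is correct and closely parallels the paper's (which proves $1\Rightarrow 2$, $2\Rightarrow 1$, $3\Rightarrow 1$, $1\Rightarrow 4$, and notes $4\Rightarrow 3$). The implications $1\Rightarrow 2$, $1\Rightarrow 4$, $3\Rightarrow 1$, and $4\Rightarrow 3$ you sketch are essentially identical to the paper's Lemmas \ref{lem:sathorn1}, \ref{lem:sathorn4}, \ref{lem:sathorn3}, and the trailing observation (your explicit note that ``at most $k$'' and ``exactly $k$'' coincide on monotone circuits is a clean way to phrase what the paper handles only via a remark on arity and the size arithmetic); your direct $4\Rightarrow 1$ is correct but redundant given $4\Rightarrow 3\Rightarrow 1$.

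The one genuine difference is the Horn encoding in $2\Rightarrow 1$. The paper (Lemma \ref{lem:sathorn2}) keeps $C$ itself (in NNF form) and uses a \emph{dual-rail} Horn encoding: two variables $c^p,c^n$ per gate, a consistency clause $(\neg c^p\lor\neg c^n)$, Horn clauses propagating truth both upward and downward along the DAG, non-Horn equivalence clauses only at the input/negation layer (which become Horn once the backdoor is fixed), and unit clauses $c^p,\neg c^n$ forcing the output True; correctness is then proved by an induction showing exactly one of $c^p,c^n$ is True in any satisfying assignment. You instead propose a \emph{single-rail} encoding of the negated circuit $\neg C$ (in NNF), using only upward ``definite Horn'' clauses and the unit clause $\neg v_{\mathrm{out}}$, and arguing via the minimum-model semantics of definite Horn formulas that the residual is satisfiable iff $(\neg C)(\sigma)=0$, i.e.\ $C(\sigma)=1$. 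Both are classical CVP-to-Horn-SAT reductions and both are correct; yours is somewhat more economical (one variable per gate, no downward-propagation clauses) but depends on the slightly less direct observations that one must first negate $C$ and that definite Horn formulas admit a unique minimum model. Either way, you correctly identify that the clauses attaching literal leaves to the backdoor variables may have two positive literals before fixing and are Horn afterward, which is exactly the point the paper makes about its third/fourth-step clauses. I see no gap.
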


Before we go on let us give some intuition. Similarly to \cref{sec:logpw}, it
will be helpful to recall some basic facts from computational complexity
theory. For the first two statements, we are dealing with the satisfiability
problem for circuits and \textsc{SAT} with a Horn backdoor.  Intuitively, it is
natural to expect these two problems to be equivalent because once we fix
either the input to a given circuit or the assignment to the Horn backdoor,
deciding if the circuit is satisfied or if the backdoor assignment can be
extended to a satisfying assignment are problems with the same classical
complexity, that is, they are P-complete \cite{0072413}. As in \cref{sec:logpw}
this intuition is not in itself sufficient to imply the equivalence, because
what we want is not to claim that once we have an assignment, checking it is of
the same complexity in both cases, but rather that if we can avoid checking all
assignments in one case, we can avoid it in both. Nevertheless, the classical
results on the P-completeness of these two problems are indeed the main
ingredient we need, and our proof mostly consists of verifying that we can
extend them to the new setting. We then go on to consider the prototypical
W[P]-complete problem of the third statement. It is natural to expect that
since the first and third statement refer to circuit satisfiability problems,
without any particular restriction on the circuits, it should be possible to
transform one to the other, and indeed we are able to do so using standard
techniques, similar to what we did in \cref{thm:ald}.

\begin{proof}[Proof of \cref{thm:horn}]

The proof is given in the four lemmas below which show the implications
$1\Rightarrow 2$ (\cref{lem:sathorn1}), $2\Rightarrow 1$ (\cref{lem:sathorn2}),
$3\Rightarrow 1$ (\cref{lem:sathorn3}), $1\Rightarrow 4$ (\cref{lem:sathorn4}).

We note that the implication $4\Rightarrow 3$ is easy to see. First, statement
3 talks about monotone circuits, while statement 4 about general circuits.
Second, if we have an algorithm running in time $n^{(1-\eps)k}s^{O(1)}$ we can
see that $n\le s$ (as the input gates are part of the circuit) so the running
time is at most $O(s^{(1-\eps)k+c})$ for some constant $c$. We can now set
$k_0=\frac{2c}{\eps}$ and we then have $\eps k> \frac{\eps k}{2} + c$ if
$k>k_0$, therefore the running time is at most $O(s^{(1-\frac{\eps}{2})k})$.
\end{proof}

\subsection{Circuit-SAT and Horn Backdoors}

\begin{lemma}\label{lem:sathorn1} Statement 1 of \cref{thm:horn} implies
statement 2.  \end{lemma}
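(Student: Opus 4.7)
The plan is to turn any CNF formula $\phi$ equipped with a strong Horn backdoor $B$ of size $b$ into an equivalent Boolean circuit whose input gates are exactly the $b$ variables of $B$, whose size is $\mathrm{poly}(|\phi|)$, and which is satisfiable if and only if $\phi$ is. Running the assumed \circseth-refuting algorithm on this circuit then decides $\phi$ in time $(2-\eps)^{b}|\phi|^{O(1)}$, matching the bound required by \hornseth. Conceptually, this mirrors the classical proof that Horn-SAT is in P (indeed P-complete): we are translating a polynomial-time verifier for Horn satisfiability into a polynomial-size circuit, parameterized by the backdoor assignment.

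The key structural observation is a consequence of the strong backdoor hypothesis: for every assignment $\sigma$ to $B$, the residual $\phi|_\sigma$ is Horn. In particular, taking $\sigma$ to be the assignment falsifying every backdoor literal in a given clause $c$ shows that each clause must split as $c=c_B\lor c_X$, where $c_B$ is a disjunction of literals on $B$ and $c_X$ is already a Horn clause on $X=\mathrm{vars}(\phi)\setminus B$. For each clause $c$ I introduce a gate $A_c$ over the inputs $y_1,\dots,y_b$ computing $\neg c_B(\sigma)$, i.e.\ indicating that $c$ is ``active'' under $\sigma$ and that $c_X$ therefore survives in $\phi|_\sigma$. The $A_c$ gates have size $O(|c_B|)$ and can be computed in parallel from the inputs.

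It remains to simulate unit propagation on the (input-dependent) residual Horn formula inside the circuit. For each non-backdoor variable $x\in X$ and each $t\in\{0,1,\dots,N\}$ with $N=|X|$, I introduce a gate $M_{x,t}$ intended to evaluate to True iff $x$ is forced to True after $t$ rounds of propagation in $\phi|_\sigma$. Setting $M_{x,0}=\bot$ and using the Horn structure, I define
\[ M_{x,t+1} \;=\; M_{x,t} \;\lor\; \bigvee_{c\,:\,\text{head}(c_X)=x}\Bigl( A_c \,\land\, \bigwedge_{z\in\mathrm{body}(c_X)} M_{z,t} \Bigr), \]
where the disjunction ranges over all $c\in\phi$ whose Horn part $c_X$ has positive literal $x$ (treating a positive unit clause as having empty body). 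After $N$ rounds $M_{x,N}$ equals the indicator that $x$ is in the minimum model of $\phi|_\sigma$. The output gate is then the conjunction, over all clauses $c$ with $c_X$ purely negative, of $\neg\bigl(A_c\land\bigwedge_{z\in c_X} M_{z,N}\bigr)$, which evaluates to True iff no negative clause is falsified by the minimum model, i.e.\ iff $\phi|_\sigma$ is satisfiable. The overall size is $O(N\cdot m + N\cdot m\cdot N)=\mathrm{poly}(|\phi|)$ and, after replacing large-fan-in gates by binary trees, bounded fan-in.

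The main obstacle is conceptual rather than calculational: one must confirm that the strong backdoor hypothesis really does force the clean decomposition $c=c_B\lor c_X$ with $c_X$ Horn, and that Horn satisfiability genuinely coincides with ``the minimum model satisfies every negative clause''. Both are standard, but both are essential: the first makes the construction of $M_{x,t}$ syntactically well-defined (each clause contributes to at most one head), and the second certifies correctness of the output gate. Given these, the reduction is polynomial in $|\phi|$ and introduces no auxiliary inputs, so invoking the \circseth-refuting algorithm on the resulting circuit yields the claimed $(2-\eps)^{b}|\phi|^{O(1)}$ bound.
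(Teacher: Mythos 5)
Your proof is correct and follows essentially the same approach as the paper: encode the forward-chaining (unit-propagation) algorithm for Horn satisfiability as a circuit of $|X|$ layers whose inputs are the backdoor variables, and then run the assumed circuit-SAT algorithm. The paper's gates $x_{i,t}$ and $c_{i,j,t}$ are your $M_{x,t}$ and the inlined $A_c\land\bigwedge_z M_{z,t}$ respectively; the only cosmetic difference is that the paper's output layer re-checks all clauses of $\phi$ against the fixed-point assignment $x_{i,n+1}$, whereas you exploit that definite clauses are automatically satisfied at the least fixed point and only test the purely negative ones, which is a small simplification of the correctness argument.
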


\begin{proof}

Suppose that we have an algorithm solving \textsc{Circuit SAT} in time
$(2-\eps)^{n}s^{O(1)}$, for some fixed $\eps>0$. We are given a CNF formula
$\phi$ and a strong Horn backdoor of size $b$ of $\phi$. Our plan is to
construct a Boolean circuit $C$ satisfying the following:

\begin{enumerate}

\item $C$ is satisfiable if and only if $\phi$ is.

\item $C$ has $b$ inputs and can be constructed in time $|\phi|^{O(1)}$.

\end{enumerate}

Clearly, if $C$ can be constructed in time polynomial in $|\phi|$, then its
size is also polynomial in $|\phi|$, so the supposed algorithm for
\textsc{Circuit SAT} gives us the promised algorithm for \textsc{SAT} with a
Horn backdoor.

The high-level idea now is to implement a circuit which takes as input the
values of the $b$ backdoor variables of $\phi$  and then runs the standard
polynomial-time algorithm to check satisfiability of Horn formulas. Recall that
this algorithm starts with the all-False assignment to all (non-backdoor)
variables, and then iterates, at each step setting a variable to True if this
is forced. This will result in either a satisfying assignment or a
contradiction after at most $n$ iterations, where $n$ is the number of
(non-backdoor) variables.

Suppose that the variables of $\phi$ that belong to the backdoor are
$y_1,\ldots,y_b$ and the remaining variables are $x_1,\ldots,x_n$. Assume
$\phi$ has $m$ clauses, numbered $c_1,\ldots,c_m$.

We construct a circuit with the following gates:

\begin{enumerate}

\item $b$ input gates, identified with the variables $y_i, i\in[b]$.

\item For each non-backdoor variable $x_i, i\in[n]$, we construct $n+2$
disjunction ($\lor$) gates $x_{i,t}, t\in\{0,\ldots,n+1\}$. The intuitive
meaning is that $x_{i,t}$ should evaluate to True if the Horn satisfiability
algorithm would set variable $x_i$ to True after $t$ iterations of the main
loop. One of the inputs of each $x_{i,t}$ is the gate $x_{i,t-1}$, for all
$i,t\in[n+1]$. We specify the remaining inputs below.

\item For each clause $c_j, j\in[m]$, we construct a disjunction ($\lor$) gate
using the variables $x_{i,n+1}$ and the input gates as inputs. In particular,
if $x_i$ appears positive in $c_j$ we connect $x_{i,n+1}$ to the gate
representing $c_j$, and if it appears negative, we connect $x_{i,n+1}$ to a
negation ($\neg$) gate whose output is sent to the gate representing $c_j$, and
similarly for input gates.  We route the output of all gates representing
clauses to a $\land$ gate which is the output of the whole circuit.
Intuitively, the circuit contains a copy of $\phi$, with the $x_i$ variables
replaced by the gates $x_{i,n+1}$.

\end{enumerate}

To complete the construction we describe the gates we need to add so that the
$x_{i,t}$ gates play their intuitive role. For each variable $x_i, i\in[n]$,
for each $t\in[n+1]$ and each clause $c_j, j\in[m]$, such that $x_i$ appears
positive in $c_j$ we add a conjunction ($\land$) gate $c_{i,j,t}$ whose output
is sent to $x_{i,t}$. The inputs to $c_{i,j,t}$ are all gates $x_{i',t-1}$ such
that $x_{i'}$ appears (negative) in $c_j$, all gates $y_{i''}$ such that
$y_{i''}$ appears negative in $c_j$, and the negation of each gate $y_{i''}$,
such that $y_{i''}$ appears positive in $c_j$. Finally, set all $x_{i,0}$
gates, for $i\in[n]$ to False and simplify the circuit accordingly.  This
completes the construction, which can clearly be carried out in polynomial
time. We have described a circuit with unbounded fan-in, but it is easy to
transform it into a circuit of bounded fan-in without significantly increasing
its size, in polynomial time. Note also that we have used the fact that if
$x_i$ appears positive in $c_j$ every other non-backdoor variable $x_{i'}$ must
appear negative in $c_j$; this follows from the fact that if two variables
appear positive in a clause, at least one of them must be in the backdoor
\cite{NishimuraRS04}.

It is easy to see that if the circuit is satisfiable then so is the formula,
because we can extract an assignment from the values of the input gates and the
gates $x_{i,n+1}$. This assignment satisfies the formula since we have added a
copy of $\phi$ at the end of the circuit, with inputs the $y_i$ and $x_{i,n+1}$
gates. 

For the converse direction, we first make some easy observations: for each
$i\in[n]$ and $t\in[n+1]$ we have that if $x_{i,t-1}$ evaluates to True, then
$x_{i,t}$ evaluates to True, since $x_{i,t-1}$ is an input to $x_{i,t}$. As a
result, as $t$ increases, the set of gates $x_{i,t}$ which evaluate to True can
only monotonically increase. Furthermore, if for some $t\in[n+1]$ we have for
all $i\in[n]$ that $x_{i,t-1}$ and $x_{i,t}$ evaluate to the same value, then
for all $t'>t$ and for all $i$ we have $x_{i,t'}$ evaluates to the same value
as $x_{i,t}$. This is because the $c_{i,j,t}$ gates and the $c_{i,j,t+1}$ gates
will evaluate to the same values, forcing $x_{i,t+1}$ to have the same value as
$x_{i,t}$, and we can proceed by induction. It now follows that $x_{i,n}$
evaluates to the same value as $x_{i,n+1}$ for all $i\in[n]$. Indeed, there are
$n+2$ possible values of $t$ (time steps), at each step the set of True gates
can only monotonically increase, so we have at most $n+1$ distinct sets of True
gates. As a result, by pigeonhole principle, there exists $t$ such that
$x_{i,t}$ evaluates to the same value as $x_{i,t+1}$ for all $i\in [n]$, which
implies the same for $t=n$.

Suppose now that $\phi$ is satisfiable and fix a satisfying assignment. Using
the same assignment restricted to the backdoor variables as input to our
circuit we claim that we obtain a satisfying assignment to the circuit. To see
this, we claim that if for some $t\in[n+1]$ we have that $x_{i,t}$ evaluates to
True, then the satisfying assignment to $\phi$ must also set $x_i$ to True. For
the sake of contradiction, suppose that $t$ is minimum such that $x_{i,t}$
evaluates to True but the satisfying assignment sets $x_i$ to False.  Since
$x_{i,t}$ is True and $t$ is minimum, one of the $c_{i,j,t}$ gates must
evaluate to True.  But this means that the satisfying assignment sets all
literals of the clause $c_j$ to False, except $x_i$, because by the minimality
of $t$ we know that for each $x_{i',t-1}$ which evaluates to True, $x_{i'}$ is
True in the satisfying assignment.  Hence, the satisfying assignment must set
$x_i$ to True.

We can now conclude that if $x_{i,n+1}$ evaluates to True, then the satisfying
assignment to $\phi$ sets $x_i$ to True. Suppose then for contradiction that
the circuit outputs False, which implies that there is a gate added in the
third step, representing $c_j$, which evaluates to False. It must be the case
that $c_j$ has as input an $x_{i,n+1}$ which evaluates to False while $x_i$ is
set to True in the satisfying assignment, because in all other cases the inputs
to $c_j$ would be the same as those dictated by the satisfying assignment to
$\phi$ and $c_j$ would evaluate to True. Furthermore, $x_i$ must appear
positive in $c_j$ and is therefore the only non-backdoor variable which appears
positive in $c_j$. We now claim that the gate $c_{i,j,n+1}$ must evaluate to
True, reaching a contradiction.  Indeed, because $x_{i',n}$ and $x_{i',n+1}$
evaluate to the same value for all $i'\in[n]$, all non-backdoor variables of
$c_j$ except $x_i$ (which all appear negative in $c_j$) are set to True in
$x_{i,n+1}$, hence in $x_{i,n}$, and backdoor variables are shared between the
gates $c_j, c_{i,j,n+1}$, so all inputs to $c_{i,j,n+1}$ evaluate to True,
contradiction.  Hence, we have an assignment satisfying the circuit.
\end{proof}

\begin{lemma}\label{lem:sathorn2} Statement 2 of \cref{thm:horn} implies
statement 1.  \end{lemma}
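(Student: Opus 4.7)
The plan is to imitate the standard P-hardness reduction from \textsc{Circuit-Value} to \textsc{Horn-SAT}, but arranged so that the $n$ input variables of the circuit themselves play the role of a strong Horn backdoor. Given a Boolean circuit $C$ of size $s$ with inputs $x_1, \ldots, x_n$, I will build in polynomial time a CNF formula $\phi$ with $|\phi| = O(s)$ such that $\phi$ is satisfiable if and only if $C$ is, and such that $B = \{x_1, \ldots, x_n\}$ is a strong Horn backdoor of $\phi$. Running the supposed algorithm of statement 2 on $(\phi, B)$ then decides \textsc{Circuit-SAT} in time $(2-\eps)^n s^{O(1)}$, yielding statement 1.

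The construction will use a \emph{dual-rail} encoding. Assume without loss of generality that every internal gate of $C$ is AND, OR, or NOT of fan-in at most $2$. For each internal gate $g$ I will introduce two fresh non-backdoor variables $v_g^+$ and $v_g^-$ intended to mean ``$g$ evaluates to True'' and ``$g$ evaluates to False''. For each wire $w$ let $p(w), q(w)$ denote its ``truth'' and ``falsity'' literals: $p(w)=v_g^+$, $q(w)=v_g^-$ if $w$ is the output of a gate $g$, and $p(w)=x_i$, $q(w)=\neg x_i$ if $w$ is the input $x_i$. For each AND gate $g$ with inputs $a,b$ I will add the three clauses encoding $v_g^+ \leftrightarrow p(a)\wedge p(b)$ together with the ``forward falsity'' clauses $(\neg q(a) \vee v_g^-)$ and $(\neg q(b) \vee v_g^-)$; for OR gates, the symmetric three clauses encoding $v_g^- \leftrightarrow q(a)\wedge q(b)$ and the ``forward truth'' clauses $(\neg p(a) \vee v_g^+)$, $(\neg p(b) \vee v_g^+)$; for NOT gates, the four clauses swapping the two rails. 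Finally I will add the output unit clause $(v_{\mathrm{out}}^+)$ and a \emph{consistency} clause $(\neg v_g^+ \vee \neg v_g^-)$ for every internal gate. A routine case analysis on the possible forms of $p(\cdot)$ and $q(\cdot)$ shows that each clause contains at most one positive non-backdoor literal, so $B$ is indeed a strong Horn backdoor, and clearly $|\phi|=O(s)$.

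For correctness, the easy direction sets $v_g^+ = \mathbf{1}[g(\alpha) = T]$ and $v_g^- = \mathbf{1}[g(\alpha) = F]$ from any satisfying circuit assignment $\alpha$ and checks all clauses directly. For the converse, given a satisfying CNF assignment with restriction $\alpha$ to $B$, I will prove by induction on gate depth that $v_g^+ = \mathbf{1}[g(\alpha) = T]$ and $v_g^- = \mathbf{1}[g(\alpha) = F]$ for every internal gate $g$. Combined with $v_{\mathrm{out}}^+ = 1$ (forced by the output unit clause), this yields $C(\alpha) = T$. At each inductive step, whichever rail carries the biconditional encoding is pinned to the correct value directly; the other rail is then forced to agree via its one-sided forward clauses together with the consistency clause.

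The hard part will be the observation that the naive biconditional $v_g^+ \leftrightarrow p(a) \vee p(b)$ for an OR gate is \emph{not} Horn, since $(\neg v_g^+ \vee p(a) \vee p(b))$ contains two positive non-backdoor literals whenever $a,b$ are themselves internal gates (and symmetrically for AND on the $-$-rail). The key idea that unlocks the plan is that this missing direction can be recovered indirectly through the dual rail: if by induction $p(a) = p(b) = 0$, then $q(a) = q(b) = 1$, which forces $v_g^- = 1$ via the $-$-rail biconditional, and then $(\neg v_g^+ \vee \neg v_g^-)$ forces $v_g^+ = 0$. It is precisely this interaction between the two rails mediated by the consistency clauses that makes the encoding simultaneously Horn and correct.
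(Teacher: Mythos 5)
Your proposal is correct and takes essentially the same approach as the paper: a dual-rail P-hardness-style reduction from \textsc{Circuit-Value}/\textsc{Circuit-SAT} to \textsc{Horn-SAT}, with the circuit inputs serving as the backdoor and gate correctness proved by induction on depth. The only cosmetic differences are that the paper first pushes negations to the inputs and introduces separate rail variables $c^p,c^n$ for input gates, whereas you handle \textsc{NOT} gates in place by swapping rails and identify the input rails with $x_i,\neg x_i$ directly; the per-gate Horn clauses and the converse-direction induction (one rail via its biconditional, the other via forward clauses plus consistency) are the same.
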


\begin{proof}

Suppose we have an algorithm solving \textsc{SAT}, given a Horn backdoor of
size $b$, in time $(2-\eps)^b|\phi|^{O(1)}$. We are now given a circuit $C$
with $n$ inputs and size $s$. Our plan is to construct from $C$ a CNF formula
$\phi$ and a Horn backdoor such that:

\begin{itemize}

\item $\phi$ is satisfiable if and only if $C$ is.

\item The backdoor has size $n$ and $\phi$ can be constructed in time
$s^{O(1)}$.

\end{itemize}

Clearly, if $\phi$ can be constructed in time $s^{O(1)}$, it also has size
polynomial in $s$, so the supposed algorithm for \textsc{SAT} with a Horn
backdoor would give us an algorithm for \textsc{Circuit-SAT}.

The high-level idea is now to revisit the standard proof that \textsc{Horn-SAT}
is P-complete, which relies on a reduction from the \textsc{Circuit Value
Problem}. More precisely, assume that the given circuit is represented by a DAG
that contains only $\land,\lor$, and $\neg$ gates of maximum fan-in $2$. We
first transform the circuit (in polynomial time) so that $\neg$ gates only
appear directly connected to the inputs. Since the circuit is a DAG, we can
assume that the gates are topologically sorted.  We construct $\phi$ as
follows:

\begin{enumerate}

\item For each input gate of $C$ we construct a variable $x_i$, $i\in[n]$.
These variables will form the Horn backdoor.

\item For each input, $\land$, and $\lor$ gate $c$ of $C$, we construct two
variables: $c^p$ and $c^n$. Intuitively, in circuit assignments that set $c$ to
True, $c^p$ must be set to True and $c^n$ to False, while the opposite must
happen for circuit assignments that set $c$ to False. Thus, for each such gate
we add the clause $(\neg c^p\lor \neg c^n)$.

\item For each input gate $c$ corresponding to variable $x_i$ we add the
clauses $(x_i\leftrightarrow c^p)$ and $(x_i\leftrightarrow \neg c^n)$.

\item For each negation gate $c$, whose input corresponds to the variable
$x_i$, we add the clauses $(x_i\leftrightarrow \neg c^p)$ and
$(x_i\leftrightarrow c^n)$.

\item For each conjunction gate $c$, whose inputs are gates $c_1,c_2$ we add
the clauses $(\neg c_1^p \lor \neg c_2^p \lor c^p), (\neg c_1^n\lor c^n), (\neg
c_2^n\lor c^n), (\neg c^p\lor c_1^p), (\neg c^p\lor c_2^p)$ 

\item For each disjunction gate $c$, whose inputs are gates $c_1,c_2$ we add
the clauses $(\neg c_1^n \lor \neg c_2^n \lor c^n), (\neg c_1^p\lor c^p), (\neg
c_2^p\lor c^p), (\neg c^n\lor c_1^n), (\neg c^n\lor c_2^n)$ 

\item For the unique output gate $c$ of the circuit we add the unit clauses
$c^p$ and $\neg c^n$.

\end{enumerate}

The construction can be performed in polynomial time. Furthermore, all clauses
except those of the third and fourth steps are Horn clauses. The clauses of the
third and fourth steps all have size $2$ and contain a variable from the
backdoor, so we have a valid strong Horn backdoor of size $n$, as promised. It
is now not hard to see that if $C$ has a satisfying assignment, then we can
obtain a satisfying assignment to $\phi$ by giving to the variables of the
backdoor the values of the satisfying assignment to $C$ and then for each gate
$c$ of the circuit setting $c^p, c^n$ to True, False respectively if $c$ is set
to True and to False, True respectively otherwise.

For the converse direction, suppose we have a satisfying assignment to $\phi$.
We claim that using the assignment to the backdoor variables as input to $C$
satisfies the circuit. To see this, we prove by induction that for each gate
$c$, exactly one of $c^p, c^n$ must be set to True in the satisfying
assignment, and $c^p$ is set to True if and only if gate $c$ has value True in
the circuit $C$ for the given input assignment. Note that it is clear that at
most one of $c^p, c^n$ is set to True, due to the clauses of the second step.
Furthermore, for the base case we observe that input and negation gates satisfy
the property, due to the clauses of the third and fourth steps. Consider then a
gate $c$ with inputs $c_1,c_2$, which are assumed to satisfy the property. We
have four cases:

\begin{enumerate}

\item $c$ is a conjunction that has value True in $C$. Then $c_1,c_2$ must
evaluate to True in $C$, since these gates satisfy the property we have $c_1^p,
c_2^p$ set to True, so the clause $(\neg c_1^p\lor \neg c_2^p\lor c^p)$ ensures
that $c^p$ is True (and $c^n$ is False) as desired.

\item $c$ is a conjunction that has value False in $C$. Then one of $c_1,c_2$
must evaluate to False in $C$, say without loss of generality $c_1$, and since
these gates satisfy the property we have $c_1^n$ set to True. So, the clause
$(\neg c_1^n\lor c^n)$ ensures that $c^n$ is True (and $c^p$ is False) as
desired.

\item $c$ is a disjunction that has value True in $C$. Then one of $c_1,c_2$
must evaluate to True in $C$, say without loss of generality $c_1$, and since
these gates satisfy the property we have $c_1^p$ set to True. So, the clause
$(\neg c_1^p\lor c^p)$ ensures that $c^p$ is True (and $c^n$ is False) as
desired.

\item $c$ is a disjunction that has value False in $C$. Then $c_1,c_2$ must
evaluate to False in $C$, since these gates satisfy the property we have
$c_1^n, c_2^n$ set to True, so the clause $(\neg c_1^n\lor \neg c_2^n\lor c^n)$
ensures that $c^n$ is True (and $c^p$ is False) as desired.

\end{enumerate}

Therefore, $\phi$ is satisfiable if and only if $C$ is.  \end{proof}

\subsection{Circuit-SAT and Weighted Circuit-SAT}

\begin{lemma}\label{lem:sathorn3} Statement 3 of \cref{thm:horn} implies
statement 1.  \end{lemma}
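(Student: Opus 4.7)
The plan is to reduce general Circuit-SAT on $n$ inputs to monotone weighted Circuit-SAT (weight at most $k$) on roughly $k\cdot 2^{n/k}$ inputs, for a suitably chosen constant $k>k_0$. Given a general circuit $C$ of size $s$, the first step I would take is to transform $C$ into an equivalent circuit in which all negation gates appear directly on the inputs. For circuits with fan-out greater than one this can be done by keeping for each internal gate $g$ both a ``positive copy'' $g^+$ and a ``negative copy'' $g^-$, and flipping $\land\leftrightarrow\lor$ between the two copies via De Morgan's laws; this incurs only polynomial overhead.

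Next, I would partition the $n$ input variables into $k$ groups $V_1,\ldots,V_k$ of size at most $\lceil n/k\rceil$. For each group $V_i$ and each assignment $\sigma$ to $V_i$, introduce a fresh input variable $y_{i,\sigma}$, for a total of at most $k\cdot 2^{\lceil n/k\rceil}$ new inputs. Now rewrite $C$: replace every occurrence of a positive input literal $x_j$ with $x_j\in V_i$ by $\bigvee_{\sigma:\sigma(x_j)=1}y_{i,\sigma}$, and every occurrence of $\neg x_j$ by $\bigvee_{\sigma:\sigma(x_j)=0}y_{i,\sigma}$. Since the negations have all been consumed by this substitution, the resulting circuit $C^\ast$ is monotone in the $y_{i,\sigma}$. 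Finally, AND $C^\ast$ with the monotone formula $\bigwedge_{i=1}^k\bigvee_{\sigma}y_{i,\sigma}$ to obtain the final monotone circuit $C'$ of size $O(s\cdot 2^{n/k})$.

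The key observation for correctness is that, although the assumed algorithm only bounds the weight by \emph{at most} $k$, the extra top-level conjunct forces at least one True $y_{i,\sigma}$ per group, and with only $k$ inputs total allowed to be True this forces \emph{exactly one} per group. Such patterns are in bijection with assignments to the original variables, and on them $C^\ast$ evaluates as $C$ does, so $C'$ has a weight-$\le k$ satisfying assignment iff $C$ is satisfiable. Running the assumed algorithm on $C'$ costs $O((s\cdot 2^{n/k})^{(1-\eps)k}) = s^{O(1)}\cdot 2^{(1-\eps)n}$ for any fixed $k$; fixing $k:=k_0+1$ and writing $2^{(1-\eps)n}=(2-\eps')^n$ for $\eps':=2-2^{1-\eps}>0$ yields statement~1.

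The step I expect to require the most care is the ``exactly one per group'' encoding: since there is no monotone way to say ``at most one per group,'' a naive reduction would allow multiple conflicting $y_{i,\sigma}$ in the same group to be simultaneously True, spuriously forcing both $x_j$ and $\neg x_j$ to evaluate to True in the substituted circuit; the conjunct $\bigwedge_i\bigvee_\sigma y_{i,\sigma}$ together with the weight budget is precisely what rules this out. The only other delicate point is balancing the choice of $k$: it must exceed $k_0$ yet be independent of $n$ so that $s^{(1-\eps)k}=s^{O(1)}$, which is why $k:=k_0+1$ suffices.
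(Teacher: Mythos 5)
Your proof is correct and follows essentially the same route as the paper: push negations to the input layer, partition the inputs into $k$ groups, introduce one fresh indicator input per (group, local assignment) pair, replace each original input literal by a monotone disjunction over the indicators consistent with it, and finally AND with $\bigwedge_i \bigvee_\sigma y_{i,\sigma}$ so that the weight budget forces exactly one indicator per group. The only difference is a small cleanup: the paper first reduces to the case $s < 2^{\eps n/(2k_0)}$ to bound the new circuit's size purely exponentially, whereas you avoid this by observing directly that $(s\cdot 2^{n/k})^{(1-\eps)k} = s^{O(1)}\cdot 2^{(1-\eps)n}$ for fixed $k$, which is slightly tidier.
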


\begin{proof}

Suppose we have an algorithm which takes as input a monotone boolean circuit
$C'$ of size $S$ and $N$ inputs and an integer $k$ and decides if $C'$ has a
satisfying assignment setting $k$ inputs to True in time $O(S^{(1-\eps)k})$,
for some $\eps>0$, assuming $k>k_0$ for some fixed $k_0$. We will use this
supposed algorithm to solve \textsc{Circuit-SAT}. We are given a circuit $C$
with $n$ inputs and size $s$. We will assume that $s<2^{\frac{\eps}{2k_0}n}$,
because otherwise $2^n\le s^{\frac{2k_0}{\eps}} = s^{O(1)}$, so we immediately
have a polynomial time algorithm for such cases by trying out all input
assignments.

We construct a monotone circuit $C'$ and an integer $k=k_0$ with the following
properties:

\begin{enumerate}

\item $C'$ has a satisfying assignment setting $k$ of its inputs to True if and
only if $C$ is satisfiable.

\item $C'$ can be constructed in time $O(2^{n/k}s)$.

\item $C'$ has $N=O(2^{n/k})$ inputs and size
$S=O(Ns)=O(2^{\frac{n}{k}+\frac{\eps n}{2k}})$.

\end{enumerate}

If we achieve the above, then we obtain the lemma as follows: we can decide
satisfiability of $C$ by running the reduction and then the supposed algorithm
on $C'$. The running time is dominated by the second step which takes time
$O(S^{(1-\eps)k}) = O(2^{(1-\eps)(1+\frac{\eps}{2})n}) = O(2^{(1-\eps')n})$ for
some appropriate $\eps'$, as desired.

Let us then describe the construction of $C'$ in which we start from $C$ and
add some gadgets. We assume that after running standard transformations $C$
only contains negation gates at the first layer, that is, every input of a
negation is an input to $C$.  Partition the inputs of $C$ arbitrarily into $k$
groups of (almost) equal size.  Each group contains at most
$\lceil\frac{n}{k}\rceil$ inputs of $C$.  Consider a group $I_i$ of input gates
of $C$, for $i\in[k]$ and enumerate all $2^{|I_i|} = O(2^{n/k})$ possible
assignments to the inputs of $I_i$. For each such assignment $\sigma$ we
construct a new input gate $g_{i,\sigma}$. We convert the gates of $I_i$ from
input gates to disjunction ($\lor$) gates and for each assignment $\sigma$ we
connect $g_{i,\sigma}$ to all gates of $I_i$ such that $\sigma$ assigns them
value True. Similarly, for each negation gate whose input is in $I_i$ we
convert it into a disjunction gate and for each $\sigma$ we connect
$g_{i,\sigma}$ to all gates which were previously negations such that $\sigma$
was assigning to their input the value False. Since we have converted all
negations to disjunctions the circuit is now monotone.

At this point it is not hard to see that the new circuit has $N=O(k2^{n/k}) =
O(2^{n/k})$ inputs (as $k$ is a fixed constant), its size is at most $O(Ns)$
(in the worst case we connected each new input to each old gate) and that a
satisfying assignment to $C$ exists if and only if there exists a satisfying
assignment to $C'$ that sets exactly one new input gate from each of the $k$
groups to True and all others to False.

We now add some gadgets to ensure that a satisfying assignment of weight $k$
must set one input from each group to True. For each $i\in[k]$ add a
disjunction ($\lor$) gate $t_i$ and connect all $g_{i,\sigma}$ to $t_i$. Add a
conjunction gate that will be the new output of the circuit and feed it as
input all $t_i$, for $i\in[k]$ as well as the output gate of $C$. This
completes the construction. The total size is still $S=O(Ns)$ and the
construction can be performed in linear time in the size of the output. We can
also transform the circuit into an equivalent bounded fan-in circuit without
increasing the size by more than a constant factor. With the extra gates we
added in the last step we ensure that at least one input gate from each group
is set to True, so since we are looking for an assignment of weight exactly
$k$, exactly one input is set to True in each group. \end{proof}

\begin{lemma}\label{lem:sathorn4} Statement 1 of \cref{thm:horn} implies
statement 4. \end{lemma}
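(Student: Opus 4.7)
The plan is to follow the same strategy as in the proof of \cref{lem:ald}, but without needing to worry about controlling the depth of the constructed circuit, since Statement 1 of \cref{thm:horn} puts no depth restriction on the input. Given a weighted Circuit-SAT instance consisting of a circuit $C$ on $n$ inputs of size $s$ together with an integer $k$, I will construct a circuit $C'$ on $N=k\lceil\log n\rceil$ inputs that is satisfiable if and only if $C$ admits a satisfying assignment of weight exactly $k$.

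First, partition the $N$ inputs of $C'$ into $k$ groups of $\lceil\log n\rceil$ variables each; each group is intended to encode in binary the index of one of the $n$ original inputs of $C$ to be set to True. For each original input $x_j$ of $C$ and each group $i\in[k]$, add a sub-circuit of $O(\log n)$ gates computing an indicator $I_{i,j}$ that is True if and only if group $i$ encodes index $j$. Replace $x_j$ inside $C$ by $\bigvee_{i\in[k]} I_{i,j}$, so that an input of $C$ is set to True exactly when at least one group points to it. Finally, for each pair $i\neq i'\in[k]$ add a gadget of size $O(\log n)$ that checks the two encodings differ, and conjoin all such checks with the output of $C$ to form the output of $C'$. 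The resulting circuit has size $(ns)^{O(1)}$ and is built in polynomial time.

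Running the assumed \circseth-disproving algorithm on $C'$ takes time
\[
(2-\eps)^{N}\cdot|C'|^{O(1)} \le (2-\eps)^{k\lceil\log n\rceil}\cdot s^{O(1)} \le n^{\log_2(2-\eps)\cdot k}\cdot (2-\eps)^{k}\cdot s^{O(1)}.
\]
Setting $\eps' = 1-\log_2(2-\eps)>0$ and assuming $n$ is larger than a sufficiently large constant (otherwise the problem is solvable trivially by enumerating all $\binom{n}{k}\le n^k$ weight-$k$ assignments), the $(2-\eps)^{k}$ factor is absorbed into $n^{(\eps'/2)k}$, giving a running time of $n^{(1-\eps'')k}s^{O(1)}$ for some $\eps''>0$, as required by Statement 4.

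I expect no serious obstacle in this reduction, in contrast to \cref{lem:ald} where the critical step was to balance the input formula so as to keep the depth of the constructed circuit within $\eps\cdot N$. Here \circseth\ imposes no depth restriction, so the argument reduces mostly to routine bookkeeping: verifying that the selector and distinctness gadgets contribute only polynomial size, and that the conversion from the $(2-\eps)^{k\log n}$ bound into an $n^{(1-\eps')k}$ bound safely absorbs lower-order terms such as the ceiling in $\lceil\log n\rceil$ and the $2^{O(k)}$ overhead of the distinctness gadgets.
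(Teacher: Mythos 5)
Your proposal is correct and follows essentially the same route as the paper's proof: the same construction with $k\lceil\log n\rceil$ new inputs grouped to encode the indices of the $k$ true inputs, selector sub-circuits replacing the original input gates, pairwise distinctness gadgets, and the same bookkeeping that absorbs the $2^{O(k)}$ overhead from the ceiling into $n^{\Omega(\eps k)}$ once $n$ exceeds a constant (handling small $n$, and hence small $k$, by brute force). The only cosmetic difference is that you fold all polynomial factors directly into $s^{O(1)}$ using $n\le s$, where the paper instead also assumes $k$ large; both are valid.
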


\begin{proof}

Suppose there is an algorithm that takes as input a boolean circuit $C'$ of
size $S$ with $N$ inputs and decides if $C'$ is satisfiable in time
$2^{(1-\eps)N}S^c$, for some constants $\eps>0, c>0$. We are given a circuit
$C$ with $n$ inputs and size $s$ as well as an integer $k$ and we want to
decide if $C$ has a satisfying assignment of weight exactly $k$ in time
$n^{(1-\eps')k}s^{O(1)}$.  We will do this by constructing a circuit $C'$ with
the following properties:

\begin{enumerate}

\item $C'$ is satisfiable if and only if $C$ has a satisfying assignment of
weight $k$.

\item $C'$ has at most $N=k\log n+k$ input gates and can be constructed in time
$O(sk^2n^2)$.

\end{enumerate}

If we achieve the above we obtain the desired algorithm because we can produce
$C'$ and run the supposed \textsc{Circuit-SAT} algorithm on it. The running
time is dominated by the second step which takes $2^{(1-\eps)N}S^c$. We have
$2^{(1-\eps)N} \le n^{(1-\eps)k}2^k$ and $S^c = O(s^c (kn)^{2c} )$. Observe now
that we can assume that $n$ and $k$ are sufficiently large. In particular, if
$k<8c/\eps$, then since without loss of generality $s>n$, the trivial $O(n^ks)$
time algorithm which checks all assignments of weight $k$ would run in time
$s^{O(1)}$. But if $n,k$ are sufficiently large, then
$2^k(kn)^{2c}<n^{\frac{\eps k}{2}}$ and the total running time is at most
$n^{(1-\frac{\eps}{2})k}s^{O(1)}$ as desired.

We construct $C'$ by adding some gadgets to $C$.  First, number the input gates
of $C$ as $c_1,\ldots,c_n$. We transform them into disjunction gates. For each
$i\in[n]$ we construct $k$ new gates $c_i^1,\ldots,c_i^k$ and feed their output
into $c_i$. The intuitive meaning is that $c_i^j$ will be True if $c_i$ is the
$j$-th input set to True in a satisfying assignment of weight $k$. Each
$c_i^j$, for $i\in[n], j\in[k]$ is a conjunction gate.

We now construct $k\lceil \log n \rceil \le k\log n +k$ new input gates,
partitioned into $k$ groups $I_1,\ldots, I_k$, each containing $\lceil \log
n\rceil$ input gates.  The intuitive meaning is that the assignment to the
gates of $I_j$ should encode the index of the $j$-th input of $C$ set to True
in a satisfying assignment. For each $j\in[k]$ and each assignment $\sigma$ to
the input gates of $I_j$ we interpret $\sigma$ as a binary number in $[n]$ and
suppose that this number is $i$. We then connect the input gates of $I_j$ to
$c_i^j$ via appropriate negations such that $c_i^j$ evaluates to True if and
only if the gates of $I_j$ are given assignment $\sigma$.

At this point it is not hard to see that $C$ is satisfiable with an assignment
of weight $k$ if and only if $C'$ is satisfiable via an assignment which for
each distinct $j,j'\in[k]$ gives distinct assignments to $I_j, I_{j'}$. What
remains is to add some gadgets so that $I_j, I_{j'}$ cannot have the same
assignment, so that satisfying assignments to $C'$ encode satisfying
assignments of weight exactly $k$ to $C$. This is easily achieved by adding for
each pair of distinct $j,j'\in[k]$ a disjunction gate, which takes its input
from $2\lceil \log n\rceil$ conjunction gates, where each conjunction gate
checks if the assignments to $I_j, I_{j'}$ differ in a corresponding bit by
giving False to the bit of $I_j$ and True to the bit of $I_{j'}$ or vice-versa. 

Let us analyze the properties of this construction. The number of inputs of
$C'$ is at most $k\lceil\log n\rceil \le k\log n + k$. The number of wires we
added is at most $O(kn\log n)$ for the wires incident on $c_i^j$ gates, and
$O(k^2\log n)$ for the gadget that ensures we have an assignment of weight $k$.
The construction can be performed in linear time in the size of the output and
the total size is at most $O(sk^2n^2)$.  \end{proof}

\section{Applications}\label{sec:applications}

In this section we list five characteristic example of our equivalence classes.
In each case we highlight a problem and a lower bound on its time complexity
such that breaking the bound is equivalent to falsifying the hypotheses of the
corresponding class.

\subsection{SETH-equivalence for Coloring}

\begin{theorem}\label{thm:color1} For all $q\ge 3$, there is an $\eps>0$ and an
algorithm solving $q$-\textsc{Coloring} in time $(q-\eps)^mn^{O(1)}$, where $m$
is the size of a given modulator to constant tree-depth, if and only if the
SETH is false. The same holds if $m$ is the size of a given \sdh. \end{theorem}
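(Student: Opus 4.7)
The proof proceeds by establishing reductions in both directions between $q$-\textsc{Coloring} parameterized by the given modulator and \textsc{SAT} parameterized by a tree-depth modulator (or \sdh), and combining them with \cref{thm:seth} and \cref{cor:sdh}.

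For the direction in which the existence of a $(q-\eps)^m n^{O(1)}$ algorithm implies that the \seth\ is false, it suffices to handle the \sdh\ case, since every \sdh\ of size $m$ is a modulator to tree-depth at most $\sigma$ (the bound on component sizes), so any $(q-\eps)^m n^{O(1)}$ algorithm parameterized by tree-depth modulator already yields the same running time parameterized by \sdh. The required reduction from $k$-\textsc{SAT} to $q$-\textsc{Coloring} that produces a \sdh\ of size $n$ has been carried out by Esmer, Focke, Marx, and Rzazewski \cite{EsmerFMR24}; composing this with the supposed $q$-\textsc{Coloring} algorithm turns an $n$-variable $k$-\textsc{SAT} instance into one solved in time $(q-\eps)^n |\phi|^{O(1)} = (2-\eps')^n |\phi|^{O(1)}$, falsifying the \seth.

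For the reverse direction, the plan is to reduce an input instance $(G,M)$ with $|M|=m$ and $\td(G-M)\le c$ to a CNF formula $\phi$ equipped with a modulator $M'$ such that $\phi - M'$ has tree-depth $O(qc)$, and then invoke \cref{thm:seth}. For each vertex $v\notin M$ we introduce $q$ Boolean variables encoding its color one-hot (with the standard at-most-one and at-least-one clauses), so that the non-modulator part of the primal graph of $\phi$ is essentially the lexicographic product $(G-M)[K_q]$, of tree-depth at most $qc$. For the modulator side, the key idea is to \emph{pack} the $m$ vertices of $M$ into $m/r$ blocks of a suitable constant size $r$ and encode each block's joint coloring using $\lceil r \log_2 q\rceil$ binary variables, augmented with clauses ruling out codes that do not correspond to colorings of the block. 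Edges with at least one endpoint in $M$ are enforced by clauses whose variables lie entirely in $M'$ (for intra-modulator edges) or involve $M'$ together with one one-hot variable on the non-modulator side (for crossing edges); none of these introduces new edges inside $\phi - M'$. The resulting modulator has size $|M'| = (m/r)\lceil r\log_2 q\rceil$ and the formula has clauses of arity $O(r\log q)$, a constant depending only on $r$ and $q$.

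The delicate choice is the size $r$ of the packing blocks, and this is where the main obstacle lies. The naive option $r=1$ yields $|M'| = m\lceil \log_2 q\rceil$, and for $q$ not a power of two one may have $(2-\eps_0)^{\lceil \log_2 q\rceil}\ge q$ when the constant $\eps_0>0$ supplied by \cref{thm:seth} is small; since the \seth\ being false guarantees only \emph{some} $\eps_0>0$, this naive reduction need not yield any speedup. Choosing $r$ large forces $|M'|/m\le \log_2 q + 1/r$, hence $(2-\eps_0)^{|M'|} \le q^m (1-\eps_0/2)^{m\log_2 q}(2-\eps_0)^{m/r}$, and for $r$ large enough as a function of $\eps_0$ and $q$ this is bounded by $(q-\eps)^m$ for some $\eps>0$, producing the desired $(q-\eps)^m n^{O(1)}$ running time. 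The \sdh\ variant of the forward direction is identical after invoking \cref{cor:sdh} in place of \cref{thm:seth}.
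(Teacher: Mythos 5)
Your reverse direction is essentially the paper's own proof: one-hot encode the colors of the non-modulator vertices (keeping tree-depth $O(qc)$ outside the new modulator), pack the modulator vertices into constant-size blocks encoded in binary with a per-vertex overhead of $1/r$ bits, and invoke \cref{thm:seth}; the paper does the same with block size $\gamma$ and code length $\rho$ chosen so that $2^{(1-\eps/2)\rho}\le q^\gamma\le 2^\rho$, and your analysis of why the overhead is absorbed into the $(q-\eps)^m$ bound is correct. The forward direction is also handled the same way (delegation to \cite{EsmerFMR24}), but your accounting there is off: if the reduction produced a \sdh\ of size $n$ equal to the number of \textsc{SAT} variables, then $(q-\eps)^n$ would exceed $2^n$ for $q\ge 3$ and you would get no contradiction with the \seth; the cited reduction necessarily compresses, producing a hub of size roughly $n/\log_2 q$, so that $(q-\eps)^{n/\log_2 q}\le 2^{(1-\eps')n}$. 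Since you (like the paper) rely on the citation rather than reproving that direction, this is a slip in the gloss rather than a hole in the argument, but the displayed identity $(q-\eps)^n=(2-\eps')^n$ is false and should be corrected.
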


\begin{proof}

It was already established in \cite{EsmerFMR24} that if the SETH is true, then
there is no algorithm for $q$-\textsc{Coloring} with the stated running time,
even for parameter the size of a given \sdh. Therefore, if the SETH is true, no
such algorithm exists for parameter size of modulator to constant tree-depth.

We establish that if the SETH is false, then we do get such a
$q$-\textsc{Coloring} algorithm for parameter modulator to constant tree-depth.
We are given an $n$-vertex graph $G$ and a modulator $M$ of size $m$ so that
$G-M$ has tree-depth $c$. We will use \cref{thm:seth} and reduce this instance
to an instance of $k$-\textsc{SAT} with a modulator to constant tree-depth.
According to \cref{thm:seth}, if the SETH is false, we can solve such an
instance is time $2^{(1-\eps)m'}|\phi|^{O(1)}$, where $m'$ is the size of the
new modulator, for some $\eps$. In particular, we will make sure that $m'\le
(1+\frac{\eps}{2})m\log q+O(1)$, and that $\phi$ has size polynomial in $n$.
Then, the running time of the whole procedure will be at most
$2^{(1-\eps)(1+\frac{\eps}{2})m\log q}n^{O(1)} = q^{(1-\eps')m}n^{O(1)}$ for
some appropriate $\eps'$.

To begin our construction, we consider every vertex of $G-M$. For each such
$v$, we define $q$ variables $x_v^1,\ldots, x_v^q$, indicating informally that
$v$ takes color $i\in[q]$. We add the clause $(x_v^1\lor\ldots\lor x_v^q)$. For
each $uv\in E$ with $u,v\not\in M$ we add for each $i\in[q]$ the clause $(\neg
x_v^i \lor \neg x_u^i)$. It is not hard to see that the graph we have
constructed so far has tree-depth at most $cq$. All the other variables we will
construct will belong in the new modulator.

Consider now the vertices of $M$ and for some integer $\gamma$ to be defined
later partition them into groups of size at most $\gamma$ as equitably as
possible. We have then at most $\tau=\lceil\frac{m}{\gamma}\rceil \le
\frac{m}{\gamma}+1$ groups, $M_1,\ldots, M_{\tau}$. For some integer $\rho$,
also to be defined later, we replace each $M_i$ with a set of $\rho$ new
variables $Y_i$. We select the integers $\gamma,\rho$ so that we have the
following:

\[ 2^{(1-\frac{\eps}{2})\rho} \le q^\gamma \le 2^\rho \]

Since $q^\gamma\le 2^\rho$, we can map injectively all proper colorings of
$G[M_i]$ to assignments of $Y_i$. For each assignment that is not an image of a
proper coloring we add a clause ensuring that this assignment cannot be used in
a satisfying assignment. Furthermore, for each $uv\in E$ such that $u,v$ belong
in distinct sets $M_i,M_j$ we consider all pairs of colorings of $M_i,M_j$ that
set $u,v$ to the same color. For each such pair, we add a clause ensuring that
$Y_i,Y_j$ cannot simultaneously take the assignments which are the images of
these colorings. Finally, for each $u\in M_i$ and $v\not\in M$ such that $uv\in
E$, for each color $j\in [q]$ we add for each coloring of $M_i$ that colors $u$
with $j$ a clause that contains $\neg x_v^j$ and literals which are all
falsified if we select for $Y_i$ the image of this coloring of $M_i$.

This completes the construction and it is not hard to see that there is a
one-to-one correspondence between proper colorings of $G$ and satisfying
assignments of the CNF formula we have constructed. To bound the size of the
new modulator we observe that we have $\frac{m}{\gamma}+1$ groups of $\rho$
variables, but $(1-\frac{\eps}{2})\rho \le \gamma \log q$, therefore, $\rho\le
(1+\frac{\eps}{2})\gamma\log q$, so the total number of variables of the
modulator is at most $(\frac{m}{\gamma}+1)\rho\le (1+\frac{\eps}{2})m\log
q+O(1)$.

Finally, to argue that we can select appropriate $\gamma, \rho$, we observe
that it is sufficient that $2^{\rho} > 2^{(1-\frac{\eps}{2})\rho}q$, because
then there is an integer power of $q$ between $2^{\rho}$ and
$2^{(1-\frac{\eps}{2})\rho}$. This gives $\rho > \frac{4\log q}{\eps}$, so
selecting $\rho$ to be any integer at least this large also allows us to select
an appropriate $\gamma$.  \end{proof}

\subsection{MaxSAT-equivalence for Max Cut}

\begin{theorem} There is an $\eps>0$ and an algorithm which takes as input an
$n$-vertex graph $G$ and a \sdh\ of size $h$ and computes a maximum cut of $G$
in time $(2-\eps)^hn^{O(1)}$ if and only if the \maxsatseth\ is false.
\end{theorem}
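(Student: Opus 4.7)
The plan is to establish both directions of the equivalence by reducing \textsc{Max-Cut} parameterized by \sdh\ to \textsc{Max-SAT} parameterized by \sdh, and vice versa, and then invoking \cref{thm:maxsat} to move between the hub and number-of-variables formulations of the \maxsatseth.

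For the forward direction, suppose the \maxsatseth\ is false. Given a \textsc{Max-Cut} instance $(G,H)$ with $H$ a \sdh\ of size $h$, I apply the textbook reduction from \textsc{Max-Cut} to \textsc{Max-2-SAT}: create a variable $x_v$ for every vertex $v$, and for each edge $uv\in E(G)$ add the pair of clauses $(x_u\vee x_v)$ and $(\lnot x_u\vee \lnot x_v)$. An edge is cut iff both of its clauses are satisfied, whereas an uncut edge contributes exactly one satisfied clause, so an assignment induces a cut of value $k$ iff it satisfies exactly $|E|+k$ clauses. Crucially, the primal graph of the resulting formula equals $G$, so $H$ remains a \sdh\ of the formula with the same parameters and size. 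By \cref{thm:maxsat} the \hubmaxsatseth\ is also false, so the resulting instance is solvable in time $(2-\eps)^h|\phi|^{O(1)}=(2-\eps)^h n^{O(1)}$.

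For the reverse direction, I adapt the reduction of Esmer, Focke, Marx, and Rzazewski \cite{EsmerFMR24} that proves a \seth-based lower bound for \textsc{Max-Cut} parameterized by \sdh, and observe that their construction actually gives a \textsc{Max-SAT}-to-\textsc{Max-Cut} reduction: the gadgets they attach to each clause and to each component of $\phi-M$ are designed so that the maximum cut value equals the maximum number of satisfied clauses of $\phi$ plus a known additive constant, rather than merely encoding whether $\phi$ is fully satisfiable. Each gadget has size depending only on $\sigma,\delta$ and the maximum clause arity, and is attached to only a bounded number of vertices of $M$, so $M$ is still a $(\sigma',\delta')$-hub of the produced graph of the same size. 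A fast \textsc{Max-Cut} algorithm therefore solves arbitrary \hubmaxsatseth\ instances in time $(2-\eps)^m n^{O(1)}$, and \cref{thm:maxsat} translates this back to refuting the \maxsatseth.

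The main obstacle is this second direction: one has to carefully walk through the Esmer et al.\ reduction and verify two things, namely (i) that it is a faithful reduction at the level of objective values (the cut value must track the number of satisfied clauses, not only full satisfiability), and (ii) that the hub parameters $\sigma',\delta'$ of the output graph depend only on $\sigma,\delta$ and the clause arity of the input formula, so that \hubmaxsatseth\ can be invoked for the appropriate constants. Both properties are implicit in their construction but have to be extracted explicitly; once done, combining the two reductions with \cref{thm:maxsat} immediately yields the claimed equivalence.
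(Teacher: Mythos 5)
Your overall architecture matches the paper's: both directions route through \cref{thm:maxsat}, and your forward direction (the textbook reduction creating $(x_u\lor x_v)$ and $(\neg x_u\lor\neg x_v)$ per edge, noting the primal graph equals $G$ so the hub is preserved and the optimum is $|E|$ plus the max cut) is exactly the paper's argument and is complete.

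The reverse direction, however, has a genuine gap: you assert that the gadgets of Esmer et al.\ already constitute a value-faithful \textsc{Max-SAT}-to-\textsc{Max-Cut} reduction preserving a \sdh, but you neither construct the gadgets nor verify the two properties you yourself identify as essential. This is not mere bookkeeping. First, the known constructions are reductions from \emph{decision} \textsc{SAT}, so one must prove the stronger statement that an assignment satisfies $t$ clauses iff the graph has a cut of weight at least $C+t$ for \emph{every} $t$, not only at the full-satisfiability threshold; the paper establishes this from scratch with an explicit construction (an odd cycle of length $4k+1$ per clause with edge weight $8k$, literal edges of weight $1$, padding leaves, and a case analysis showing the threshold is exactly $32k^2m+km+t$). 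Second, and more subtly, the hub property does not survive the standard construction unmodified: the edge weights must be kept bounded by a function of the arity $k$ alone, because the unweighting step (replacing a weight-$w$ edge by $w$ parallel length-$3$ paths) inflates each component by the total incident weight. The paper explicitly notes it must redo the reduction of Lokshtanov--Marx--Saurabh ``with smaller weights'' precisely so that the resulting components have size $O(k^2)$ and degree $k$ into the hub. Your proposal, by deferring to the external construction as-is, does not address either point, and the second one would in fact fail without the modification.
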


\begin{proof}

One direction is easy: given a \textsc{Max Cut} instance $G=(V,E)$, we can
construct a CNF formula by constructing a variable $x_v$ for each vertex $v$
and for each edge $uv$ the clauses $(x_u\lor x_v)$ and $(\neg x_u\lor \neg
x_v)$. The primal graph of this formula is simply $G$, so the \sdh\ is
preserved. It is now not hard to see that the maximum number of clauses that it
is possible to satisfy in the new instance is exactly
$|E|+\textrm{max-cut}(G)$, so an algorithm as described in the second statement
of \cref{thm:maxsat} gives the desired algorithm for \textsc{Max Cut}.

For the converse direction, we are given a $k$-CNF formula $\phi$ with $n$
variables and $m$ clauses and a target value $t$ and are asked if we can
satisfy at least $t$ clauses. We reuse the reduction of \cite{LokshtanovMS18},
except with smaller weights, so that we obtain a \sdh. Let us give the relevant
details. We will first describe a reduction to edge-weighted \textsc{Max Cut}
and then remove the weights.

For each variable of $\phi$, $x_1,\ldots,x_n$ we construct a vertex with the
same name. We also construct a vertex $x_0$. For each clause $C_j$ of $\phi$ we
construct a cycle of length $4k+1$, one of whose vertices is $x_0$, while the
others are new vertices. Each edge of this cycle has weight $8k$. Number the
vertices of the cycle $p_{j,1},\ldots,p_{j,4k}$, starting from a neighbor of
$x_0$. Suppose that $C_j$ contains literals $\ell_1,\ell_2,\ldots,\ell_{k'}$
for $k'\le k$. For each $i\in[k']$ if $\ell_i$ is a positive appearance of a
variable $x_r$, then we connect $p_{j,4i+2}, p_{j,4i+3}$ to $x_r$, otherwise if
it is a negative appearance of $x_r$, then we connect $p_{j,4i+1}, p_{j,4i+2}$
to $x_r$. These edges have weight $1$. Furthermore, we attach $k-k'$ leaves to
$p_{j,1}$ with edges of weight $1$. We now claim that there is an assignment
satisfying $t$ clauses if and only if there is a cut of weight at least
$32k^2m+km+t$.

For the forward direction, if there is such an assignment, place all the
vertices representing False variables on the same side as $x_0$ in the cut and
all the vertices representing True variables on the other side. For each
unsatisfied clause, pick a bipartition of the corresponding odd cycle so that
$4k$ of the $4k+1$ edges of the cycle are cut and the uncut edge is incident on
$x_0$.  This contributes $32k^2$ from the edges of the cycle.  We also obtain
at least $k'$ from the edges connecting the cycle to the variable vertices and
$k-k'$ from the attached leaves by placing each leaf to the opposite side of
its neighbor, so we get $32k^2+k$ for each unsatisfied clause. For each
satisfied clause $C_j$, let $\ell_i$ be the first literal that is set to True
by the assignment.  For $i'<i$ we set $p_{j,4i'+2}, p_{j,4i'+4}$ on the same
side as $x_0$ and $p_{j,4i'+1}, p_{j,4i'+3}$ on the other side; while for
$i'>i$ we use the opposite partition. We now have that the vertex preceding
$p_{j,4i+1}$ and the vertex after $p_{j,4i+4}$ in the cycle are both on the
side of $x_0$. For the four remaining vertices, if $\ell_i$ is a variable that
appears positive, we set $p_{j,4i+2},p_{j,4i+3}$ on the same side as $x_0$,
cutting both edges connecting these vertices to the corresponding variable
vertex, and $p_{j,4i+1},p_{j,4i+4}$ on the other side; while if $\ell_i$ is a
variable that appears negative, we set $p_{j,4i+1}, p_{j,4i+2}, p_{j,4i+4}$ on
the side opposite $x_0$ and $p_{j,4i+3}$ on the side of $x_0$, again cutting
both edges to the corresponding variable vertex. We now observe that for a
satisfied clause we have cut $32k^2+k+1$ edges, obtaining the bound on the
total size of the cut.

For the converse direction we observe that an optimal cut must cut $4k$ of the
$4k+1$ edges of the odd cycle constructed for each clause. Indeed, it is not
possible to cut more (since the cycle is odd), while if we leave two edges
uncut, we may replace the solution by a solution that cuts $4k$ of the edges.
This will add $8k$ to the total weight, more than making up for the at most
$2k$ lost edges connecting the cycle to the rest of the graph. Thus, for each
cycle we obtain $32k^2$ from the edges inside the cycle. Furthermore, we can
obtain at most $k+1$ from the remaining edges incident on the cycle. To see
this, note that we have added $k'$ pairs of edges (one pair for each literal of
the clause) which are adjacent to consecutive vertices in the cycle. If we cut
both edges in a pair, an edge of the cycle is uncut, so there is at most one
pair of cut edges and from each other pair we cut at most one edge. Taking into
account the $k-k'$ incident on leaves (which are always cut), we have that each
cycle contributes at most $32k^2+k+1$ to the cut. Hence, there must be at least
$t$ cycles contributing exactly this much, which we claim correspond to
satisfied clauses by the assignment we extract by setting to False variables
whose corresponding vertices are on the same side as $x_0$. Indeed, take a
cycle representing $C_j$ and suppose that $C_j$ is not satisfied by this
assignment. However, there exists a literal $\ell_i$ in $C_j$ such that we cut
both edges added for $\ell_i$. If $\ell_i$ is positive, then
$p_{j,4i+2},p_{j,4i+3}$ have a common neighbor. If $C_j$ is not satisfied, this
neighbor must be on the same side as $x_0$, so $p_{j,4i+2},p_{j,4i+3}$ are on
the other side. It is now not hard to see via a parity argument that two edges
of the cycle must be uncut, contradiction. The reasoning is similar if $\ell_i$
is negative, as then $p_{j,4i+1},p_{j,4i+2}$ must have a common neighbor, which
must be on the side not containing $x_0$ (otherwise $C_j$ is satisfied), so
$p_{j,4i+1}, p_{j,4i+2}$ are on the same side as $x_0$, and again two edges are
uncut. Hence, a cut of the specified size leads to an assignment satisfying at
least $t$ clauses.

In order to remove weights, for each edge $uv$ of weight $w$ we construct $w$
parallel paths of length $3$ between $uv$ and increase the target by $2w$. It
is not hard to see that any decision on $u,v$ can be extended in a way that
cuts at least $2w$ of the new edges, and it is possible to cut all $3w$ new
edges if and only if $u,v$ are on different sides.

After the transformation we have that the variable vertices plus $x_0$ form a
hub of size $n+1$ such that its removal leaves components with $O(k^2)$
vertices. Furthermore, each component is connected to at most $k$ vertices of
the hub. We therefore have a \sdh, for $\sigma=O(k^2)$ and $\delta=k$.
\end{proof}

\subsection{Coloring and Pathwidth Modulators}

\begin{theorem} For all $q\ge3$ we have the following. There is an $\eps>0$
such that for all $c>0$ there is an algorithm that takes as input an $n$-vertex
graph $G$ and a modulator $M$ of size $m$ such that $G-M$ has pathwidth at most
$c$ and decides if $G$ is $q$-colorable in time $(q-\eps)^mn^{O(1)}$ if and
only if the \aldseth\ is false. The same is true if we replace ``pathwidth
$c$'' with ``tree-depth $c \log n$'' in the preceding statement. \end{theorem}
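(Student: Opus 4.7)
The plan is to deduce both directions from the equivalences of \cref{thm:ald} by performing reductions between $q$-Coloring and CNF satisfiability that preserve the relevant structural parameter up to constant factors.

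For the ``if'' direction (assuming \aldseth\ is false), \cref{thm:ald} yields a $(2-\eps)^{|M'|}|\phi|^{O(1)}$ algorithm for SAT parameterized by a modulator $M'$ to constant pathwidth (respectively, to tree-depth $O(\log|\phi|)$). Given a $q$-Coloring instance $(G,M)$ with $|M|=m$ and $G-M$ of pathwidth $c$ (resp.\ tree-depth $c\log n$), I mimic the construction of \cref{thm:color1}: introduce $q$ color-indicator variables for each vertex of $V(G)\setminus M$, a ``pick a color'' clause per such vertex, and edge-clauses forbidding coincident colors; then partition $M$ into groups of size $\gamma$ and replace each group by $\rho$ fresh variables (placed in $M'$), where $\gamma,\rho$ are chosen so that $2^{(1-\eps/2)\rho}\le q^\gamma\le 2^\rho$, encoding proper within-group colorings via assignments and ruling out invalid ones via constant-size clauses. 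Cross-group and group-to-non-modulator constraints are encoded by constant-size clauses involving only modulator variables and color indicators of neighbors. The primal graph of $\phi-M'$ has pathwidth $O(cq)$ (replace each vertex in a width-$c$ path decomposition of $G-M$ by its $q$ indicator variables), or tree-depth $O(cq\log n)$ in the other variant; applying the hypothesized SAT algorithm together with the standard arithmetic $(2-\eps)^{m\log q\,(1+O(\eps))}=q^{(1-\eps')m}$ yields the desired $q^{(1-\eps')m}n^{O(1)}$ bound.

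For the ``only if'' direction it is cleanest to reduce from the linear-depth circuit form of \aldseth. Given a circuit of depth $\eps'n$ on $n$ inputs and size $s$, I invoke \cref{lem:barrington} to obtain a CNF $\phi$ on $n$ input variables (forming the modulator $M_\phi$) and $3\ell=2^{O(\eps'n)}$ state variables, such that $\phi-M_\phi$ has pathwidth $5$ and every clause has $O(1)$ total variables. To reduce $\phi$ to $q$-Coloring I place a $q$-clique palette $P$ in the modulator $M_G$; represent each state variable by a vertex attached to $q-2$ palette vertices (leaving two admissible ``truth'' colors); and partition $M_\phi$ into groups of size $\lfloor\log q\rfloor$, representing each group by a single vertex $w_i\in M_G$ connected to $q-2^{\lfloor\log q\rfloor}$ palette vertices so that its admissible colors injectively encode assignments to the group. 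Each bounded-arity clause of $\phi$ is then realized by a standard constant-size LMS-style gadget adjacent to the corresponding state and group vertices, contributing only a constant to the pathwidth of $G-M_G$. The resulting graph satisfies $|M_G|\le n/\lfloor\log q\rfloor+O(q)$ and $G-M_G$ has pathwidth $O(1)$ (hence also tree-depth $O(\log|G|)$, which covers the second variant), so the hypothesized algorithm finishes in $(q-\eps)^{|M_G|}|G|^{O(1)} = 2^{n\log(q-\eps)/\log q + O(\eps'n)}s^{O(1)}$; choosing the circuit-depth parameter $\eps'$ small enough in terms of $\eps$ and $q$ gives $2^{(1-\gamma)n}s^{O(1)}$ for some $\gamma>0$, falsifying \aldseth.

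The main obstacle lies in the ``only if'' direction: a naive SAT-to-$q$-Coloring reduction applied to a generic \pwmseth\ instance must cope with clauses of unbounded arity in the modulator, which resist realization by constant-size gadgets without inflating the pathwidth of $G-M_G$ and spoiling the modulator-size savings. Routing the argument through Barrington's theorem first delivers a CNF with only bounded-arity clauses for free, and it is this structural windfall that lets every clause gadget remain constant-size and makes the pathwidth and tree-depth bounds on $G-M_G$ go through cleanly.
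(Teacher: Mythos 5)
Your ``if'' direction follows the paper's route (reuse the \cref{thm:color1} construction and invoke \cref{thm:ald}) and is fine. Your ``only if'' direction takes a genuinely different and in principle legitimate route: instead of reducing from \pwmseth\ (as the paper does, first using \cref{obs:arity} to get arity~3 and then passing through a bounded-arity CSP), you reduce from the linear-depth-circuit form via \cref{lem:barrington}, which indeed hands you bounded-arity clauses for free. However, there is a concrete gap in that direction.

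The problem is the encoding of the $n$ circuit inputs into $q$-colorable modulator vertices. You put $\lfloor\log q\rfloor$ boolean variables into each group and represent each group by a single vertex with $2^{\lfloor\log q\rfloor}$ admissible colors, so $|M_G|\approx n/\lfloor\log q\rfloor$ and the hypothesized algorithm costs $(q-\eps)^{n/\lfloor\log q\rfloor}=2^{n\log(q-\eps)/\lfloor\log q\rfloor}$. This is $2^{(1-\gamma)n}$ only when $\log(q-\eps)<\lfloor\log q\rfloor$, which fails whenever $q$ is not a power of two. For the flagship case $q=3$ you get $(3-\eps)^{n}\approx 2^{1.58n}$, and for $q=5$ you get roughly $2^{1.16n}$ --- no savings at all, so \aldseth\ is not falsified. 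The theorem is claimed for all $q\ge 3$. The fix is the base-conversion device the paper uses in both \cref{thm:color1} and the present theorem: encode groups of $\rho$ boolean variables by $\gamma$ $q$-ary vertices, with $\gamma,\rho$ chosen so that $q^{(1-\eps/2)\gamma}<2^{\rho}<q^{\gamma}$. Then $|M_G|\approx \gamma n/\rho$ with $\rho/\gamma$ approaching $\log q$ \emph{from below} only up to a controllable $(1\pm O(\eps))$ factor, and $(q-\eps)^{\gamma n/\rho}=2^{(1-\gamma')n}$ because the strict gap between $\log(q-\eps)$ and $\log q$ absorbs the loss. With that replacement (each group now contributes $\gamma=O(1)$ modulator vertices rather than one, and the clause gadgets must read all $\gamma$ vertices of each touched group, which is still constant-size), your Barrington-based route goes through and is a clean alternative to the paper's CSP-based reduction from \pwmseth.
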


\begin{proof}

Clearly, an algorithm parameterized by modulator to tree-depth $c\log n$
implies the existence of an algorithm parameterized by modulator to pathwidth
$c$. We show two statements: if the \aldseth\ is false, then there is an
algorithm for parameter modulator to tree-depth $c\log n$; and if there is an
algorithm for parameter modulator to pathwidth $c$, then the \aldseth\ is
false.

The first claim readily follows from the proof of \cref{thm:color1}. Recall
that in that proof we showed how to transform an instance $G$ of
$q$-\textsc{Coloring} with a modulator $M$ to tree-depth $c$ to an instance of
\textsc{SAT} with a modulator of size essentially $|M|\log q$, while ensuring
that in the new instance the removal of the modulator leaves a graph of
tree-depth at most $cq$. We run the same reduction, except now the tree-depth
is at most $cq\log n$. One of the statements of \cref{thm:ald} states that we
can avoid brute-force for \textsc{SAT} parameterized by a modulator to
logarithmic tree-depth if and only if the \aldseth\ is false.

For the second claim we need to reduce \textsc{SAT} with parameter modulator to
pathwidth $c$ to $q$-\textsc{Coloring}. Suppose for some $q\ge 3$ there exists
$\eps>0$ such that for all $c'$ there is a $q$-\textsc{Coloring} algorithm
which takes as input a modulator to pathwidth $c'$ of size $m'$ and solves
$q$-\textsc{Coloring} in time $q^{(1-\eps)m'}n^{O(1)}$.

We are given a CNF formula $\phi$ and a modulator $M$ of size $m$ such that
$\phi-M$ has pathwidth $c$. Using \cref{obs:arity} we can assume that $\phi$ is
a 3-CNF formula. Select integers $\gamma,\rho$ such that we have the following:

\[ q^{(1-\frac{\eps}{2})\gamma} < 2^\rho < q^\gamma\]

For such integers to exist it is sufficient to have $q^\gamma >
2q^{(1-\frac{\eps}{2})\gamma}$, which gives $\gamma>\frac{2}{\eps \log q}$.
Selecting $\gamma = \lceil \frac{2}{\eps\log q} \rceil$ then guarantees that an
appropriate integer $\rho$ exists.

We will first reduce our 3-CNF formula to an instance of \textsc{CSP} with
alphabet $[q]$ and arity at most $3\gamma$. In this problem we are given a list
of variables which take values in $[q]$ and a list of constraints, each of
which involves at most $3\gamma$ variables. Each constraint lists the
assignment combinations to the involved variables which are acceptable. The
goal is to select an assignment so that we have selected an acceptable
combination in each constraint.  Observe that $q$-\textsc{Coloring} is a
special case of this problem where the arity is $2$ and all constraints state
that the endpoints of an edge must receive distinct colors.

We now partition $M$ into groups of variables of size $\rho$ as equitably as
possible, constructing at most $\frac{m}{\rho}+1$ groups. For each group we
construct in our \textsc{CSP} instance a group of $\gamma$ variables. We
construct an injective mapping from boolean assignments to a group of variables
of the original instance to an $q$-ary assignment to the corresponding group of
variables of the new instance. Since $2^\rho<q^\gamma$, this is always
possible. For each variable of $\phi$ outside of $M$ we construct a variable
whose acceptable values are only $\{1,2\}$ (representing True, False). For each
clause $C$ of $\phi$ we construct a constraint with maximum arity $3\gamma$: if
a variable of $C$ is part of a group $M_i$ of variables of $M$, we involve all
the $\gamma$ variables representing this group in the constraint; otherwise we
involve the single variable that represents the boolean variable outside of $M$
appearing in $C$. We now list all combinations of assignments to the at most
$3\gamma$ involved variables and we set as unacceptable those combinations
which are not an image of boolean assignments that satisfy the clause. This
completes the construction and it is not hard to see that the answer is
preserved. Furthermore, if we consider the primal graph of the \textsc{CSP}
instance, the graph is unchanged once we remove the (new) modulator. The new
modulator contains at most $(\frac{m}{\rho}+1)\gamma$ variables. But
$(1-\frac{\eps}{2})\gamma\log q <\rho$ so the size of the new modulator is at
most $m'\le (1-\frac{\eps}{2})\frac{m}{\log q} + O(1)$. If we had an algorithm
solving a \textsc{CSP} instance $\psi$ with bounded arity and alphabet $[q]$ in
time $q^{(1-\eps)m'}|\psi|^{O(1)}$, where $m'$ is the modulator to constant
pathwidth, this would give the desired algorithm for deciding our original CNF.
We will therefore show how to obtain this from the supposed
$q$-\textsc{Coloring} algorithm.

We first reduce the arity of the \textsc{CSP} instance to $2$. For each
constraint $C$ involving more than $2$ and at most $3\gamma$ variables, let
$\mathcal{S}_C$ be the set of acceptable assignments, which has cardinality at
most $q^{3\gamma}=O(1)$. We construct $2|\mathcal{S}_C|+1$ new variables and
connect them in an odd cycle with constraints of arity $2$, ensuring that all
the variables which are involved in a constraint take distinct values, that
$|\mathcal{S}_C|$ variables, which are pairwise non-adjacent, have possible
values $\{1,2,3\}$, and that all the others have possible values $\{1,2\}$. The
intuition here is that we have formed an odd cycle and we can use colors
$\{1,2,3\}$ to color it. Since the cycle is odd, we must use color $3$
somewhere, and if we use color $3$ once we can 2-color the rest. The idea then
is to associate each of the $|\mathcal{S}_C|$ variables that have three
possible values with a satisfying assignment to $C$ and add constraints that
ensure that if this variable takes color $3$ then the variables of $C$ must
agree with the corresponding assignment. For this, for each of the
$|\mathcal{S}_C|$ variables of the cycle that represents some assignment
$\sigma$ we add constraints of arity $2$ involving this variable and each
variable of $C$. The constraints state that if the variable of the cycle has
value $1$ or $2$, then everything is acceptable, otherwise the other variable
must agree with $\sigma$. We do this exhaustively and it is not hard to see
that the answer is preserved. We have to argue that the pathwidth of the primal
graph (after removing the modulator) is almost preserved. Indeed, for each
constraint $C$, find a bag that contains all its variables (outside of $M$),
and after this bag add a sequence of copies of it, where we add a path
decomposition of the cycle we have constructed. Repeating this will not add to
the pathwidth more than a small additive constant.

Finally, we want to reduce the \textsc{CSP} of arity $2$ and alphabet $q$ to
$q$-\textsc{Coloring}. We will reduce to \textsc{List Coloring} where all lists
are subsets of $[q]$. Adding a clique of size $q$ and appropriately connecting
to the graph to simulate the lists will then complete the reduction. To reduce
to \textsc{List Coloring} we can use the weak edge gadgets of \cite{Lampis20}.
For each variable of the \textsc{CSP} instance we construct a vertex. For each
constraint over $u,v$ we consider every assignment to $u,v$ that is ruled out
by this constraint. A weak edge is a gadget that we can add between $u,v$ that
will rule out the corresponding coloring, but will always be colorable if we
pick any other color combination for $u,v$. In fact, a weak edge is implemented
using a path of length $3$ between $u,v$ (with appropriate lists) so using weak
edges instead of edges does not affect the pathwidth of the graph by more than
an additive constant. By adding weak edges for each forbidden combination of
assignments we obtain an equivalent $q$-\textsc{Coloring} instance. The size of
the modulator has been preserved and we obtain the theorem.  \end{proof}

\subsection{k-Neighborhood Cut}

In this section we consider the following problem:

\begin{definition} In the \kncut\ problem we are given as input a DAG $G$ with
two designated vertices $s,t$. The question is whether it is possible to select
$k$ vertices of $G$, other than $s,t$, such that deleting all the out-neighbors
of the selected vertices results in a graph where no path from $s$ to $t$
exists.  \end{definition}

Before we proceed, let us note that it is trivial that \kncut\ admits an
algorithm running in time $n^{k+O(1)}$: simply try out all solutions and for
each solution check if there is a path from $s$ to $t$ avoiding the
out-neighborhoods of selected vertices. Furthermore, it is not hard to see that
the problem is W[2]-hard parameterized by $k$. Indeed, take an instance of
$k$-\textsc{Dominating Set} $G=(V,E)$ and construct a DAG that consists of two
copies of $V$, call them $V_1,V_2$, such that there is an arc from each $u\in
V_1$ to each $v\in V_2$ if $u=v$ or $uv\in E$. Add a vertex $s$ with arcs to
all of $V_2$ and a vertex $t$ with arcs from all of $V_2$. Clearly, selecting a
dominating set in $V_1$ is necessary and sufficient to eliminate all paths from
$s$ to $t$. From the results of Patrascu and Williams \cite{PatrascuW10}, the
above reduction implies that \kncut\ cannot be solved in time $n^{k-\eps}$,
assuming the SETH.

In this section we show the same lower bound but assuming the weaker assumption
that \textsc{SAT} cannot be solved faster than brute-force for instances with a
\textsc{2-SAT} backdoor. More interestingly, we show that this is precisely the
right assumption for this problem, in the sense that a better algorithm can be
obtained if and only if the assumption is false.  

\begin{theorem} There exist $\eps>0, k\ge 2$ such that \kncut\ can be solved in
time $n^{k-\eps}$ if and only if the \twosatseth\ is false.  \end{theorem}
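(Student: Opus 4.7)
\emph{Proof plan.} My approach routes everything through the equivalence between \twosatseth\ and \nonreach\ given by \cref{lem:2sat}. Both \kncut\ and \nonreach\ are naturally ``exists-a-choice that eliminates all $s$-$t$ paths'' problems: in \kncut\ the choice ranges over $\sim n^k$ vertex subsets, and in \nonreach\ it ranges over $2^m$ annotation assignments. The natural correspondence is therefore $m = k\lceil\log n\rceil$, which is exactly what converts the $(2-\eps)^m$ regime into the $n^{(1-\eps')k}$ regime. Throughout I use the natural convention that the designated $s,t$ vertices of a \kncut\ instance, which cannot be selected, also cannot be deleted (otherwise the problem is trivial, contradicting the W[2]-hardness sketched right before the theorem).

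\emph{From \twosatseth\ false to a faster \kncut\ algorithm.} Suppose \nonreach\ admits an algorithm running in $(2-\eps)^m n^{O(1)}$. Given a \kncut\ instance on an $n$-vertex DAG $G$, introduce $m = k\lceil\log n\rceil$ annotation variables split into $k$ groups, where group $i$ binary-encodes the index of the $i$-th selected vertex. I build the \nonreach\ DAG by starting with a copy of $G$ and replacing each arc $(u,v)$ with a gadget that is traversable, under a fixed assignment, iff $v$ is not deleted---equivalently, no in-neighbor $w$ of $v$ is encoded by any of the $k$ groups. The condition ``group $i$ does not encode $w$'' is a disjunction over $\lceil\log n\rceil$ literals (some bit of group $i$ differs from the binary code of $w$), implemented as a parallel union of $\lceil\log n\rceil$ annotated arcs; the overall condition is a conjunction over all $w \in N^-(v)$ and all $i \in [k]$, implemented by placing the or-subgadgets in series. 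The resulting DAG has size $n^{O(1)}$, so running the supposed algorithm takes time $(2-\eps)^{k\log n} n^{O(1)} = n^{k\log_2(2-\eps) + O(1)}$, which for $k$ chosen sufficiently large is at most $n^{k-\eps'}$ for some $\eps'>0$.

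\emph{From a faster \kncut\ algorithm to \twosatseth\ false.} Conversely, assume \kncut\ admits an $n^{k-\eps}$ algorithm. Given a \nonreach\ instance with $m$ annotation variables on a DAG of size $\nu$, partition the variables into $k$ groups of $m/k$ variables each. For each group $i$ and each assignment $\sigma$ to its variables, introduce a ``choice vertex'' $g_{i,\sigma}$. To force any valid solution to pick exactly one choice per group, add for each $i$ an auxiliary vertex $h_i$ receiving arcs from every $g_{i,\sigma}$, together with a direct length-two path $s\to h_i\to t$: since $s,t$ are not deletable and the only in-neighbors of $h_i$ are the $g_{i,\sigma}$'s, cutting all $k$ such paths with a budget of exactly $k$ selections forces selecting one $g_{i,\sigma}$ per group, which canonically encodes an assignment. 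To simulate an arc of the \nonreach\ DAG annotated by a literal $\ell$ in group $i$, replace it by a path $u\to v'\to v$ through a fresh vertex $v'$ and add arcs $g_{i,\sigma}\to v'$ for every $\sigma$ \emph{falsifying} $\ell$, so that $v'$ is deleted (breaking the path) iff the chosen $\sigma$ falsifies $\ell$. Unannotated arcs are left intact, and the original source and sink of the \nonreach\ DAG are wired between $s$ and $t$. The resulting \kncut\ instance has $n = O(\mathrm{poly}(\nu) + k\cdot 2^{m/k})$ vertices, and the assumed algorithm runs in $n^{k-\eps} = O(\mathrm{poly}(\nu) + 2^{m(1-\eps/k)}) = (2-\eps'')^m\cdot \nu^{O(1)}$ for some $\eps''>0$, contradicting \twosatseth.

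\emph{Main difficulty.} The most delicate point is justifying the forcing gadget in the second direction: I need to rule out exotic selection strategies that target internal vertices of the simulated \nonreach\ DAG, the $v'$ vertices, or distribute the $k$ selections unevenly across groups. The $s\to h_i\to t$ paths, combined with the hard cardinality constraint $|S|=k$ and the non-deletability of $s,t$, pin down the structure of any optimal solution, but verifying this requires a careful case analysis of where any ``unused'' selection could come from.
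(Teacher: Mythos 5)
Your proposal is correct and follows essentially the same route as the paper: both directions go through \nonreach, using in one direction group-indexed choice vertices $g_{i,\sigma}$, forcing paths $s\to h_i\to t$, and subdivided annotated arcs gated by the chosen $\sigma$, and in the other a $k\log n$-bit binary encoding of the selection with series/parallel annotated-arc gadgets testing that a vertex is not deleted. The one caveat (shared with the paper's own write-up) is that in the reduction from \kncut\ to \nonreach\ you should also rule out a group encoding $s$ or $t$ (or a non-vertex index), e.g.\ by adding a parallel $s\to t$ path that survives exactly under such encodings, since otherwise an assignment encoding $s$ deletes all of $N^+(s)$ and yields a spurious yes-instance that corresponds to no legal selection.
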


\begin{proof}

We will show reductions in both directions from \nonreach\ which was
established to be equivalent to \twosatseth\ in \cref{sec:logpw}. As a
reminder, in \nonreach\ we are given a DAG with its arcs annotated from a set
of $m$ boolean variables. We are asked to select an assignment to the variables
so that in the DAG resulting from keeping only non-annotated arcs or arcs whose
annotation is set to True by the assignment there is no $s\to t$ path, for two
given vertices $s,t$. We have shown that \twosatseth\ is equivalent to assuming
that it is impossible to solve \nonreach\ without considering essentially all
assignments to the $m$ boolean variables.

First, suppose we have an instance $G=(V,E)$ of \nonreach\ on a DAG with $n$
vertices and a set $M$ of $m$ boolean variables. We construct a DAG $G'$ as an
instance of \kncut\ as follows. We begin with the same DAG $G$. We then
partition the variables into $k$ sets, call them $M_1,\ldots,M_k$, each of size
at most $\frac{m}{k}+1$. For each set $M_i$ we enumerate all assignments
$\sigma$, and for each assignment we construct a vertex $x_{i,\sigma}$. We
construct for each $i\in[k]$ a vertex $y_i$ and add arcs $sy_i$, $y_it$ as well
as all arcs $x_{i,\sigma}y_i$. For each arc $e$ of $G$ that is annotated with a
literal from a variable of $M_i$, we subdivide the arc and call the new vertex
$z_e$. We add an arc to $z_e$ from each $x_{i,\sigma}$ such that $\sigma$ sets
the literal of $e$ to False. This completes the construction, which can be
carried out in time at most $2^{m/2}n^{O(1)}$ for all $k\ge 2$. The new graph
has $N$ vertices with $N\le 2^{m/k}+n^{O(1)}$. 

We now observe that the reduction above preserves the solution. In particular,
because of the vertices $y_i$ we are forced to select exactly one
$x_{i,\sigma}$ for each $i\in[k]$, so there is a one-to-one correspondence
between assignments to $M$ and potential solutions in the new instance.
Furthermore, an annotated arc can be used if and only if we have selected the
encoding of an assignment that sets the annotation to True.

Suppose that there is an algorithm for \kncut\ running in time $N^{k-\eps}$ for
some fixed $k\ge 2, \eps>0$. We can now solve \nonreach\ by running the
reduction described above and then solving \kncut. The running time will be
dominated by the second part, which would take $N^{k-\eps} =
2^{\frac{m}{k}(k-\eps)}n^{O(1)} = 2^{(1-\frac{\eps}{k})m}n^{O(1)}$, which would
falsify the \twosatseth\ by the results of \cref{sec:logpw}.

For the converse direction, suppose we are given an instance $G=(V,E)$ of
\kncut\ on $n$ vertices.  We will reduce it to \nonreach\ as follows. We start
with the same DAG and construct a set $M$ of $k\log n$ variables (assume
without loss of generality that $n$ is a power of $2$, otherwise adding dummy
vertices to $G$). Intuitively, the variables of $M$ are partitioned into $k$
sets $M_1,\ldots,M_k$, where the assignment of $M_i$ is meant to encode the
selection of the $i$-th vertex of the solution to \kncut.

We now edit $G$ as follows. We replace every vertex $u\in V$ with two vertices
$u_{in}, u_{out}$. Each arc $uv$ is replaced with the arc $u_{out},v_{in}$. We
set $s_{out}$ as the start vertex and $t_{in}$ as the destination vertex. For
each $u\in V$ we further add $k+1$ vertices $u_0,u_1,\ldots,u_{k}$. The idea is
that we should be able to go from $u_{in}$ to $u_{i}$ if and only if the first
$i$ vertices of the solution do not have any arc to $u$ in the original
instance. We add an arc $u_{in}u_0$ and an arc $u_ku_{out}$.

Let us now describe the gadget we add between $u_{i-1}$ and $u_i$ for all
$i\in[k]$. Intuitively, we should be able to go from $u_{i-1}$ to $u_i$ if and
only if the assignment to $M_i$ does not encode a vertex that dominates $u$.
For each $v\in V$ such that $vu\not\in E$ and $v\neq u$ we therefore construct
a path from $u_{i-1}$ to $u_i$ of length $\log n$ going through new vertices.
The arcs of the path are annotated with literals from $M_i$ ensuring that we
can only follow all the arcs if the assignment to $M_i$ encodes vertex $v$.

This completes the construction. Observe that each arc of $G$ has been replaced
by $k$ collections of at most $n$ parallel paths, each of length $\log n$, so
the new graph has at most $n^5$ vertices. We observe that the answer is
preserved. Indeed, a solution $S$ of \kncut\ can be translated to an assignment
to $M$ in the natural way and we can see that if $u\in S$ dominates $v$, then
it is impossible to go from $v_{in}$ to $v_{out}$ with the current assignment.
In the other direction, if we have an assignment, we can extract a set $S$ in
$G$ by interpreting the assignments to the $k$ groups as the indices of the
selected vertices.

Suppose now that there is an algorithm solving \nonreach\ in time
$2^{(1-\eps)m}N^{d}$, for some fixed $d$. The running time of executing the
reduction and then the algorithm would be at most $2^{(1-\eps)k\log n}n^{5d} =
n^{k-\eps k +5d}$. This implies that when $k\ge\frac{5d}{\eps}+1$ the running
time is at most $n^{k-\eps}$, therefore there exist $k,\eps$ for which we can
obtain a fast enough algorithm for \kncut.  \end{proof}

\subsection{Degenerate Deletion}\label{sec:degen}

In this section we consider the problem of editing a graph by deleting vertices
so that the result becomes $r$-degenerate. Recall that a graph is
$r$-degenerate if all of its induced subgraphs contain a vertex of degree at
most $r$. Equivalently, a graph is $r$-degenerate if there exists an ordering
of its vertices $v_1,\ldots,v_n$ such that each $v_i$ has at most $r$ neighbors
among vertices with higher index. 

The problem we are interested in is whether we can delete $k$ vertices of a
given graph so that the graph becomes $r$-degenerate (where $k$ can be thought
of as the parameter and $r\ge 2$ is an absolute constant). This problem was
shown to be W[P]-complete by Mathieson \cite{Mathieson10}. By reusing
essentially the same chain of reductions, we are able to show the following
equivalence between this problem and \circseth.

\begin{theorem} For all $r\ge 2$ we have the following. There exists
$\epsilon>0, k_0>0$ and an algorithm which for all $k>k_0$, if given as input
an $n$-vertex graph, decides if $G$ can be made $r$-degenerate by deleting $k$
vertices  in time $n^{(1-\eps)k}$ if and only if the \circseth\ is false.
\end{theorem}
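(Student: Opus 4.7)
The plan is to use \cref{thm:horn} as a bridge, establishing reductions in both directions between $r$-\textsc{Deg-Deletion}/$k$ and the weighted (monotone) \textsc{Circuit-SAT} problem of statements 3 and 4 of that theorem. A crucial constraint is that both reductions must preserve the parameter $k$ up to an additive constant, since multiplicative slack in the exponent cannot be absorbed into $k_0$.

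For the direction ``\circseth\ false $\Rightarrow$ fast $r$-\textsc{Deg-Deletion}'', I would encode $r$-\textsc{Deg-Deletion} directly as an instance of weighted monotone \textsc{Circuit-SAT} (statement 3 of \cref{thm:horn}). Construct a circuit $C$ with $n$ input gates $x_1,\ldots,x_n$, one per vertex, where True means ``delete''. Since enlarging a valid deletion set preserves $r$-degeneracy, the set of accepting assignments is upward-closed, so $C$ can be chosen monotone. The greedy peeling procedure is encoded by $n$ layers: for each round $t$ and vertex $v$, a monotone gate $\mathrm{peeled}_v^t = x_v \lor \mathrm{peeled}_v^{t-1} \lor \mathrm{lowdeg}_v^{t-1}$, where $\mathrm{lowdeg}_v^{t-1}$ expresses ``$v$ has at most $r$ non-peeled neighbors'' and, for constant $r$, is the polynomial-size monotone formula $\bigwedge_{S \subseteq N(v), |S|=r+1} \bigvee_{u \in S} \mathrm{peeled}_u^{t-1}$. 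The output gate is $\bigwedge_v \mathrm{peeled}_v^n$. Applying the hypothesized $s^{(1-\eps)k}$ algorithm for monotone weighted \textsc{Circuit-SAT} to this circuit of size $s = n^{O(1)}$ yields the desired $n^{(1-\eps')k}$ algorithm for $r$-\textsc{Deg-Deletion}, with any constant slack absorbed into $k_0$.

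For the reverse direction, I would revisit Mathieson's W[P]-hardness reduction \cite{Mathieson10} from weighted \textsc{Circuit-SAT} to $r$-\textsc{Deg-Deletion}, carefully checking (and if necessary modifying) it so that the parameter is preserved additively, that is $k \mapsto k + O(1)$. Ideally each input of the circuit selected in a satisfying assignment corresponds to a single deletion in the graph, with any gadget-induced overhead being a constant independent of the instance. Since the graph has size polynomial in the circuit size, an $n^{(1-\eps)k}$ algorithm for $r$-\textsc{Deg-Deletion} then yields an $s^{(1-\eps')k'}$ algorithm for weighted \textsc{Circuit-SAT}, falsifying statement 4 of \cref{thm:horn} and hence \circseth.

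The main obstacle is precisely this parameter-preservation check. Standard W[P]-hardness proofs often tolerate multiplicative or polynomial blow-ups in $k$ that would be fatal here: a reduction with $k' = ck$ for any $c > 1$ would only yield $n^{(1-\eps)c k}$ after translation, which is worse than the trivial $n^k$ bound for small $\eps$. A careful reworking of Mathieson's gate gadgets, so that exactly one ``critical'' vertex per selected input must be deleted and all other gadget vertices can be handled uniformly within a constant budget, is therefore needed to make the transformation additive in $k$. Encoding the degeneracy check monotonically in the forward direction, while conceptually clean, also requires attention to ensure the circuit stays polynomial-size uniformly in $r$.
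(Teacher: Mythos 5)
Your reverse direction matches the paper's approach: the paper also reuses the Szeider and Mathieson reductions and verifies that both preserve $k$ exactly (Szeider's reduction to \textsc{Cyclic Monotone Circuit Activation} keeps the budget, and Mathieson's gate gadgets have constant size), so your concern about parameter preservation is exactly the check the paper performs.

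However, your forward direction (``\circseth\ false $\Rightarrow$ fast $r$-\textsc{Deg-Deletion}'') has a genuine gap in the running-time accounting. You build a monotone circuit with $n$ inputs (one per vertex) and size $s = n^{O(1)}$ — concretely, the $\mathrm{lowdeg}_v^{t-1}$ gates already cost $\Theta(n^{r+1})$ each, so $s = \Theta(n^{r+3})$ — and then invoke statement~3 of \cref{thm:horn}, whose running time is $O(s^{(1-\eps)k})$. Plugging in gives $n^{(r+3)(1-\eps)k}$, which beats the trivial $n^{k}$ bound only if $(r+3)(1-\eps)<1$, i.e., $\eps>(r+2)/(r+3)$, something you have no right to assume. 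Increasing $k_0$ does not rescue this: $k_0$ absorbs an \emph{additive} constant in the exponent, not a \emph{multiplicative} blow-up by $r+3$. The culprit is that in statement~3 the circuit size $s$ is itself raised to the power $(1-\eps)k$, which amplifies any polynomial factor.

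The fix is to invoke statement~4 of \cref{thm:horn} instead, whose bound $n^{(1-\eps)k}s^{O(1)}$ separates the number of inputs $n$ (exponentiated by $k$) from the circuit size $s$ (raised to a fixed constant). Your monotone circuit with $n$ inputs and size $n^{O(1)}$ then yields $n^{(1-\eps)k + O(1)}$, and the additive $O(1)$ in the exponent \emph{is} absorbable into $k_0$. (Statement~4 asks for ``exactly $k$'' rather than ``at most $k$'', but since the accepting set of your circuit is upward-closed, the two are equivalent.) The paper itself takes a slightly different but morally similar route: it reduces directly to plain \textsc{Circuit-SAT} (statement~1) using a circuit with only $k\log n$ inputs encoding the indices of the $k$ deleted vertices, giving running time $2^{(1-\eps)k\log n}n^{O(1)} = n^{(1-\eps)k + O(1)}$. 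Either route works, but the one you chose (statement~3) does not.
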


\begin{proof}

For the direction that \circseth\ implies hardness for our problem we
essentially reuse the reduction of \cite{Mathieson10}, which relies on a
previous reduction of \cite{Szeider05}, and observe that these reductions from
the standard W[P]-complete problem we dealt with in \cref{thm:horn} actually
preserve the value of $k$ exactly and only increase the size of the instance by
a constant factor, therefore the lower bound is preserved as well. For the
other direction we will perform a direct reduction to \textsc{Circuit-SAT}.

Let us give the relevant details. For the first direction, we start with an
instance of monotone weight-$k$ circuit satisfiability. By \cref{thm:horn}, if
we can solve such an instance in time $s^{(1-\eps)k}$ for monotone circuits of
$n$ inputs and size $s$, assuming $k$ is a large enough constant, we will
falsify the \circseth.  As shown by Szeider \cite{Szeider05}, there is a simple
reduction from this problem to the \textsc{Cyclic Monotone Circuit Activation}
problem.  In the latter problem we are given a monotone boolean circuit which
is a general digraph (not a DAG) and are asked to select $k$ initial gates to
activate (set to True). From that point on at each round a gate becomes True if
at least one of its in-neighbors is True for disjunction ($\lor$) gates and if
all its in-neighbors are True, for conjunction ($\land$) gates. The question is
whether there is a selection of $k$ initial gates that eventually makes
everything True.

Szeider observed that we can reduce weight-$k$ monotone circuit satisfiability
to this problem as follows: take $k+1$ copies of the given circuit, identify
their inputs,  and then replace the inputs with conjunctions whose inputs are
the $k+1$ outputs. If there is a weight-$k$ assignment to the original circuit
we use the same activation set and it activates all outputs, therefore
eventually all gates. If on the other hand there is a way to activate
everything, because we have $k+1$ copies, there is at least one copy in which
the output was activated using only the input gates, so from this we can
extract a satisfying assignment to the original circuit. Mathieson
\cite{Mathieson10} presents a reduction from this problem which, for each $r\ge
2$ produces an equivalent instance of $r$-\textsc{Degenerate Deletion}, without
modifying $k$, by replacing each gate with a constant-size gadget. Executing
these reductions starting from a circuit of size $s$ we obtain a graph on
$N=O(s)$ vertices. If the problem could be solved in time $N^{(1-\eps)k}$, we
would have an algorithm deciding weight-$k$ satisfiability of the original
instance in time $O(s^{(1-\eps)k})$, disproving the \circseth. 

For the converse direction, suppose that \circseth\ is false and that we have
an instance of the $r$-\textsc{Degenerate Deletion} problem with budget $k$ on
a graph with $n$ vertices. Assume without loss of generality that $n$ is a
power of $2$ (otherwise add some dummy vertices). We will reduce this to
satisfiability of a circuit with $k\log n$ inputs and size $n^{O(1)}$.
Essentially, this circuit encodes the workings of a polynomial-time algorithm
which checks if a solution is correct. The solution is encoded by the $k\log n$
inputs which are meant to be seen as $k$ groups of $\log n$ bits, encoding the
indices of the selected vertices to delete. The algorithm we are trying to
encode can be seen as follows: at each step we mark some active vertices
(initially this set is the set of deleted vertices); then if a vertex has at
most $r$ unmarked neighbors, we mark it; continue this exhaustively for $n$
rounds and in the end check if all vertices are marked. This algorithm can
easily be encoded by a circuit of $n$ layers, each consisting of $n$ gates,
where vertex $j$ of layer $i$ encodes whether vertex $j$ is marked after $i$
rounds. 

If we had the supposed algorithm for checking circuit satisfiability, it would
run in time $2^{(1-\eps)k\log n}n^{O(1)} = n^{(1-\eps)k+O(1)}$. Setting $k_0$
sufficiently large that $\epsilon k_0/2$ is larger than the constant additive
term in the exponent gives the desired algorithm.  \end{proof}

\section{Independent Set and Classes of Perfect Graphs}\label{sec:indset}

In this section we present some results that seek to clarify the structural
relations between fine-grained questions on the complexity of
\textsc{Independent Set} parameterized by vertex deletion distance to a base
class. Let us recall the general context. Fix a hereditary (induced-subgraph
closed) class $\mathcal{C}$, such that \textsc{Independent Set} is
polynomial-time solvable on graphs from $\mathcal{C}$. We will now consider
instances from the class $\mathcal{C}+kv$, that is, the class of graphs which
can be obtained by a graph in $\mathcal{C}$ by adding at most $k$ new vertices
(and arbitrarily many edges incident on the new vertices). We parameterize by
$k$ and ask what is the complexity of the best FPT algorithm. Recall again that
the problem is clearly solvable in $2^kn^{O(1)}$ by branching on the $k$ extra
vertices, assuming that these are given in the input\footnote{We will always
make this assumption, as we are concerned with the question of exploiting,
rather than finding, this set of vertices.}. Furthermore, this simple algorithm
also applies to \textsc{Weighted Independent Set}, as long as this problem is
also polynomial-time solvable on $\mathcal{C}$.

As a case study, we will consider four standard graph classes $\mathcal{C}$:
Cluster graphs, that is, graphs where every connected component is a clique;
Block graphs, that is, graphs where every $2$-vertex-connected component is a
clique; Cographs, that is, $P_4$-free graphs; and interval graphs.
\textsc{Weighted Independent Set} is known to be in P for all these classes
(even if input weights are encoded in binary). Furthermore, cluster graphs are
clearly a subclass of all the other classes: for block graphs this follows from
the definition; for cographs, this follows because cluster graphs are
$P_3$-free; and for interval graphs this follows because cliques are interval
graphs and interval graphs are closed under disjoint union. However, observe
that the other three classes are incomparable: any tree is a block graph, but a
$K_{1,3}$ where each edge is sub-divided once is neither a cograph nor an
interval graph; a complete bipartite graph $K_{n,n}$ is a cograph, but not a
block graph nor an interval graph; and a disjoint union of a diamond ($K_4-e$)
and a $P_4$ is an interval graph but not a block graph nor a cograph.

We can therefore see that graph theory already gives us some basic information
about the relative difficulties of solving \textsc{Independent Set} on
$\mathcal{C}+kv$ graphs, where $\mathcal{C}$ is one of the classes above, as it
indicates that the easiest case should be when $\mathcal{C}$ is the class of
cluster graphs. Nevertheless, this tells us nothing about the relationship
between the other classes and it also ceases to apply if we want to compare the
complexity of \textsc{Weighted Independent Set} on cluster graphs with that of
\textsc{Independent Set} on the other classes.

Our point of departure is a result which recently appeared in \cite{EsmerFMR24}
and establishes the following:

\begin{theorem}[\cite{EsmerFMR24}]\label{thm:focke} If there exist $\eps>0$ and
an algorithm which takes as input a graph $G=(V,E)$ and a set $M\subseteq V$
such that $G-M$ is a disjoint union of cliques, and solves \textsc{Independent
Set} on $G$ in time $(2-\eps)^{|M|}|V|^{O(1)}$, then the SETH is false.
\end{theorem}

We note that the result of \cite{EsmerFMR24} is stated in terms of the more
general $q$-\textsc{Colorable Deletion} problem, where we want to delete the
minimum number of vertices to make a graph $q$-colorable; \textsc{Independent
Set} is just the case $q=1$. Furthermore, even though this is not explicitly
stated, one can easily verify that in the case $q=1$ the modulator constructed
in the reduction of \cite{EsmerFMR24} is indeed a modulator to a cluster graph,
as all gadgets used for $q=1$ are cliques. We give a self-contained exposition
of this further below.

It would therefore appear that, in some sense, the case is closed. All the
problems we have promised to consider are solvable in time $2^kn^{O(1)}$; but
according to \cref{thm:focke} even the easiest among them is \emph{not}
solvable in time $(2-\eps)^kn^{O(1)}$.  Hence, if one believes the SETH, there
is nothing else to investigate here and all these problems have the same
complexity, despite the fact that from the graph-theoretic point of view,
interval graphs (for instance) are clearly a much wider class of graphs than
cluster graphs.  This seems rather paradoxical.

The results we present in this section will somewhat resolve this tension and
confirm the graph-theory based intuition that these problems are \emph{not} of
the same difficulty.  Concretely, we give a hierarchical ranking of these
problems (taking also into account their weighted versions) which will allow us
to compare their difficulties by examining which case is reducible to which
other, using the main complexity hypothesis we have discussed as our guiding
framework. In summary, we will show the following: (see also
\cref{fig:indset})

\begin{enumerate}

\item As expected, cluster graphs are the easiest of the considered cases, even
if we allow (binary) weights. We show (\cref{thm:cluster-upper}) that refuting
the \aldseth\ would be sufficient to achieve a faster than brute-force
algorithm for this, placing it somewhere below the middle class of
\cref{fig:results}.  For the case of unary weights, we also show that the
problem is at least as hard as the \maxsatseth\ (\cref{thm:cluster-lower}),
placing both weighted variants of this problem between the second and third
classes (from the bottom) of \cref{fig:results}.

\item On the other hand, we show that for both cographs and block graphs,
obtaining a faster than brute-force algorithm seems harder, as we are able to
establish that refuting the \aldseth\ is \emph{both necessary and sufficient}
for this. This places these problems in the third class of \cref{fig:results}.
Equivalence continues to hold even if we allow unary weights.

\item Finally, for the case of interval graphs with unary weights we obtain an
\emph{equivalence} with the \logpwmseth, establishing that obtaining a faster
than brute-force algorithm for this problem is exactly as hard as refuting this
conjecture. So, this problem is placed squarely inside the fourth class (from
the bottom) of \cref{fig:results}.

\end{enumerate}

Our results therefore point to a clear ranking of these questions, albeit not
with respect to their time complexity, but with respect to their structural
complexity. It is also important to note that this investigation goes much
further than just confirming the graph-theory intuition that cluster graphs are
easier, as there is no obvious graph-theoretic reason why (unary)
\textsc{Weighted Independent Set} on cographs should be reducible to (and
probably easier than) the same problem on interval graphs (recall that these
classes are incomparable).

An interpretation of our results is to conclude that some problems seem to be
strictly more difficult than others, even when all the questions we are dealing
with (are likely to) have the same time and space complexity.  For instance,
(unary) \textsc{Weighted Independent Set} parameterized by a modulator to an
interval graph appears \emph{harder than} (that is, appears not reducible to)
the same problem parameterized by a modulator to a cluster graph.  Our argument
for this is that if the contrary were true, two of the classes of
\cref{fig:results} (the ones containing the \aldseth\ and the \logpwmseth)
would collapse. Given the informal intuition that these classes can be thought
of as rough analogues of NC$^1$ and NL, we would find this rather remarkable,
so we interpret this as evidence that the interval graph question is
structurally harder.

Our investigation is then part of a first foray into the structural complexity
of fine-grained parameterized questions. As with classical complexity theory,
one of the great advantages of constructing a framework around equivalence
classes is that this allows us to draw connections and make comparisons between
questions which are at first glance unrelated. Indeed, in
\cref{sec:applications} we consider the complexity of \textsc{Chromatic Number}
on graphs which are $k$ vertices away from having constant pathwidth. At first
glance this type of question seems incomparable to those of this section, since
we are talking about a different problem on a different class of graphs.
Nevertheless, because we show this problem to be equivalent to the \aldseth\ we
immediately obtain a relation between this question and the questions of this
section. In particular, \textsc{Chromatic Number} parameterized by a modulator
to a bounded-pathwidth graph is easier than \textsc{Weighted Independent Set}
on interval graphs, again unless the classes containing the \aldseth\ and
\logpwmseth\ collapse. We believe that many more such characterizations await
to be discovered.

\subparagraph*{Techniques} Let us also briefly sketch some of the main ideas we
use. We start with the case of cluster graphs, where for the lower bound we
rely on a slight modification of the reduction of \cite{EsmerFMR24} (for which
we give a self-contained exposition), while for the upper bound we rely on
classical results from circuit complexity (\cref{thm:cook}). The intuition that
the \aldseth\ morally corresponds to NC$^1$ is crucial here. For block graphs
and cographs, the lower bounds rely on reductions from the \aldseth\ where we
can assume that the input circuit has fan-out $1$, and is therefore acyclic
(thanks to the depth bound). Block graphs and cographs both possess a tree-like
structure that allows us to represent the circuit.  In turn, this structure has
at most logarithmic tree-depth, enabling a reduction to \textsc{MaxSAT} that
gives the upper bounds via \cref{thm:aldsethmaxsat}.

The most technically challenging of the questions we tackle is the case of
weighted interval graphs. The upper bound, given via a reduction to the
\logpwmseth\ is not too difficult, as it consists of an encoding of the direct
DP algorithm for this problem. However, for the converse reduction we need a
gadget that represents all the assignments of a formula of pathwidth $p$ via an
interval graph which for each bag contains $2^p$ intersecting intervals (one
for each assignment). The difficulty is then to ensure that an independent set
selection in one bag correctly propagates to the rest of the graph.

\subsection{Cluster Graphs}\label{sec:cluster}

In this section we present our results for \textsc{Weighted Independent Set}
parameterized by cluster vertex deletion, that is, the size of a modulator to a
cluster graph. For the lower bound we present a construction that is a slight
modification of that used in \cite{EsmerFMR24} and which we will reuse later.
For the upper bound we construct a circuit with one input for each vertex of
the modulator $M$.  The idea is that this circuit should output the weight of
the best independent set consistent with this choice in the modulator, while
still having depth $\eps M$. For this, it is crucial that we can compute the
sum of $N$ integers with $N$ bits in depth $O(\log N)$ (this is a standard but
non-obvious fact from circuit complexity), as this allows us to sum up the
weight contributions from all the cliques of the cluster graph.

\subsubsection{Lower Bound}

Before we proceed to the lower bound, we recall a gadget construction from
\cite{EsmerFMR24} which we will repeatedly reuse (with minor modifications). We
therefore give here all the relevant details, for the sake of completeness.

\subparagraph*{Modulator Gadget.} In several reductions in this section we will
need to reduce a satisfiability problem, where we are given a boolean formula
or circuit with a special set of variables $M$, into an instance of
\textsc{Independent Set} where $M$ is represented by a special set of vertices
$M'$ (a modulator). One obstacle in such constructions is that, even though all
$2^{|M|}$ boolean assignments to the original instance are a priori equally
plausible, in the new instance, solutions which place more vertices of $M'$ in
the independent set have higher objective value, making it complicated to
achieve a one-to-one correspondence between solutions.  We therefore use a
trick, going back to \cite{CyganDLMNOPSW16}, which represents $M$ by a
modulator of slightly larger size (say $|M'|=(1+\eps)|M|$), which allows us to
map assignments of $M$ to (roughly) bisections of $M'$, that is, independent
sets which select roughly half the vertices. By adding appropriate simple
gadgets we can ensure that any reasonable solution will indeed take an
approximate bisection of $M'$ in the independent set, and therefore will encode
an assignment to $M$, while keeping the objective value in the new instance
predictable.

We start with a simple proposition which states that, for all $\eps$, we can
make the extra cost of this gadget at most $(1+\eps)$.

\begin{proposition}\label{prop:gamma} For all $\eps>0$ there exists an integer
$\rho$ and an odd integer $\gamma$ such that $2^\rho\le {\gamma \choose
(\gamma-1)/2}$ and $\gamma\le (1+\eps) \rho$. \end{proposition}

\begin{proof}

First, suppose that $\rho$ is sufficiently large so that there exists an odd
integer $\gamma$ in the interval $[(1+\eps/2)\rho,(1+\eps)\rho]$ (for instance
it suffices to have $\rho>10/\eps$). We will set $\gamma$ to be such an
integer, so $(1+\eps)\rho\ge \gamma\ge (1+\eps/2)\rho$.

We now observe that ${\gamma \choose (\gamma-1)/2}\ge
\frac{2^\gamma}{\gamma+1}$ (because $\sum_{i=0}^\gamma {\gamma\choose i} =
2^\gamma$ and $\gamma\choose i$ is maximized for $i=\frac{\gamma-1}{2}$). Since
our goal is to have ${\gamma \choose (\gamma-1)/2}\ge 2^\rho$ it will suffice
to have $\frac{2^\gamma}{\gamma+1}\ge 2^\rho$. For this, it suffices to have
$2^{\eps \rho/2} \ge (1+\eps)\rho +1$, which always holds for sufficiently
large $\rho$.  \end{proof}

We now describe the modulator gadget we will use in several reductions of this
section. Fix some $\eps>0$ and to ease notation we will switch the informal
roles of $M, M'$ given above. So,  given a set of boolean variables $M'$, we
will construct a graph $G_M$ consisting of a set of vertices $M$ and a disjoint
union of cliques (that is, $M$ is a modulator to a cluster graph), as well as a
target integer $T$. 

More precisely, recall that \cref{prop:gamma} ensures that there exist integers
$\rho,\gamma$, with $\gamma$ odd, so that $\gamma\le (1+\eps)\rho$ and $2^\rho
\le {\gamma \choose (\gamma-1)/2}$.  We partition the variables of $M'$ into
groups of size at most $\rho$, obtaining at most
$\lceil\frac{|M'|}{\rho}\rceil$ groups.  For each group of variables we
construct in our new modulator $M$ a group of $\gamma$ vertices. Therefore,
$|M|\le (1+\eps)|M'|+O(1)$.  

For each such group of vertices we do the following: construct $\gamma$
identical cliques of size $\gamma\choose (\gamma+1)/2$ where for each clique
each vertex is adjacent to a distinct set of vertices of the group of size
$\frac{\gamma+1}{2}$.  Finally, we set the target independent set size $T$ to
be equal to $(\frac{\gamma-1}{2}+\gamma)\lceil\frac{|M'|}{\rho}\rceil$. In
other words, the maximum independent set is expected to select, for each group,
one vertex from each of the $\gamma$ cliques, plus $\frac{\gamma-1}{2}$
vertices of $M$.

\begin{proposition}\label{prop:modulator} Our construction $G_M$ has the
following properties:

\begin{enumerate}

\item $|M|\le (1+\eps)|M'|+O(1)$.

\item For each group of $\rho$ variables of $M'$ and the corresponding group of
$\gamma$ vertices of $M$ we can construct an injective function mapping each
truth assignment of the $\rho$ boolean variables to a distinct set of
$\frac{\gamma-1}{2}$ vertices.

\item All independent sets of $G_M$ have size at most $T$.

\item Any independent set that has size $T$ must contain exactly
$\frac{\gamma-1}{2}$ vertices from each group of $M$.

\item Any independent set of vertices that contains $\frac{\gamma-1}{2}$
vertices from each group of $M$ can be extended to an independent set of size
$T$.

\end{enumerate}

\end{proposition}


\begin{proof}

We have already argued for the first property. For the second property, recall
that $2^\rho\le {\gamma \choose \frac{\gamma-1}{2}}$, so the number of sets of
size $\frac{\gamma-1}{2}$ suffices to construct an injective mapping. For the
third and fourth properties, observe that in a group an independent set that
contains none of the clique vertices can have at most $\gamma$ vertices (from
$M$); while an independent set that contains some vertices of the cliques may
contain at most $\frac{\gamma-1}{2}$ vertices of $M$ from this group (otherwise
the neighborhood of a selected vertex from the clique, which has size
$\frac{\gamma+1}{2}$ would intersect with the selected vertices in $M$).
Hence, no set can have size more than $T$ and to attain this size we must be
selecting exactly $\frac{\gamma-1}{2}$ vertices of $M$ from each group. For the
last property, for each selection of size $\frac{\gamma-1}{2}$ for a group,
each clique contains a vertex whose neighborhood is the complement of this
selection, so we can select this vertex in every clique.  \end{proof}

\begin{theorem}\label{thm:cluster-lower} Suppose there exist $\eps>0$ and an
algorithm which takes as input a vertex-weighted graph $G=(V,E)$ with unary
weights and a modulator $M\subseteq V$ such that $G-M$ is a cluster graph and
computes the maximum weight independent set of $G$ in time
$(2-\eps)^{|M|}|V|^{O(1)}$. Then, the \maxsatseth\ is false. \end{theorem}

\begin{proof}

Shifting notation slightly, fix an $\eps>0$ for which the supposed algorithm
for \textsc{Weighted Independent Set} exists  and runs in time
$2^{(1-\eps)|M|}|V|^{O(1)}$. We will use it to obtain a fast algorithm for
\textsc{Max-SAT}. Suppose that we are given a CNF formula on $n$ variables and
$m$ clauses and a target $t$ and are asked if it is possible to satisfy at
least $t$ clauses. We will show how to do this in time
$(2-\eps')^n(n+m)^{O(1)}$.

Let $M'$ be the set of variables of the input formula. We execute the modulator
gadget construction described above to obtain a graph $G_M$ with the properties
of \cref{prop:modulator}. Assign to each vertex of this construction weight
$m+1$. This part of the construction represents the variables of the input
formula.

To represent the clauses, we consider each clause $c$ in turn. We construct for
each such clause a clique which contains, for each literal $\ell$ of $c$,
$2^{\rho-1}$ vertices. Let $x$ be the variable of the literal $\ell$. Then,
each of these $2^{\rho-1}$ vertices corresponds to an assignment to the
variables of the group containing $x$ that sets $\ell$ to True. Recall that
each such assignment $\sigma$ is mapped injectively to a set of
$\frac{\gamma-1}{2}$ vertices of $M$.  We make the clique vertex representing
$\sigma$ adjacent to all the other ($\frac{\gamma+1}{2}$) vertices of $M$ in
this group. All vertices constructed in this part have weight $1$.

This completes the construction and we claim that the new instance has an
independent set of weight $T(m+1)+t$ if and only if it is possible to satisfy
at least $t$ clauses in the original instance, where $T$ is the integer of
\cref{prop:modulator}. For one direction, if some assignment to the formula
satisfies $t$ clauses, we use the injective mapping from assignments to sets of
vertices of $M$ to obtain a set of $T$ vertices of the first part of the
construction (this is always possible by the last property of
\cref{prop:modulator}), which therefore contributes a weight of $T(m+1)$. We
now claim that for each satisfied clause, there is a vertex in the
corresponding clique that we can add to the set. Indeed, each such clause $c$
must contain a True literal $\ell$, so we have selected an assignment mapped to
a set of $\frac{\gamma-1}{2}$ vertices of $M$ which are disjoint from the
neighborhood of a vertex of the clique representing $\ell$.

For the converse direction, observe that the maximum independent set of the new
instance must select $T$ vertices in the first part. Indeed, since the second
part consists of $m$ cliques, any solution that contains at most $T-1$ vertices
of the first part has weight at most $(T-1)(m+1)+m$, while a solution that
selects $T$ vertices of the first part and nothing else always exists and has
weight $T(m+1)$, which is strictly larger. But then, by the fourth property of
\cref{prop:modulator} we must have selected exactly $\frac{\gamma-1}{2}$
vertices of $M$ from each group. From this selection we extract an assignment
to the variables by selecting for each group the pre-image of the set of
$\frac{\gamma-1}{2}$ vertices of $M$ placed in the independent set under our
injective mapping (if no such pre-image exists, give some arbitrary truth
assignment). We claim this assignment satisfies $t$ clauses. Indeed, there must
be $t$ cliques constructed in the second part where the independent set takes a
vertex, say a vertex representing literal $\ell$ and assignment $\sigma$ to the
variables of the group of the variable of $\ell$. However, since the selected
vertex has $\frac{\gamma+1}{2}$ neighbors in the corresponding group of $M$,
the vertices of $M$ we have selected must be the image of assignment $\sigma$,
so we have indeed given an assignment setting $\ell$ to True.

Finally, observe that $M$ is a modulator to a cluster graph, the reduction
takes polynomial time, and weights are polynomially bounded. Executing the
supposed algorithm on the new instance takes time $2^{(1-\eps)|M|}(n+m)^{O(1)}
= 2^{(1-\eps^2)|M'|}(n+m)^{O(1)} = (2-\eps')^n(n+m)^{O(1)}$, for some
$\eps'>0$.  \end{proof}

\subsubsection{Upper Bound}

Relying heavily on \cref{thm:cook} we have the following:

\begin{theorem}\label{thm:cluster-upper} If the \aldseth\ is false, then we
have the following: there exist $\eps>0$ and an algorithm which takes as input
a graph $G=(V,E)$ and a modulator $M\subseteq V$ such that $G-M$ is a cluster
graph and computes the maximum independent set of $G$ in time
$(2-\eps)^{|M|}|V|^{O(1)}$.  The same holds if the vertices of $G$ have weights
bounded by $2^{|V|^{O(1)}}$ and we want to compute the independent set of
maximum weight. \end{theorem}

\begin{proof}

Suppose the \aldseth\ is false, so there exist $\eps>0$ and an algorithm that
takes as input a bounded fan-in boolean circuit on $N$ inputs with depth at
most $\eps N$ and total size $s$ and decides if the circuit is satisfiable in
time $(2-\eps)^Ns^{O(1)}$.  We will use this algorithm to solve
\textsc{Weighted Independent Set}. We are given a graph $G=(V,E)$ and a
modulator $M$ as in the statement of the lemma.  Let $|M|=m$ and $|V\setminus
M|=n$.

Observe that it is easy to solve \textsc{Weighted Independent Set} of the given
instance in time $2^m n^{O(1)}$: guess the vertices of $M$ to take into the
set, remove all their neighbors from the graph, and then select the maximum
weight vertex from each cluster. We can therefore assume without loss of
generality that for all $\delta>0$ we have $n<2^{\delta m}$, as otherwise,
$2^m<n^{1/\delta}$ and we can solve the given instance in polynomial time.

We now construct a circuit with $N=m$ inputs, one for each vertex of $M$, where
an input is meant to be set to True if the corresponding vertex is placed in
the independent set. For each $v\in V\setminus M$ we construct a gate that
decides if $v$ is available by checking if its neighbors in $M$ have been
selected. This gate is just a conjunction ($\land$) of at most $m$ negated
inputs, so implementing this with bounded fan-in gates gives depth at most
$O(\log m)$.

Let $W$ be the smallest power of $2$ such that the sum of all input weights is
at most $2^W$, that is, the sum of all input weights can be encoded with $W$
bits. We have $W=(n+m)^{O(1)}$, because each input weight can be encoded with
$(n+m)^{O(1)}$ bits. 

For each $v\in V$ we construct a group of $W$ gates, hard-coded to $0$ or $1$,
that encode the weight of $v$. We take the conjunction of each such bit with
the output of the gate that states whether $v$ is available (for $v\not\in M$)
or with the corresponding input gate (for $v\in M$). We now have $n+m$ blocks
of $W$ gates each, encoding the weight that each vertex may contribute to the
solution without selecting the two endpoints of any edge incident on $M$.  What
remains is to add some machinery to ensure we also do not select two vertices
from a clique and that we select the maximum weight vertex from each clique.

Consider a clique $C$ of $G-M$ and let $C=\{v_1,\ldots,v_c\}$. We construct a
\textsc{Max} circuit which, given as input the $c$ blocks of $W$ gates encoding
the weights of vertices from $C$ compatible with a choice for $M$ outputs the
weight of the heaviest available vertex of $C$. As stated in \cref{thm:cook}
this can be done with depth $O(\log W) = O(\log (n+m))$.

Suppose that $G-M$ contains $r$ clique components $C_1,\ldots,C_r$ and for each
such component we have constructed a circuit with $W$ outputs that encodes the
weight of the heaviest available vertex. We now add a circuit that computes the
sum of these numbers as well as the (at most) $m$ weights of the vertices of
the modulator (taking into account that the weight of a vertex whose input is
set to $0$ becomes $0$). By \cref{thm:cook} we construct a circuit with depth
$O(\log W) = O(\log (n+m))$. Finally, given some desired target value, we add
to the end a circuit checking that the calculated sum is at least as large as
the target; this adds another $O(\log W)=O(\log (n+m))$ to the depth.

What remains is to argue that the depth of the constructed circuit is at most
$\eps N = \eps m$. However, we can see that the total depth is at most $O(\log
(n+m))$. We have $\log(n+m)\le \log n + \log m \le \delta m$ for all
$\delta>0$. This follows because we can assume that $m$ is sufficiently large
so that $\log m$ is insignificant compared to $m$; and because as we argued
$n<2^{\delta m}$ for all $\delta>0$.  \end{proof}

\subsection{Cographs}\label{sec:cographs}

In this section we deal with cographs, which as mentioned are exactly the
graphs which exclude $P_4$ as an induced subgraph. We will use an equivalent
characterization of this class of graphs, which states that $G$ is a cograph if
it can be constructed from singleton vertices using repeated applications of
the union and join operations (the join operation takes the union of two graph
$G_1,G_2$ and adds all edges between them). It is known that, given a cograph
$G$, one can construct in polynomial time a binary tree proving that $G$ can be
constructed using these operations \cite{CorneilPS85}.

Our lower bound reduces the \aldseth\ to \textsc{Independent Set} parameterized
by cograph modulator size. The two main intuitions we rely on are that circuits
with $N$ inputs and depth $\eps N$ can, without loss of generality, be
considered acyclic; and that the binary tree construction representing a
cograph can encode a sufficient amount of information to represent an acyclic
(tree-like) circuit.  For the upper bound, we rely on the fact that this
tree-like representation has logarithmic tree-depth.

\subsubsection{Lower Bound}

\begin{theorem}\label{thm:cograph-lower} Suppose there exist $\eps>0$ and an
algorithm which takes as input a graph $G=(V,E)$ and a modulator $M\subseteq V$
such that $G-M$ is a cograph and computes the maximum independent set of $G$ in
time $(2-\eps)^{|M|}|V|^{O(1)}$. Then, the \aldseth\ is false.  \end{theorem}

\begin{proof}

Slightly switching notation, we are given a bounded fan-in boolean circuit with
$N$ inputs, size $s$, and depth $\eps N$ and want to decide if it is
satisfiable in time $2^{(1-\eps)N}s^{O(1)}$, for some $\eps>0$. We first show
that we can assume without loss of generality that the circuit is actually a
formula, that is, all non-input gates have fan-out $1$. 

Indeed, suppose that we have an algorithm for this special case running in time
$2^{(1-\eps)N}s^c$, for some fixed constant $c$ and for circuits of depth at
most $\eps N$.  We will show that if we are given a depth $\eps' N$ circuit
(for some $\eps'<\eps$ to be set later) where some gates have fan-out more than
$1$ we can decide its satisfiability in time $2^{(1-\eps')N}$. More precisely,
in this case we replicate each gate with large fan-out that is at distance $d$
from the output with $2^d$ gates of fan-out $1$, increasing the size of the
circuit by at most a factor of $2^{\eps' N}$. The new circuit has fan-out $1$
and depth $\eps' N<\eps N$, so the supposed algorithm solves it in
$2^{(1-\eps)N}2^{\eps'cN}s^c$. It suffices to have $\eps-\eps'c>\eps'$ or
equivalently to set $\eps'<\frac{\eps}{c+1}$ to obtain the desired algorithm
for the case of circuits with larger fan-out.

We can therefore view the input circuit as a rooted tree, with the output gate
at the root, and each leaf labeled with one of the $N$ input variables or its
negation (which of course may appear on many leaves). Notice that by moving the
negations as far away from the root as possible we can ensure that all internal
gates are conjunctions or disjunctions.

We will now produce an instance of \textsc{Independent Set}. Suppose the
hypothesized algorithm for \textsc{Independent Set} runs in time
$2^{(1-\delta)|M|}|V|^{O(1)}$.

For the first part of our construction, we use the modulator gadget of
\cref{prop:modulator} to represent the $N$ input variables using a modulator
$M$ of size at most $(1+\delta)N+O(1)$. Let $T$ be the target independent set
size in this part and recall that we have now partitioned $N$ into groups of
$\rho$ variables, each of which is represented by $\gamma$ vertices of $M$ such
that each assignment is encoded by a distinct set of $\frac{\gamma-1}{2}$
vertices.

For the second part, for each gate $g$ of our circuit we will consider the
sub-tree rooted at this gate and produce a pair $(G_g, T_g)$ such that (i)
$G_g$ is a cograph (ii) all independent sets of $G_g$ have size at most $T_g$
(iii) it is possible to place $T_g$ vertices of $G_g$ into an independent set
if and only if our selection in the modulator part encodes an assignment that
sets gate $g$ to True. These properties will be established by induction.

We start with the leaves, which are labeled with literals using input
variables. For such a gate labeled with a literal $\ell$ using variable $x$ we
construct a clique on $2^{\rho-1}$ vertices, where each vertex corresponds to
an assignment to the $\rho$ variables of the group of $x$ that sets $\ell$ to
True. Take each such vertex that represents an assignment $\sigma$ and make it
adjacent to the $\frac{\gamma+1}{2}$ vertices of the group in $M$ which are the
complement of the encoding of $\sigma$. The target size for this graph is $1$.
We see that (i) is satisfied as cliques are cographs (ii) is satisfied as the
maximum independent set size of a clique is $1$ and (iii) is satisfied because
if we have selected an assignment to the group containing $x$ that falsifies
$\ell$, then all vertices of the clique have a neighbor in the independent set;
while in the opposite case, one vertex of the clique is available.

We now proceed inductively. For an AND gate $g$ with children $g_1,g_2$,
suppose we have constructed the pairs $(G_{g_1},T_{g_1})$ and
$(G_{g_2},T_{g_2})$. We construct the pair $(G_{g_1}\cup G_{g_2},
T_{g_1}+T_{g_2})$. We have that (i) the new graph is a cograph, as it is the
disjoint union of two cographs (ii) its maximum independent set is at most
$T_{g_1}+T_{g_2}$ (iii) an independent set can have this many vertices only if
it has $T_{g_1}$ vertices from $G_{g_1}$ and $T_{g_2}$ vertices from $G_{g_2}$,
which implies that $g_1,g_2$ evaluate to True, therefore so does $g$;
conversely, if the assignments sets $g$ to True, then $g_1,g_2$ are set to
True, so independent sets of sizes $T_{g_1},T_{g_2}$ exist by induction.

For an OR gate $g$ with children $g_1,g_2$, suppose we have constructed the
pairs $(G_{g_1},T_{g_1})$ and $(G_{g_2},T_{g_2})$, where without loss of
generality $T_{g_1}\le T_{g_2}$. We set $G_g$ to be the join of $G_{g_1}\cup
(T_{g_2}-T_{g_1})K_1$ with $G_{g_2}$ and $T_g=T_{g_2}$. In other words, we add
$T_{g_2}-T_{g_1}$ isolated vertices to $G_{g_1}$, take the join of the two
graphs, and set the target to the higher of the targets of the two children. We
have that (i) the new graph is a cograph, as it is obtained from cographs via
union and join operations (ii) no independent set can have size more than
$T_{g_2}$, since any independent set can only take vertices from one side of
the join operation (iii) if an independent set of the new graph has size
$T_{g_2}$ either $G_{g_2}$ has an independent set of this size (so $g_2$
evaluates to True, therefore so does $g$) or $G_{g_1}$ has an independent set
of size $T_{g_1}$ (and similarly $g$ evaluates to True); conversely if $g$
evaluates to True, then either $g_1$ does too (so we can form an independent
set of size $T_{g_1}$ and augment it with the $T_{g_2}-T_{g_1}$ new vertices)
or $g_2$ does (so we have an independent set of size $T_{g_2}$). 

Once we arrive at the root $r$ we have constructed a cograph $G_r$ and a target
size $T_r$. We set the total target size to be $T+T_r$ and our instance is made
up of $G_r$ and the modulator gadget construction which consists of $M$ and
some disjoint cliques. Therefore, removing $M$ indeed gives us a cograph and
the whole construction can be performed in polynomial time. Correctness is not
hard to see: because $G_r$ cannot have an independent set of size more than
$T_r$ and the modulator part cannot have an independent set of size $T$
(\cref{prop:modulator}) we must take exactly $T$ vertices from the modulator
part and $T_r$ vertices from the cograph. The $T$ vertices from the modulator
part force us to encode an assignment to the $N$ inputs, which must set the
output gate to True (otherwise we cannot attain $T_r$ vertices in the cograph).

Running the hypothetical algorithm on this instance takes time
$2^{(1-\delta)(1+\delta)N}s^{O(1)} = 2^{(1-\delta^2)N}s^{O(1)}$. So, for
$\eps=\delta^2$, if we start with a circuit of fan-out $1$ and depth at most
$\eps N$ we obtain the desired algorithm.  \end{proof}

\subsubsection{Upper Bound}

\begin{theorem}\label{thm:cograph-upper} If the \aldseth\ is false, then we
have the following: there exist $\eps>0$ and an algorithm which takes as input
a graph $G=(V,E)$ and a modulator $M\subseteq V$ such that $G-M$ is a cograph
and computes the maximum independent set of $G$ in time
$(2-\eps)^{|M|}|V|^{O(1)}$.  The same holds if the vertices of $G$ have weights
bounded by $|V|^{O(1)}$ and we want to compute the independent set of maximum
weight. \end{theorem}

\begin{proof}

Suppose the \aldseth\ is false, so by \cref{thm:aldsethmaxsat} there is an
algorithm that takes as input an integer $t$, a CNF formula $\phi$, a modulator
$M'$, and a tree-depth decomposition of the primal graph of $\phi-M'$ of depth
$O(\log|\phi|)$, and decides if at least $t$ clauses of $\phi$ can be
satisfied, in time $(2-\eps)^{|M'|}|\phi|^{O(1)}$. We will use this algorithm
to solve \textsc{Independent Set}.  We are given a graph $G=(V,E)$ and a
modulator $M$ as in the statement of the lemma. Let $|M|=m$ and $|V\setminus
M|=n$. 

Recall that for any cograph $H$ one can construct in polynomial time a rooted
binary tree showing how $H$ can be constructed from isolated vertices using
Union and Join operations. In particular, there exists a rooted binary tree
where each internal node has two children and is labeled Union or Join and
where we can inductively associate with every node of the tree an induced
subgraph of $H$ as follows: leaves correspond to single vertices of $H$; for an
internal node labeled Union (resp.  Join) the corresponding graph is
constructed by taking the disjoint union (resp. the join, that is, the union
where we add all possible edges between the two parts) of the graphs
corresponding to the children. In such a tree the root corresponds to $H$ and
there is a one-to-one correspondence between vertices of $H$ and leaves of the
tree.

We construct a CNF formula for $G$ as follows. First, for each vertex $v\in V$
we construct a variable $x_v$ (indicating whether $v$ is in the maximum
independent set). Then, we construct the binary tree representing the cograph
$G-M$, which has at most $2n$ nodes. For each node $t$ of this tree we
construct a variable $a_t$, informally indicating whether the corresponding
sub-tree is ``active'', that is whether the optimal independent set contains
any vertices from this sub-tree. We now construct the following clauses:

\begin{enumerate}

\item For each $v_1v_2\in E$ such that $\{v_1,v_2\}\cap M\neq \emptyset$ we add
the clause $(\neg x_{v_1} \lor \neg x_{v_2})$.

\item For each leaf node representing the vertex $v\in V\setminus M$ we add the
clause $(\neg x_v \lor a_v)$.

\item For each non-root node $t$ of the binary tree, if $p$ is the parent of
$t$ in the tree we add the clause $(\neg a_t \lor a_p)$.

\item For each Join node $t$ with children $t_1,t_2$ we add the clause $(\neg
a_{t_1}\lor \neg a_{t_2})$.

\end{enumerate}

We set $M'=\{ x_v\ |\ v\in M\}$. We have now constructed a CNF formula $\phi$
with the following properties:

\begin{enumerate}

\item For all $\ell>0$, $G$ has an independent set of size $\ell$ if and only
if $\phi$ has a satisfying assignment setting $\ell$ variables from $\{ x_v\ |\
v\in V\}$ to True.

\item $\phi-M'$ has tree-depth $O(\log n)$, $|\phi|=(n+m)^{O(1)}$, and
$|M'|=|M|=m$. 

\end{enumerate}

To see the second property, observe that $\phi$ is a 2-CNF formula, and once we
remove $M'$ the primal graph has treewidth $2$ (it is a tree, except the two
children of each Join node are adjacent). Therefore, this graph has tree-depth
$O(\log n)$. For the first property, if we have an independent set $S$ of $G$
of size $\ell$, we set the corresponding variables according to their informal
meaning, that is, $x_v$ is True if and only if $v\in S$; and $a_t$ is True if
and only if $S$ contains a vertex mapped to a leaf that is a descendant of $t$.
It is not hard to see that this satisfies $\phi$. For the converse direction,
we extract a subset $S$ of $V(G)$ from a satisfying assignment of $\phi$ by
selecting all the vertices $v$ for which $x_v$ is True. It is clear that $S$
cannot contain both endpoints of an edge that is incident on $M$, by the
clauses of step 1. Suppose then that $S$ contains both endpoints of an edge of
$G-M$ and let $t$ be the Join node of the tree whose corresponding subgraph
contains this edge and which is as far from the root as possible. Suppose that
the two children of $t$ are $t_1$ and $t_2$, and by the selection of $t$ each
contains one endpoint of the offending edge. We claim that in this case we must
have $a_{t_1},a_{t_2}$ both set to True, thus falsifying a clause constructed
in step 4. To see this, observe that in a satisfying assignment of $\phi$,
whenever we have $x_v$ set to True, we must have $a_t$ set to True for all
nodes $t$ such that $v$ is mapped to a descendant of $t$. This can be seen by
induction: clauses of step 2 ensure that it holds for leaves; and clauses of
step 3 ensure that it holds for all higher levels of the tree.

To conclude the proof we have to argue that the satisfiability algorithm
implied by the assumption that the \aldseth\ is false and
\cref{thm:aldsethmaxsat} allows us to solve the resulting instance. To see
this, we first observe that the algorithm of \cref{thm:aldsethmaxsat} can
handle instances where clauses have (polynomially-bounded) weights (for
instance, by repeating each clause a number of times equal to its weight).  We
can therefore assign to all clauses of our constructed instance a high enough
weight (polynomial in the construction) ensuring that all clauses need to be
satisfied. Then, for each vertex $v$, we have constructed a variable $x_v$. We
simply add to the instance the clause $x_v$ with weight equal to the weight of
the corresponding vertex in the input graph.  \end{proof}

\subsection{Block Graphs} 

A graph is a block graph if every $2$-vertex-connected component is a clique
\cite{BandeltM86}. Block graphs are a well-studied special subclass of chordal
graphs which also includes cluster graphs and trees as subclasses. Clearly,
\textsc{Weighted Independent Set} can be solved in polynomial time on block
graphs. In this section we investigate the complexity of solving this problem
parameterized by the size of a modulator to a block graph. Interestingly, even
though block graphs are incomparable to cographs, the general tenor of our
results is the same for the two problems, namely: we show that beating
brute-force \emph{is equivalent to} refuting the \aldseth\ (using again the
intuition that we can focus on acyclic circuits and that block graphs contain a
tree-like structure in the form of the block decomposition).

\subsubsection{Lower Bound}

\begin{theorem}\label{thm:block-lower} Suppose there exist $\eps>0$ and an
algorithm which takes as input a graph $G=(V,E)$ and a modulator $M\subseteq V$
such that $G-M$ is a block graph and computes the maximum independent set of
$G$ in time $(2-\eps)^{|M|}|V|^{O(1)}$. Then, the \aldseth\ is false.
\end{theorem}

\begin{proof}

Similarly to \cref{thm:cograph-lower}, we will assume that we are given a
boolean circuit with $N$ inputs, size $s$, bounded fan-in, fan-out $1$, and
dept at most $\eps N$ and we want to decide its satisfiability in time
$2^{(1-\eps)N}s^{O(1)}$, for some $\eps>0$. Suppose that an algorithm solves
\textsc{Independent Set} in time $2^{(1-\delta)|M|}|V|^{O(1)}$, given a
modulator $M$ to a block graph.

We construct a graph from the given circuit as follows. To represent the $N$
variables we (again) use the construction of \cref{prop:modulator} to obtain a
set $M$ of at most $(1+\delta)N+O(1)$ vertices. The variables $N$ are
partitioned into groups of size $\rho$ and the groups are matched to disjoint
groups of $\gamma$ vertices. Furthermore, assignments to each group are
injectively mapped to sets of $\frac{\gamma-1}{2}$ vertices inside the
corresponding set of size $\gamma$.

We now inductively build a block graph from the circuit. For each gate $g$ we
build a graph representing the circuit rooted at $g$. If there are $s(g)$ gates
in total in the sub-tree rooted at $g$, then this graph will have maximum
independent set of size exactly $s(g)$. Furthermore, it will contain a special
output vertex $o_g$. We will have the property that it is possible to find an
independent set of size $s(g)$ without selecting $o_g$ if and only if $g$
evaluates to True if we select the input assignment encoded by the independent
set we selected in the modulator $M$.

For an input gate labeled with the literal $\ell$, using variable $x$, we
construct a clique of size $2^{\rho-1}+1$ (so the maximum independent set
indeed has size $s(g)=1$). Label one of the vertices of the clique as the
output vertex. Each other vertex is labeled with an assignment $\sigma$ to the
variables of the group of $x$ that sets $\ell$ to True. For each such vertex
representing assignment $\sigma$, we make it adjacent to the
$\frac{\gamma+1}{2}$ vertices of the corresponding group in $M$ which are the
complement of the encoding of $\sigma$. We can now see that we can form an
independent set of size $1$ while avoiding the output vertex if and only if we
select one of the vertices representing an assignment, which can be done if and
only if we encode an assignment to the input that set $\ell$ to True.

For an OR gate $g$ with two children $g_1,g_2$ we construct a clique of size
$3$. One vertex of this clique is adjacent to $o_{g_1}$, one to $o_{g_2}$, and
one is the output $o_g$ of this gate. Since we added a clique to the graphs
constructed for the sub-trees rooted at $g_1,g_2$, the maximum independent set
size is at most $1+s(g_1)+s(g_2)=s(g)$. If it is possible to find an
independent set of size $s(g)$ avoiding $o_g$, then we must take one of the
other vertices of the new clique, say the one adjacent to $o_{g_1}$; then,
since we must have taken an independent set of size $s(g_1)$ on the sub-tree
for $g_1$, $g_1$ must evaluate to True, and so does $g$. Conversely, if $g$
evaluates to True, then one of its children (say $g_1$) does, so we can find an
independent set that avoids $o_{g_1}$; we then add the corresponding vertex of
the new clique in the independent set and avoid $o_g$.

For an AND gate $g$ we construct two adjacent vertices: one is labeled $o_g$
and is the output, while the other is adjacent to the outputs of the children
of $g$. It is not hard to see that the desired properties are satisfied
inductively in this case as well.

Finally, for the output gate we remove the output vertex that the above
construction would have added. We set the target independent set size to be
$T+s$, where $s$ is the number of gates of the circuit. Correctness is not hard
to see, because by induction we cannot take more than $s$ vertices in the block
graph part, while by \cref{prop:modulator} we cannot take more than $T$ in the
modulator part; hence we take exactly $T$ in the modulator, ensuring that we
can only take $s$ vertices in the block graph part if we satisfy the output
gate (as we have removed its output vertex). To see that removing $M$ results
in a block graph, observe that in the modulator gadget we get a disjoint union
of cliques (which is a block graph), while to represent the circuit we have a
rooted tree of cliques, so each $2$-vertex-connected component is a clique. The
remaining calculations are identical to those of \cref{thm:cograph-lower}.
\end{proof}

\subsubsection{Upper Bound}

\begin{theorem}\label{thm:block-upper} If the \aldseth\ is false, then we have
the following: there exist $\eps>0$ and an algorithm which takes as input a
graph $G=(V,E)$ and a modulator $M\subseteq V$ such that $G-M$ is a block graph
and computes the maximum independent set of $G$ in time
$(2-\eps)^{|M|}|V|^{O(1)}$.  The same holds if the vertices of $G$ have weights
bounded by $|V|^{O(1)}$ and we want to compute the independent set of maximum
weight. \end{theorem}

\begin{proof}

Suppose the \aldseth\ is false, so there is an algorithm that takes as input a
target integer $t$, a CNF formula $\phi$, a modulator $M'$, and a tree-depth
decomposition of the primal graph of $\phi-M'$ of width $O(\log|\phi|)$, and
decides if we can satisfy at least $t$ clauses of $\phi$, in time
$(2-\eps)^{|M'|}|\phi|^{O(1)}$.  We will use this algorithm to solve
\textsc{Independent Set}.  We are given a graph $G=(V,E)$ and a modulator $M$
as in the statement of the lemma. Let $|M|=m$ and $|V\setminus M|=n$. 

We first formulate a claim stating that \textsc{Independent Set} on block
graphs can be encoded by a CNF formula of tree-depth $O(\log n)$.

\begin{claim}\label{claim:block} There is a polynomial time algorithm which
takes as input a block graph on $n$ vertices and produces a CNF formula $\phi$
of size $n^{O(1)}$ and a tree-depth decomposition of $\phi$ of width $O(\log
n)$ as well as a special set of variables $X$ which are in one-to-one
correspondence with the vertices of the graph. The algorithm guarantees that
for all sets of vertices $S$ of the given graph, $S$ is independent if and only
if $\phi$ has a satisfying assignment that sets the variables of $X$
corresponding to $S$ to True.  \end{claim}

Before we give the proof of the claim let us explain why it implies the lemma,
using a strategy that is essentially identical to the proof of
\cref{thm:cograph-upper}.  We apply \cref{claim:block} to the graph $G-M$ and
obtain a formula $\phi$.  We extend this formula with a variable $x_v$ for each
$v\in M$ and add clauses ensuring that for each edge incident on at least one
vertex of $M$, the two endpoints of the edge cannot be selected (this can be
done as $\phi$ has a set of variables $X$ representing the vertices of $G-M$).
We set $M'$ to be the set of new variables and therefore $\phi-M'$ has
tree-depth $O(\log n)$. We have that the new formula has a satisfying
assignment setting $\ell$ variables representing vertices to True if and only
if $G$ has an independent set of size $\ell$. To reduce this maximization
version to the \textsc{MaxSAT} case of \cref{thm:aldsethmaxsat} we again give
very high weight to necessary clauses and introduce clauses $(x_v)$, with
weight equal to the weight of vertex $v$.  We omit the details as they are
identical to the proof of \cref{thm:cograph-upper}.

Let us then proceed to the proof of the claim.

\begin{claimproof}[Proof of \cref{claim:block}]

Recall that a block is a maximal $2$-vertex-connected component in a graph. In
any graph, if $B_1,B_2$ are blocks we have that $|B_1\cap B_2|\le 1$, because
if $B_1,B_2$ share two vertices, then $B_1\cup B_2$ is also
$2$-vertex-connected, contradicting maximality. Furthermore, for any two blocks
$B_1,B_2$ we have at most one edge connecting them. Indeed, if we have two
edges with distinct endpoints connecting $B_1,B_2$, then $B_1\cup B_2$ is
$2$-vertex-connected; while if two edges incident on $x\in B_1$ have their
endpoints in $B_2$, then $B_2\cup\{x\}$ is $2$-vertex-connected. 

Given an input block graph $H$, we construct a graph $H'$ that contains a
vertex for each block of $H$ and has an edge between $B_1,B_2$ if $B_1,B_2$ are
adjacent or share a vertex. We can see that $H'$ is acyclic, as otherwise one
of the blocks would not be maximal. We will use this fact to show that the
\emph{treewidth} of our formula is $O(1)$. Because for all graphs $G$ we have
$\td(G)= O( \tw(G)\log n)$, this will imply the claim.

We begin our formula by constructing, for each vertex $v$ a variable $x_v$
(indicating whether $v$ is in the independent set). These variables will form
the set $X$. We now prove that we can build a CNF formula such that assignments
to $X$ can be extended to satisfying assignments if and only if the set of True
variables of $X$ represents an independent set.

Consider each node of $H'$, which represents a clique $C=\{v_1,\ldots,v_c\}$
with $c=|C|$ of $H$. We introduce $c-1$ new variables $y_1,\ldots,y_{c-1}$
whose informal meaning is that if $y_i$ True then all the variables
$x_{v_1},x_{v_2},\ldots,x_{v_i}$ are False. We add the clause $(\neg y_1\lor
\neg x_{v_1})$ and for all $i\in\{2,\ldots,c-1\}$ the clauses $(\neg y_i\lor
y_{i-1})$ and $(\neg y_i\lor \neg x_{v_i})$. Furthermore, for all
$i\in\{2,\ldots,c-1\}$ we add the clause $(\neg x_{v_i}\lor y_{i-1})$. We now
claim that the new clauses can only be satisfied if and only if at most $1$ of
the $x_{v_i}$ variables is set to True.  To see this, observe that if we set
$x_{v_i}$ and $x_{v_j}$ to True, for $j>i$, we must set $y_{j-1}$ to True; then
we must set $y_{j-2}$ to True, and proceeding in this way $y_i$ to True;
therefore $x_{v_i}$ must be set to False, contradiction. It is easy to see,
however, that setting a single $x_{v_i}$ to True can always be extended to a
satisfying assignment. In other words, the clauses we have constructed so far
ensure that any assignment to $X$ that sets to True to variables representing
vertices of the same clique will fail to satisfy the formula. To conclude the
construction we consider any pair of vertices $u,v$ of $H$ such that $u,v$ are
adjacent but belong in distinct 2-connected components. For any two such
vertices we add the clause $(\neg x_u\lor \neg x_v)$.

It is now not hard to see that the formula we have constructed so far is
correct in the sense that satisfying assignments are always extensions of
assignments that encode independent sets of $H$. Let us then argue that we have
constant treewidth. We consider each vertex of $H'$ separately. Consider the
node of $H'$ that represents the clique $C=\{v_1,\ldots,v_c\}$ and recall that
we used the variables $x_{v_1},\ldots,x_{v_c}$ and the $c-1$ new variables
$y_1,\ldots,y_{c-1}$. Because our clauses have a path-like structure (each
$y_i$ is involved in clauses with $y_{i-1}$ and $x_{v_i}$) we can produce a
tree (in fact path) decomposition with at most $3$ variables per bag.

We now have a tree decomposition for the part of the formula representing each
vertex of $H'$. To transform this into a decomposition of the whole formula, we
consider any two $h_1,h_2$ which are adjacent in $H'$. If these correspond to
blocks that share a vertex $u$, we select a bag in each decomposition that
contains $u$ and make the two bags adjacent; if $h_1,h_2$ represent blocks
connected by an edge with endpoints $u,v$, then we construct a new bag that
contains $u,v$ and make it adjacent to a bag that contains $u$ and a bag that
contains $v$. It is now not hard to see that because $H'$ is acyclic,
performing these operations will merge the tree decompositions for the vertices
of $H$ into a single tree decomposition for the whole formula.
\end{claimproof} 

Having established \cref{claim:block} we can conclude that a fast algorithm
falsifying the \logpwmseth\ would imply a fast algorithm for
\textsc{Independent Set} using a modulator to a block graph.  \end{proof}

\subsection{Weighted Interval Graphs}

In this section we consider the \textsc{Weighted Independent Set} problem on
graphs which are ``close'' to being interval graphs. Our goal is to prove the
following:

\begin{theorem}\label{thm:interval} The following two statements are
equivalent:

\begin{enumerate}

\item There exist $\eps>0$ and an algorithm which takes as input a
vertex-weighted graph $G=(V,E)$ with unary weights and a modulator $M\subseteq
V$ such that $G-M$ is an interval graph and computes the maximum weight
independent set of $G$ in time $(2-\eps)^{|M|}|V|^{O(1)}$.

\item The \logpwmseth\ is false.

\end{enumerate}

\end{theorem}

Before we proceed, let us give some intuition. Recall that the \logpwmseth\
informally corresponds to NL, for which complete problems typically have a
\textsc{Reachability} flavor.  What we are claiming here is that
\textsc{Weighted Independent Set} on graphs which are close to interval graphs
is a problem of the same type.  Informally, this happens because the natural
algorithm for this problem would proceed as follows: first, guess which
vertices of $M$ are in an optimal solution; then, solve the problem on the
remaining instance, which is an interval graph. However, solving
\textsc{Weighted Independent Set} on interval graphs can naturally be done via
dynamic programming, which in turn can be seen as a form of a reachability
question.

Let us begin with the easier direction of the proof of \cref{thm:interval}.

\begin{lemma} The second statement of \cref{thm:interval} implies the first.
\end{lemma}

\begin{proof} By \cref{thm:2sat}, we can assume that there exists an algorithm
which takes as input a CNF formula $\phi$,  a modulator $M'$ of size $m'$, and
a path decomposition of the primal graph of $\phi-M'$ of width $O(\log|\phi|)$
and decides if $\phi$ is satisfiable in time $(2-\eps)^{m'}|\phi|^{O(1)}$, for
some $\eps>0$. 

Suppose now we are given a weighted graph $G=(V,E)$, a modulator $M\subseteq V$
of size $m$ such that $G-M$ is an interval graph, and a target weight $T$.  We
are asked if $G$ has an independent set of weight at least $T$. Let $n=|V|-m$.
Without loss of generality, assume that the sum of all weights is at most $n$
(this can be achieved by adding isolated vertices of weight $0$ to the graph so
that $|V|-m$ becomes larger than the largest weight; this can be performed in
polynomial time, since weights are given in unary) and that $n$ is a power of
$2$ (again, by adding isolated $0$-weight vertices as necessary).  Suppose also
that we have an interval representation of $G-M$, which assigns an interval
$[s_v,t_v]$ to each $v\in V\setminus M'$ such that $[s_v,t_v]\cap [s_u,t_u]\neq
\emptyset$ if and only if $uv\in E$.  Such a representation can be found in
linear time \cite{CorneilOS09}. Again, without loss of generality we may assume
that the endpoints of all intervals are distinct, so we have $2n$ endpoints in
total, and they correspond to the set $\{1,\ldots,2n\}$.

We will construct a CNF formula $\phi$ and a modulator $M'$ with the following
properties:

\begin{enumerate}

\item $\phi$ is satisfiable if and only if $G$ has a weighted independent set
of weight at least $T$.

\item $\phi$ has size polynomial in $n+m$ and $|M|=|M'|=m$. 

\item $\phi-M'$ has pathwidth $O(\log n)$.

\item $\phi,M'$ and a path decomposition of $\phi-M'$ of width $O(\log n)$ can
be constructed in time polynomial in $n+m$.

\end{enumerate}

Clearly, if we can achieve all the above we obtain the lemma, as in such a case
we could invoke the algorithm that breaks the \logpwmseth,  solving the
resulting instance in $(2-\eps)^{m} (n+m)^{O(1)}$.

Assume that the vertices of $M$ are numbered $\{1,\ldots,m\}$ and the vertices
of $V\setminus M$ are numbered $\{0,\ldots,n-1\}$.  We construct $\phi$ as
follows:

\begin{enumerate}

\item For each $v\in M$ we construct a variable $x_v$, indicating whether $v$
is in the optimal independent set. Let $M'$ be the set of these variables.

\item For each $i\in\{1,\ldots,2n\}$ we construct a variable $a_i$ indicating
whether the optimal solution has selected an interval that contains $i$.

\item For each $i\in\{1,\ldots,2n\}$ we construct $\log n$ variables
$y_{i,1},\ldots,y_{i,\log n}$ which are supposed to be the binary
representation of the index of the (at most one) vertex of the independent set
whose interval contains $i$.

\item For each $i\in\{1,\ldots,2n+m\}$ we construct $\log n$ variables
$z_{i,1},\ldots,z_{i,\log n}$.  For $j\in\{1,\ldots,2n\}$ these variables are
supposed to store the weight of the optimal independent set if we consider only
vertices whose intervals intersect $\{1,\ldots,j\}$. For $j>2n$ these variables
will represent the weight of the optimal independent set restricted to $G-M$
and the first $j-2n$ vertices of $M$.

\item For every edge $uv\in E$ such that $u,v\in M$ we construct the clause
$(\neg x_u \lor \neg x_v)$.

\item For every $i\in\{1,\ldots,2n\}$ and every vertex $u\not\in M$ such that
$i$ does \emph{not} belong in the interval associated with $u$ we construct a
clause which initially contains only $\neg a_i$. We add to this clause literals
using the variables $y_{i,1},\ldots,y_{i,\log n}$ which are falsified if and
only if these variables encode the index of $u$.

\item For every edge $uv\in E$ such that $u\in M$ and $v\not\in M$, for each
$i\in \{1,\ldots,2n\}$ such that $i$ belongs in the interval associated with
$v$, we add the clause $(\neg a_i \lor  \neg x_u)$; we then add to this clause
the variables $y_{i,1},\ldots,y_{i,\log n}$ properly negated so that these
literals are all false if and only if their assignment corresponds to the
binary representation of the index of $v$.

\item For every $i\in\{1,\ldots,2n-1\}$ and for any two distinct vertices
$u,v\not\in M$ such that $i$ and $i+1$ both belong in both intervals associated
with $u,v$ we construct a clause which initially contains only $(\neg a_i \lor
\neg a_{i+1})$. We then add to this clause literals using the variables
$y_{i,1},\ldots,y_{i,\log n}$ which are falsified if these variables encode the
index of $u$ and literals using the variables $y_{i+1,1},\ldots,y_{i+1,\log n}$
which are falsified if these variables encode the index of $v$.

\item For every $i\in \{1,\ldots,2n-1\}$ and for any vertex $u\not\in M$ such
that $i$ and $i+1$ both belong in the interval associated with $u$ we construct
a clause which initially contains $(\neg a_i \lor a_{i+1})$. We then add to
this clause literals using the variables $y_{i,1},\ldots,y_{i,\log n}$ which
are falsified if these variables encode the index of $u$.

\item For every $i\in\{1,\ldots,2n\}$ which is the starting endpoint of the
interval associated with a vertex $u\in V\setminus M$ we do the following:
Construct (but do not yet add to the formula) a base clause which initially
contains $\neg a_i$ and add to this clause literals using
$y_{i,1},\ldots,y_{i,\log n}$ which are falsified if and only if these
variables encode the index of $u$.  Now, for each assignment to the variables
$z_{i,1},\ldots, z_{i,\log n}$ and $z_{i-1,1},\ldots,z_{i-1,\log n}$ such that
the difference of the binary numbers encoded by these two sets of variables is
not equal to the weight of $u$ we construct a copy of the base clause and add
literals falsified by these assignments. (For $i=1$ we assume that
$z_{i-1,1},\ldots,z_{i-1,\log n}$ are all set to False, that is, they encode
the number $0$.) In this way we add at most $n^2$ clauses for each $i,u$.

\item For every $i\in \{1,\ldots,2n-1\}$ and for each $j\in \{1,\ldots,\log
n\}$ we add clauses ensuring that $z_{i,j}=z_{i+1,j}$. If $i$ is a starting
endpoint of an interval we add to all these clauses the literal $a_i$.

\item For every $i\in \{1,\ldots,m-1\}$ and for each $j\in \{1,\ldots,\log n\}$
add a clause ensuring that $z_{2n+i,j}=z_{2n+i+1,j}$. Add the literal $x_i$ to
each such clause.

\item For every $i\in \{1,\ldots,m-1\}$ and for each assignment to the
variables $z_{2n+i,1},\ldots, z_{2n+i,\log n}$ and
$z_{2n+i-1,1},\ldots,z_{2n+i-1,\log n}$ such that the difference of the binary
numbers encoded by these two sets of variables is not equal to the weight of
the $i$-th vertex of $M$, add a clause that contains $\neg x_i$ and is
falsified exactly by this combination of assignments to the $z$ variables.

\item Finally, for each integer strictly smaller than the target weight $T$ add
a clause that is falsified if $z_{2n+m,1},\ldots,z_{2n+m,\log n}$ encodes this
integer.

\end{enumerate}

This completes the construction, which can clearly be carried out in polynomial
time. Furthermore, it is not hard to produce a path decomposition of $\phi-M'$
by constructing $2n+m-1$ bags and including in the $i$-th bag for all
$j\in\{1,\ldots,\log n\}$ the variables
$a_i,a_{i+1},y_{i,j},y_{i+1,j},z_{i,j},z_{i+1,j}$ (if they exist). This
decomposition has width $O(\log n)$.

What remains then is to argue for correctness. If $G$ has an independent set
$S$ of the desired weight we give assignments to the variables corresponding to
their informal meanings: set $x_v$ to True if and only if $v\in M\cap S$; set
$a_i$ to True if and only if $S$ contains a vertex $u\not\in M$ whose interval
representation contains $i$ and in this case set $y_{i,1},\ldots,y_{1,\log n}$
to encode the index of $u$; set $z_{i,1},\ldots,z_{i,\log n}$ to encode the
weight of $S$ restricted to the vertices of $V\setminus M$ whose intervals
intersect $\{1,\ldots,i\}$ if $i\le 2n$ or to the weight of $S$ restricted to
$V\setminus M$ and the first $i-2n$ vertices of $M$ if $i>2n$. It is not hard
to see that this assignment satisfies all the clauses.

For the converse direction, we extract a set $S\subseteq V$ of vertices from a
satisfying assignment as follows: for the vertices  $v\in M$ we place $v$ in
$S$ if and only if the assignment sets $x_v$ to True; for $v\not\in M$ we place
$v$ in $S$ if and only if there exists $i\in\{1,\ldots,2n\}$ which belongs in
the interval of $v$ such that $a_i$ is True and $y_{i,1},\ldots,y_{i,\log n}$
encodes the index of $v$. We claim that the set $S$ we construct in this way is
independent. Indeed, it is easy to see it cannot contain the two endpoints of
an edge contained in $M$, because of the clauses of step 5. For each $v\in
S\setminus M$ there is an $i$ such that $a_i$ is True and
$y_{i,1},\ldots,y_{i,\log n}$ encodes the index of $v$. For the sake of
contradiction (to $S$ being independent) select a $v\in S\setminus M$ that
minimizes $i$ and has a neighbor in $S$. Because of the clauses of step 6 we
observe that the interval of $v$ must contain $i$. If the neighbor of $v$ in
$S$ were also in $M$, this would falsify a clause added in step 7, so this
cannot happen.  Because of the clauses of step 9 we observe that if $i+1$ also
belongs to the interval of $v$, then $a_{i+1}$ is set to True. In this case,
because of the clauses of steps 8 and 6, $y_{i+1,1},\ldots,y_{i+1,\log n}$ must
also encode the index of $v$.  Hence, for all $i'>i$ which belong in the
interval of $v$, the assignment encodes the index of $v$. We conclude that $S$
must be independent.

Given that $S$ is independent, it is not hard to see that the $z_{i,j}$
variables must be set in a way that calculates its total weight, and because of
the clauses of the last step we know that this weight must be at least $T$ for
the assignment to be satisfying.  \end{proof}

Let us now proceed to the more challenging direction, where we need to reduce
\textsc{SAT} parameterized by a modulator to logarithmic pathwidth to
\textsc{Weighted Independent Set} parameterized by a modulator to an interval
graph. To represent the modulator we will again rely on the gadget of
\cref{prop:modulator}.

The main part of the reduction will encode a path decomposition of the primal
graph of a CNF formula into an interval graph whose size is exponential in the
width of the decomposition. To simplify presentation, we will require the given
decomposition to obey a couple of simple (and easy to attain) structural
properties.

\begin{definition} A path decomposition is called uniformized if all bags have
the same size and the symmetric difference between any two consecutive
non-identical bags consists of exactly two vertices. \end{definition}

\begin{proposition}\label{prop:uniformized} There exists a polynomial time
algorithm which, given a path decomposition of a graph, produces a uniformized
decomposition of the same width. \end{proposition}

\begin{proof} Repeat the following exhaustively: if there exist two bags of
different sizes, then find two consecutive bags $B,B'$ of distinct sizes (which
must exist) and add an element of the larger bag to the smaller one. In the end
of this process all bags have the same size. Then, if two consecutive bags
$B,B'$ have symmetric difference consisting of more than two elements, insert
between them a sequence of bags that exchanges the elements of $B\setminus B'$
with elements of $B'\setminus B$ one by one. In the end of this process the
decomposition is uniformized.  \end{proof}

\begin{lemma}\label{lem:interval-gadget}

There is an algorithm which takes as input an $n$-variable CNF formula $\phi$
and a width $p$ uniformized path decomposition of its primal graph. The
algorithm runs in time $2^{O(p)}n^{O(1)}$. Its output consists of: (i) an
instace of \textsc{Weighted Independent Set} on an interval graph $G=(V,E)$
with target value $T$ (ii) an annotation function which assigns to each vertex
of $G$ a partial truth assignment to some variables of $\phi$.  The output
satisfies the following:

\begin{enumerate}

\item For all truth assignments to the variables of $\phi$ there exists an
independent set in $G$ that has weight $T$ and consists exactly of the vertices
whose annotation is consistent with this assignment.

\item All independent sets of $G$ which contain vertices whose annotations are
inconsistent on some variable have weight strictly less than $T$.

\item All independent sets of $G$ have weight at most $T$.

\item There exists an injective mapping which, given as input a bag of the
decomposition and a truth assignment to its variables, returns a vertex of $G$
annotated with this assignment.

\item All vertices are given weights in $\{1,2,3\}$.

\end{enumerate}

\end{lemma}

\begin{proof} 

Suppose the bags of the given decomposition are numbered $B_1,\ldots,B_t$.
Suppose also, in order to ease notation, that each bag has exactly $p$ (rather
than $p+1$) variables. 

We process the bags one by one in this order and will produce groups of
vertices (and their associated intervals) called ``bundles''.  Each bundle
produced for $B_i$ will contain $2^p$ vertices, such that for each assignment
to the variables of $B_i$ exactly one of the vertices of the bundle will be
annotated with this assignment (this will allow us to satisfy requirement 4).
The intervals of a bundle will all share a common point (and hence form a
clique).  We will ensure that all intervals of each bundle have distinct
endpoints and define the rank of a vertex $v$ in a bundle as the number of
vertices that have strictly smaller right endpoint than $v$ in this bundle (so
the rank is an integer in $\{0,\ldots,2^p-1\}$).

The first bundle we construct is a special Start bundle, corresponding to the
variables of $B_1$: for each $i\in\{0,\ldots,2^p-1\}$ we construct a vertex
represented by the interval $[i,2^p-1+i]$ and annotated by the binary
representation of $i$ (seen as a truth assignment to the variables of $B_i$).
These vertices have weight $2$. Furthermore, for each $i\in\{0,\ldots,2^p-1\}$,
we construct $i-1$ singleton intervals $[0,0], [1,1],\ldots, [i-1,i-1]$, each
annotated with the assignment $i$ and each with weight $1$. We refer the reader
to \cref{fig:interval-gadget}.

We have now constructed our first bundle. Before we proceed to the construction
of the remaining bundles, let us explain the invariant that this construction
intends to maintain. Recall that in an independent set of intervals there is a
natural ordering of the vertices. We want to ensure that for each vertex $v$ of
the $j$-th bundle, the maximum weight achievable by any independent set whose
last vertex is $v$ should be exactly $2j+\textrm{rank}(v)$. Furthermore, this
weight should only be achievable if and only if we select vertices whose
annotation is consistent with a global truth assignment. This invariant holds
for the Start bundle ($j=1$) as for each vertex $v$ the best independent set
that has $v$ as the last vertex has weight $2+\textrm{rank}(v)$.

We now observe that, having constructed $j$ bundles, it is easy to add a Do
Nothing bundle which maintains the invariant: for each $i\in\{0,\ldots,2^p-1\}$
construct a weight-$2$ interval that has rank $i$ in the new bundle, is
annotated with the same truth assignment as the interval of rank $i$ in the
previous bundle, and intersects the interval of rank $i+1$ of the previous
bundle, but not the interval of rank $i$. Clearly, the only way to maintain the
invariant here is to select an interval with the same annotation as in the
previous bundle, so this simple construction allows us to propagate an
assignment (we again refer to \cref{fig:interval-gadget}).

The Do Nothing bundle is actually not necessary for our construction, but it
allows us to more easily explain the rest. Having now constructed a bundle for
the first bag, we want to proceed to the next bag of the decomposition, which
has replaced one of the variables of the current bag (say $x$) with a new one.
We start with an easy case. Suppose that the intervals of the current bundle
are annotated such that for all $i\in\{0,\ldots,2^{p-1}-1\}$ we have that the
intervals with ranks $2i$ and $2i+1$ are annotated with assignments that agree
on all variables except $x$. Intuitively, this roughly corresponds to a
situtation where, if we read the assignments as binary numbers and order them
by rank, $x$ is the least significant bit. In this case we construct a New
Variable bundle, meant to allow us to replace $x$ with a new variable.  This is
done by starting with a Do Nothing bundle, and then for each
$i\in\{0,\ldots,2^{p-1}-1\}$ cutting the interval of rank $2i$ in the new
bundle into two non-intersecting intervals of weight $1$. The starting point of
the second of the two intervals is the same as that of the interval of rank
$2i+1$ in the new bundle (again see \cref{fig:interval-gadget}). Cutting up the
interval offers two new possibilities: if we had previously selected the
interval of rank $2i+1$, we can now select the interval of rank $2i$, but which
has weight $1$ (so the invariant is satisfied); conversely, if we had selected
the interval of rank $2i$, we can select the left part of the cut interval and
also the new interval of rank $2i+1$ (so the rank and the weight are increased
by $1$ and the invariant is satisfied). Observe, however, that any selection
that produces annotations that disagree on a variable still leads to a
sub-optimal solution.

What remains to explain is how to handle the general case, where the
assignments are not conveniently ranked so that the variable we are forgetting
is the only disgreement between the intervals of rank $2i$ and $2i+1$, for all
$i$. For this we introduce a Flip bundle, which allows us to switch the ranks
of two assignments which are consecutive in the ranking. If we manage to do
this repeatedly while maintaining the invariant, we will be able to produce any
permutation of the assignments within at most $(2^p)^2$ steps, hence also an
assignment satisfying the conditions needed for a New Variable bundle.

A Flip bundle is then constructed as follows: start with a Do Nothing bundle,
but make the interval of rank $i$ end a little later and the interval of rank
$i+1$ end a little sooner, so that the ranks of all other intervals remain
unchanged but the ranks of these two are exchanged. Assign weight $1$ to the
interval that was shortened and $3$ to the interval that was lengthened. It is
not hard to see that the only way to maintain an optimal solution that
satisfies the invariant is to make a selection that is consistent in the new
bundle.  In particular, a solution which previously had the interval of rank
$i+1$ still cannot take the interval which (now) has rank $i+1$ in the new
bundle, so it must take the interval that has rank $i$ and weight $1$ (so both
weight and rank are decreased by $1$); conversely a solution that had the
interval of rank $i$ must select the interval that now has rank $i+1$ in the
new bundle (and weight $3$) as the interval that now has rank $i$ has weight
$1$.

We are now ready to describe the full construction. After the Start bundle, for
each new bag we introduce a sequence of Flip bundles, ensuring that we have a
ranking of the assignments so that the variable to be forgotten is the only
point of disagreement between assignments of rank $2i$ and $2i+1$, for all
$i\in\{0,\ldots,2^{p-1}-1\}$. We then add a New Variable bundle, and continue
in the same way. Finally, for the last bag we construct a bundle that is the
reverse of the Start bundle, ensuring that a solution with rank $i$ can collect
additional weight of $2^p-1-i$. We have constructed at most $N\le 4^pt$ bundles
and we set $T=2N+2^p-1$. We can now verify by induction, using the invariant we
mentioned, that all the promised properties hold.  \end{proof}

\begin{figure}

\includegraphics[width=0.9\textwidth]{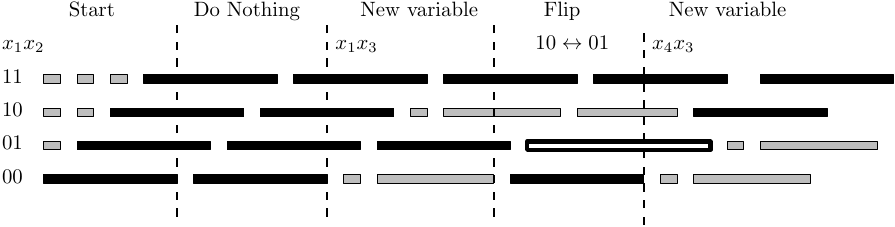} \caption{Example of the
construction of \cref{lem:interval-gadget}. Gray intervals have weight $1$,
solid black intervals weight $2$, and empty intervals weight $3$. The intervals
are annotated with the truth assignment of their line, for the active variables
of their bundle ($x_1,x_2$ for the first two bundles, $x_1,x_3$ for the next
two, and $x_4,x_3$ for the last one). The optimal solution is supposed to
collect weight $2$ per bundle, plus an extra amount equal to the rank of the
selected interval in the last bundle. The first new variable gadget allows us
to switch between $00$ and $01$ (similarly $10$ and $11$) without cost, as we
forget variable $x_2$. The flip gadget switches the ranks of $01$ and $10$,
allowing us to use a new variable gadget to forget
$x_1$.}\label{fig:interval-gadget}

\end{figure}

Armed with the previous lemma, we are now ready to prove the second part of
\cref{thm:interval}:

\begin{lemma} The first statement of \cref{thm:interval} implies the second.
\end{lemma}

\begin{proof}

Suppose there exists an algorithm which takes as input a vertex-weighted graph
$G=(V,E)$ with unary weights and a modulator $M$ such that $G-M$ is an interval
graph, and computes a maximum weight independent set in time
$2^{(1-\eps)|M|}|V|^{O(1)}$.  We will use this algorithm to refute the
\logpwmseth.

We are given as input an $n$-variable formula $\phi$, which by \cref{obs:arity}
may be assumed to be in 3-CNF (hence has total size $O(n^3)$). We are also
given a modulator $M'$ and a path decomposition of the primal graph of
$\phi-M'$ of width $O(\log n)$. The goal is to decide satisfiability of $\phi$
in time $(2-\eps')^{|M'|}n^{O(1)}$. In the remainder we will assume without
loss of generality that all clauses of $\phi$ involve a variable outside $M'$;
if this is not the case we can branch on the satisfying assignments of the
corresponding clause (which are at most $7$) and if we can do this repeatedly
we obtain an algorithm faster than brute-force. So, the interesting case is
when no such clauses exist.

We will construct a graph $G=(V,E)$, a target weight $T$, and a modulator $M$
satisfying the following:

\begin{enumerate}

\item $G$ has an independent set of weight $T$ if and only if $\phi$ is
satisfiable.

\item $|M|\le (1+\eps)|M'| + O(1)$.

\item $G$ and $M$ can be constructed in time $n^{O(1)}$.

\end{enumerate}

If we achieve the above, we obtain the lemma, as executing the reduction and
then the supposed algorithm for computing a maximum weight independent set
would run in time at most $2^{(1-\eps^2)|M'|}n^{O(1)}$.

Our construction consists of two parts: a part that encodes the modulator $M'$
and a part that represents the remaining formula using
\cref{lem:interval-gadget}. 

For the first part, we use the gadget of \cref{prop:modulator} and this encodes
the formula modulator $M'$ into a graph modulator $M$ of the desired size.  Let
$T_1$ be the maximum independent set size in the graph constructed so far, as
given by \cref{prop:modulator}.

For the second part of the construction we invoke \cref{prop:uniformized}, so
we can assume the given path decomposition of $\phi-M'$ is uniformized, and
then invoke \cref{lem:interval-gadget} to obtain an interval graph of size
polynomial in $n$ and a target weight $T_2$. Recall also that all vertices of
the interval graph are annotated with assignments to some variables of
$\phi-M'$ and that we have an injective mapping which, given a bag and an
assignment to its variables returns a vertex annotated with this assignment.
What remains now is to encode the clauses of $\phi$. Let $m$ be the number of
clauses of $\phi$. Before we proceed, we multiply all the weights of the
interval graph by $m+1$, making the target weight $T_2(m+1)$.

Assume without loss of generality that for each clause of $\phi$ there exists a
distinct bag of the given decomposition that contains all the variables of the
clause (except those of $M'$). Such a bag exists, since these variables form a
clique; we can make the mapping from clauses to bags injective by simply
repeating bags in the decomposition as needed. Consider now a clause $c$ whose
variables appear in a bag $B$. We have a mapping that gives us a vertex of the
graph for each assignment to the variables of $B$. For each such assignment
that satisfies $c$, we find the corresponding vertex and increase its weight by
$1$. For each assignment which does not satisfy $c$, let $v$ be the interval
returned by the mapping for this assignment. We will replace $v$ with several
copies of the same interval, all of which will have weight equal to the weight
of $v$ plus $1$. These intervals will be connected to $M$ in a way that ensures
that they can be selected in the independent set only if the assignment we have
encoded for $M'$ satisfies $c$. In particular, $c$ contains at most $2$
variables from $M'$, which belong in two (perhaps distinct) groups of size
$\rho$. Consider all assignments to the $2\rho$ variables of these groups and
recall that we have mapped each such assignment to a set of
$\frac{\gamma-1}{2}$ vertices of each group. For each combination of
assignments that satisfies $c$ we construct a copy of $v$ adjacent to the
$\frac{\gamma+1}{2}$ vertices of each group which are not in the independent
set when we have selected these assignments.

We now assign to all vertices of the first part weight $T_2(m+1)+m+1$ and set
the total target weight to $T=T_1(T_2(m+1)+m+1)+T_2(m+1)+m$. What remains is to
argue for correctness. For the easy direction: if $\phi$ is satisfiable, then
we select vertices in each group of $M$ according to the mapping we have
decided from truth assignments to sets of size $\frac{\gamma-1}{2}$ and
complete the selection as guaranteed by \cref{prop:modulator}. This gives us a
weight of $T_1(T_2(m+1)+m+1)$.  We then move on to the second part and select
all intervals annotated with an assignment consistent with our global truth
assignment. According to \cref{lem:interval-gadget} this allows us to obtain a
weight of $T_2(m+1)$.  Observe, however, that in some cases we have constructed
several copies of an existing interval $v$. In case these copies are annotated
in a way consistent with the truth assignment, we know that the assignment
satisfies the clause represented by these copies using variables of the
modulator. Therefore, there must exist a copy that we can add to the solution.
To see that we attain the desired total weight observe that for each clause of
$\phi$ we have increased the weight of a selected interval by $1$.

For the converse direction, we have made vertices of the first part
sufficiently expensive to ensure that $T_1$ of them must be selected and
therefore that the selection encodes a truth assignment to the modulator. In
particular, the total weight we can obtain in the second part is at most
$T_2(m+1)+m$, because by \cref{lem:interval-gadget} all independent sets in the
interval graph have initial weight at most $T_2$ (and we have multiplied these
weights by $m+1$) and we have added $1$ more for each clause to each potential
solution. Therefore, vertices of the first part are sufficiently expensive to
ensure we have selected $T_1$ of them, and in a similar way vertices of the
second part are sufficiently expensive to ensure we have selected $T_2$ of
them, which according to \cref{lem:interval-gadget} ensures we can extract a
global consistent truth assignment. Observe now that for each clause we can
gain an extra weight of $1$, so to reach the target weight we must have an
assignment that satisfies all clauses.  \end{proof}



\newpage

\bibliography{shades}

\newpage

\appendix

\section{Complexity Hypotheses}\label{sec:hypo}

\subsection{List of Hypotheses}\label{sec:list}

We list here the complexity hypotheses that appear in this paper.

\begin{itemize}

\item \textsc{SAT}-related

\begin{enumerate}

\item \seth: \cite{ImpagliazzoP01} For all $\eps>0$ there exists $k$ such that
there is no algorithm solving $k$-\textsc{SAT} for $n$-variable formulas $\phi$
in time $(2-\eps)^n|\phi|^{O(1)}$, where $n$ is the number of variables.

\item \maxsatseth: The same as the previous statement but for \textsc{Max-SAT},
that is, the problem of deciding if there is an assignment satisfying at least
$t$ clauses, where $t$ is given in the input.

\end{enumerate}

\item Circuits

\begin{enumerate}[resume]

\item \circseth: For all $\eps>0$ there is no algorithm deciding if an
$n$-input circuit of size $s$ is satisfiable in time $(2-\eps)^ns^{O(1)}$.

\item \aldseth: The same as the previous statement, but restricted to circuits
of depth at most $\eps n$.

\end{enumerate}

\item Backdoors

\begin{enumerate}[resume]

\item \twosatseth: For all $\eps>0$ there is no algorithm deciding if a CNF
formula $\phi$ with a given strong \textsc{2-SAT} backdoor of size $b$ is
satisfiable in time $(2-\eps)^b|\phi|^{O(1)}$.

\item \hornseth: The same as the previous statement, but for a strong Horn
backdoor.

\end{enumerate}

\item Modulators

\begin{enumerate}[resume]

\item \tdmseth: For all $\eps>0$ there exist $c,k$ such that there is no
algorithm that takes as input a $k$-CNF formula $\phi$ and a modulator $M$ of
size $m$ such that $\phi-M$ has tree-depth at most $c$ and decides if $\phi$ is
satisfiable in time $(2-\eps)^m|\phi|^{O(1)}$.

\item \pwmseth: For all $\eps>0$ there exists $c$ such that there is no
algorithm that takes as input a CNF formula $\phi$ and a modulator $M$ of size
$m$ such that $\phi-M$ has pathwidth at most $c$ and decides if $\phi$ is
satisfiable in time $(2-\eps)^m|\phi|^{O(1)}$.

\item \twmseth: The same as the previous statement but for treewidth at most
$c$.

\item \logtdmseth: The same as the previous statement but for tree-depth at
most $c\log |\phi|$.

\item \logpwmseth: The same as the previous statement but for pathwidth at most
$c\log |\phi|$.

\item \hubmaxsatseth: For all $\eps>0$ there exist $k,\sigma,\delta$ such that
there is no algorithm which takes as input a $k$-CNF $\phi$ and a \sdh\ $M$ of
size $m$ and computes an assignment satisfying the maximum number of clauses in
time $(2-\eps)^m|\phi|^{O(1)}$.

\item \sdhseth: The same as the previous statement but for deciding if the
input formula is satisfiable.

\end{enumerate}

\item Weighted \textsc{SAT}

\begin{enumerate}[resume]

\item \wsatseth: For all $\eps>0$ there is no algorithm which takes as input a
(general) Boolean formula $\phi$ of size $s$ on $n$ variables and an integer
$k$ and decides if $\phi$ is satisfiable by an assignment that sets exactly $k$
variables to True in time $n^{(1-\eps)k}s^{O(1)}$.

\item \wpseth\ (weak version): Same as the previous statement but for Boolean
circuits.

\item \wpseth\ (strong version): For all $\eps>0, k_0>0$, there is no algorithm
which for all $k>k_0$ takes as input a Boolean circuit of size $s$ and decides
if the circuit has a weight-$k$ satisfying assignment in time $s^{(1-\eps)k}$.

\end{enumerate}

\item Others

\begin{enumerate}[resume]

\item \mtsh: The same as \maxsatseth\ but for CNFs of maximum arity $3$.

\item \ncseth: The same as \aldseth\ but for circuits of depth at most
$\log^{O(1)}n$.

\item \ppseth: For all $\eps>0$ there is no algorithm that solves \tsat\ in
time $(2-\eps)^{\pw}n^{O(1)}$ for formulas on $n$ variables of primal pathwidth
$\pw$.

\end{enumerate}

\end{itemize}

\subsection{Relations between Classes of Hypotheses}

Many of the relations between the hypotheses we consider can be inferred
directly from the definitions or from the relations between graph widths. In
particular, we have $\tw \le \pw \le \td \le \tw\log n$ for any graph on $n$
vertices. This gives $1\Rightarrow 7\Rightarrow 8\Rightarrow 9 \Rightarrow 10
\Rightarrow 11 \Rightarrow 13 \Rightarrow 12$. Furthermore, it is clear that
deciding if a general circuit is satisfiable is at least as hard as a circuit
with depth $\eps n$, which is at least as hard as a CNF formula, so
$1\Rightarrow 4\Rightarrow 3$.  We have $14\Rightarrow 15$ as formulas are a
special case of circuits. Also $2\Rightarrow 12$ as the set of all variables is
a \sdh.

It is worth noting that the arity $k$ appears explicitly in statement 7, but
not in statements 8 to 11. This is because of the following observation, which
allows us to decrease the arity once the parameter we are dealing with allows
the construction of long paths.

\begin{observation}\label{obs:arity} There is a polynomial time algorithm which
takes as input a CNF formula $\phi$ and a modulator $M$ and produces an
equisatisfiable 3-CNF formula $\psi$ by adding variables to $\phi$ such that
$M$ remains a modulator of $\psi$ and we have $\pw(\psi-M)\le \pw(\phi-M)+2$,
$\tw(\psi-M)\le \tw(\phi-M)+2$, $\td(\psi-M) \le \td(\phi-M)+O(\log |\phi|)$.
\end{observation}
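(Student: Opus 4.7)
The plan is to apply the standard CNF-to-3-CNF transformation to each clause of $\phi$ of size larger than $3$ using fresh variables $y_i$, but with a careful choice of literal ordering and of new-variable placement to control the effect on $\psi-M$. For each clause $C$ of $\phi$ with $|C|>3$, if $C$ has at least one modulator literal I reorder the literals of $C$ so that a modulator literal appears both first and last in the sequence, duplicating the modulator literal if $C$ has only one (which preserves the Boolean function), and then apply the standard sequential split $(\ell_1\vee\ell_2\vee y_1),(\neg y_1\vee\ell_3\vee y_2),\ldots,(\neg y_{k-3}\vee\ell_{k-1}\vee\ell_k)$. The resulting $\psi$ is equisatisfiable with $\phi$, is constructible in polynomial time, and preserves $M$ as a modulator since no new modulator variables are introduced.

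The key structural outcome of the ordering is that every 3-clause of $\psi$ that survives in $\psi-M$ has the form $(\neg y_{i-1}\vee z_i\vee y_i)$, containing exactly one original non-modulator literal and two fresh variables. Consequently the primal graph of $\psi-M$ contains no edge between two original non-modulator variables beyond those already present in $\phi-M$. For pathwidth, I plan to start from a path decomposition of $\phi-M$ of width $w$ and, for each split clause, walk through its caterpillar of surviving 3-clauses in order, inserting short sequences of bags that introduce and propagate the fresh $y_i$'s; at any point only two fresh variables need to coexist with the existing content of a bag, yielding $\pw(\psi-M)\le\pw(\phi-M)+2$. The same augmentation on a tree decomposition gives the treewidth bound.

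For the treedepth bound, a linear chain of fresh variables could increase the depth by $\Omega(k)$ per clause, so the linear splitting must be replaced by a balanced binary splitting of the form $C=(\ell_1\vee\ldots\vee\ell_{\lceil k/2\rceil}\vee y)\wedge(\neg y\vee\ell_{\lceil k/2\rceil+1}\vee\ldots\vee\ell_k)$ applied recursively, so that the fresh variables introduced for a single clause themselves have a structure of treedepth $O(\log k)$. At each 3-clause of the recursion that would still contain two original non-modulator literals, I further split it using an auxiliary encoding $(\neg y\vee y_1\vee y_2)\wedge(\neg y_1\vee z_1)\wedge(\neg y_2\vee z_2)$ to decouple those $z$'s in the primal graph. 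Given a treedepth decomposition $F$ of $\phi-M$, I plan to integrate the fresh variables of each clause as a module of treedepth $O(\log|\phi|)$, attached to $F$ at a location chosen so that each fresh variable is in ancestor-descendant relation with all of its neighbors; since distinct clauses' modules involve disjoint fresh variables, they can be placed into independent branches of the augmented decomposition, so the total depth increase is bounded by the maximum per-clause contribution, $O(\log|\phi|)$.

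The main technical obstacle is this last step, namely showing that the per-clause modules can indeed be consistently attached to $F$. A fresh variable may still be adjacent to several original non-modulator variables (even with the above decoupling), and a naive placement may force unrelated original variables into ancestor-descendant relation. The delicate argument would analyse the recursion tree of the balanced splitting and use the no-new-$z$-$z$-edges property together with the $O(\log k)$ internal depth of each clause's module to show that the module can be embedded into $F$ by inserting at most $O(\log|\phi|)$ additional nodes along an appropriate root-to-leaf path of $F$, rather than above the lowest common ancestor of the clause's $z$'s as a naive placement would require.
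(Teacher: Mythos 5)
Your overall plan follows the same clause-splitting idea as the paper, but you miss the paper's key structural lemma, which makes the tree-depth bound clean: the paper first proves a self-contained claim that if one attaches to each clique $K$ in a set of cliques of a graph $G$ a path on at most $\ell$ fresh vertices, with all edges from the path going into $K$, then $\pw$ and $\tw$ go up by at most $2$ and $\td$ goes up by at most $\log(\ell+1)$. For $\td$ the argument is simply that all vertices of a clique lie on a common root--leaf path of the elimination forest, so one can insert \emph{below the deepest such vertex} a balanced (middle-rooted) decomposition of the fresh path, which has depth $\log(\ell+1)$; paths for different clauses use disjoint fresh variables and are attached in parallel branches, so the increments do not stack. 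Your proposal instead tries to fight the $\Omega(\ell)$ depth of a \emph{chain placement} by changing the clause split itself to a balanced binary one and introducing the auxiliary $(\neg y\lor y_1\lor y_2)$ decoupling gadget. This is unnecessary: you can keep the standard linear split and simply re-root the resulting path of $z$-variables at its midpoint; the extra machinery is what forces you into the ``main technical obstacle'' you yourself flag as unresolved. As written, the tree-depth part of your argument is therefore incomplete and does not establish $\td(\psi-M)\le \td(\phi-M)+O(\log|\phi|)$.

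On the positive side, your reordering-and-duplication step (placing a modulator literal at both ends of any split clause that meets $M$, duplicating if there is only one) is a genuinely useful observation: it guarantees that the two 3-clauses of the chain that contain two old literals are both deleted in $\psi-M$, so the split never creates a new edge between two non-modulator variables that were not already adjacent in $\phi-M$. The paper's brief application of its claim to ``the primal graph'' does not make this case explicit, and your refinement is a cleaner way to justify that, on $\psi-M$, the only additions are the fresh path and its edges into old vertices. However, your stated structural consequence (``every surviving 3-clause has exactly one original non-modulator literal'') is literally false for $M$-free clauses -- there the first and last links keep two old literals -- it is only the weaker consequence (no new old--old edges) that holds; you should state the latter directly. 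Finally, your pathwidth/treewidth paragraph is still only a sketch: for a clause whose non-modulator variables are spread across several bags, the chain $z_1,\dots,z_{p-1}$ does not attach to a single bag, and ``walking through the caterpillar, inserting short sequences of bags'' does not by itself explain why only two fresh variables ever coexist with a given bag. The paper avoids this issue precisely because its claim presupposes that the neighborhood of each fresh path lies inside a single clique, hence a single bag; if you want to keep your reordering variant you should still route the argument through a single bag (or a single root--leaf path) per clause, which requires a bit more care than the proposal currently gives.
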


\begin{proof} 

We first state an easy claim.

\begin{claim} Let $G=(V,E)$ be a graph and $\mathcal{K}\subseteq 2^V$ be a set
of cliques of $G$.  Let $G'$ be a graph obtained from $G$ by adding for each
$K\in \mathcal{K}$ a path going through at most $\ell$ new vertices and
connecting some of the vertices of the path to vertices of $K$.  Then
$\pw(G')\le \pw(G)+2$, $\tw(G')\le \tw(G)+2$, and $\td(G')\le
\td(G)+\log(\ell+1)$.  \end{claim}

\begin{claimproof} Assume we start with a tree or path decomposition of $G$ or
a rooted tree proving the tree-depth of $G$. For path and tree decompositions
we know that there is a bag $B$ that contains the clique $K$.  Take a path
decomposition of the path that we connected to $K$ and add to all bags the
contents of bag $B$ (the width has now been increased by $2$). In the case of
treewidth, attach one bag of this decomposition to $B$; in the case of
pathwidth insert this decomposition after $B$. For tree-depth, let $k\in K$ be
the vertex that is furthest from the root.  We observe that all vertices of
$K\setminus\{k\}$ must be ancestors of $k$. We insert below $k$ a tree of depth
at most $\log(\ell+1)$ containing the path we connected to $k$.  Here we are
using the fact that a path of $\ell$ vertices has tree-depth at most
$\log(\ell+1)$ (which can easily be seen by induction).  \end{claimproof}

We apply the following standard trick exhaustively: as long as $\phi$ contains
a clause of arity $r>3$, say $C=(\ell_1\lor \ell_2\lor\ldots\lor \ell_r)$,
introduce $r-3$ new variables $z_1,\ldots,z_{r-3}$ and replace $C$ with the
clauses $(\ell_1\lor \ell_2\lor z_1)\land (\neg z_1\lor \ell_3\lor z_2)\land
(\neg z_2\lor \ell_4\lor z_3)\land\ldots\land (\neg z_{r-3}\lor \ell_{r-1}\lor
\ell_r)$. Observe now that the variables of $C$ form a clique in the primal
graph, while the new variables induce a path whose neighborhood is contained in
this clique. We can therefore apply the claim, with $\mathcal{K}$ being the set
of all clauses on which we applied this transformation.  \end{proof}

We note here two further easy observations justifying the organization of
hypotheses we have made in \cref{fig:results}. 

\begin{theorem} (\maxsatseth\ $\Rightarrow$ \aldseth) If there is an $\eps>0$
and an algorithm that decides satisfiability of $n$-input circuits of size $s$
and depth at most $\eps n$ in time $(2-\eps)^ns^{O(1)}$, then there is an
$\eps'>0$ and an algorithm that given as input an $n$-variable CNF formula
$\phi$ finds the assignment satisfying the maximum number of clauses in time
$(2-\eps')^n|\phi|^{O(1)}$.  \end{theorem}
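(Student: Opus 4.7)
The plan is to argue by contraposition: given the supposed fast algorithm for satisfiability of circuits of depth $\eps n$, I would build a fast algorithm for \textsc{Max-SAT} by reducing each \textsc{Max-SAT} instance to a polynomial number of \textsc{Circuit-SAT} calls on circuits of logarithmic depth and on the same input variables.

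Given a CNF formula $\phi$ on $n$ variables with $m$ clauses and a target $t$, I would build a Boolean circuit $C_t$ with exactly $n$ inputs that accepts an assignment iff it satisfies at least $t$ clauses, as follows. First, for each clause $C_j$ compute a Boolean value $b_j$ indicating whether $C_j$ is satisfied by the current assignment, using a tree of $\lor$ and $\neg$ gates of depth $O(\log n)$. Second, compute a threshold function on $b_1,\ldots,b_m$: since any symmetric Boolean function of $m$ bits lies in $\mathrm{NC}^1$, this stage can be realized by a polynomial-size subcircuit of depth $O(\log m)$, whose output is True precisely when $\sum_j b_j \ge t$. Thus $C_t$ has size $|\phi|^{O(1)}$ and depth at most $c \log |\phi|$ for some absolute constant $c$.

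The next step is a case analysis, with splitting threshold determined by $c$ and $\eps$. If $|\phi| \ge 2^{\eps n /c}$, then $2^n \le |\phi|^{c/\eps}$, so exhaustive enumeration over all $2^n$ assignments solves \textsc{Max-SAT} in time $|\phi|^{O(1)}$, which is well within the target bound. Otherwise $c \log |\phi| \le \eps n$, so $C_t$ meets the depth requirement of the assumed algorithm, which decides its satisfiability in time $(2-\eps)^n |\phi|^{O(1)}$. To turn this decision oracle into an algorithm that returns an actual optimal assignment, I would scan $t$ from $m$ down to $0$ to locate the maximum satisfiable value, and then apply standard self-reduction (fixing the variables one by one and verifying that the optimum stays attainable), all of which contribute only a polynomial overhead. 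The total running time is $(2-\eps')^n |\phi|^{O(1)}$ with, for instance, $\eps' = \eps/2$.

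The main subtlety will be the interplay between $n$ and $|\phi|$: because \textsc{Max-SAT} instances can contain repeated clauses, $|\phi|$ is not a priori bounded by any function of $n$, so the desired inequality $c \log |\phi| \le \eps n$ need not hold. The case split above handles this cleanly, using the observation that whenever $|\phi|$ is too large for the circuit depth to fit, it is simultaneously large enough to absorb the $2^n$ brute-force factor inside the $|\phi|^{O(1)}$ term.
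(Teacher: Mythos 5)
Your proposal is correct and follows essentially the same route as the paper's proof: a per-clause evaluation stage of depth $O(\log n)$, a threshold subcircuit of depth $O(\log m)$ (using $\mathrm{TC}^0\subseteq\mathrm{NC}^1$), a case split that handles large $|\phi|$ by brute force so that the depth bound $\eps n$ is met otherwise, and a scan over $t$ plus self-reduction to recover an optimal assignment. The only cosmetic quibble is that this is a direct proof of the stated implication rather than an argument "by contraposition."
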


\begin{proof}

Construct a circuit with $n$ inputs and a gate for each clause appropriately
connected to the inputs through negations. This circuit can be made to have
fan-in $2$ and depth $\log n$. For each target $t$ we now want to add some
layers to the circuit so that it is satisfiable if and only if we can satisfy
at least $t$ clauses of the original instance. Running the assumed algorithm
for each $t$ allows us to compute the maximum number of satisfied clauses,
while self-reducibility allows us to find an assignment that satisfies this
many clauses.

What is missing then is a circuit with $m$ inputs and depth $O(\log m)$, where
$m$ is the number of clauses, that outputs True if and only if at least $t$ of
its inputs are set to True.  This is just a threshold gate, and it is known
that such gates can be simulated by logarithmic depth circuits (that is,
TC$^0\subseteq$NC$^1$). We have a circuit of depth $C\log m$ and we observe
that if $C\log m>\eps n$, the original instance has $m>2^{\eps n/C}$ and can
therefore be solved in time $m^{O(1/\eps)}=|\phi|^{O(1)}$. We can therefore
assume that the circuit we have constructed has depth at most $\eps n$ and
invoke the assumed algorithm.  \end{proof}

\begin{theorem} (\twosatseth\ $\Rightarrow$ \circseth) If there is an $\eps>0$
and an algorithm that decides satisfiability of $n$-input circuits of size $s$
in time $(2-\eps)^ns^{O(1)}$, then there is an $\eps'>0$ and an algorithm that
given as input a CNF formula $\phi$ and a strong \textsc{2-SAT} backdoor of
size $b$ decides if $\phi$ is satisfiable in time
$(2-\eps')^b|\phi|^{O(1)}$.\end{theorem}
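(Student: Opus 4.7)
The plan is to reduce \textsc{SAT} with a strong \textsc{2-SAT} backdoor to \textsc{Circuit-SAT} so that the backdoor variables become exactly the inputs of the constructed circuit. Given a CNF formula $\phi$ together with a strong \textsc{2-SAT} backdoor $B=\{y_1,\ldots,y_b\}$, I will construct a Boolean circuit $C$ with input gates $y_1,\ldots,y_b$ and size $|\phi|^{O(1)}$ such that $C$ is satisfiable if and only if $\phi$ is, and then invoke the assumed \textsc{Circuit-SAT} algorithm. Since the circuit has $b$ inputs and polynomial size, the running time will be $(2-\eps)^b|\phi|^{O(1)}$, which matches the form required to falsify \twosatseth, so one can take $\eps'=\eps$.

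The core of the construction is to hard-wire into $C$ the standard polynomial-time \textsc{2-SAT} algorithm applied to the formula $\phi$ simplified by the assignment to $B$. First, I will go through each clause $C_j$ of $\phi$: since $B$ is a strong \textsc{2-SAT} backdoor, $C_j$ contains at most two non-backdoor literals, say $\ell_1,\ell_2$, together with some backdoor literals. For each pair of complementary non-backdoor literals $(\alpha,\beta)$ over the (at most $2|\phi|$) possible literals, I add a gate $E_{\alpha,\beta}$ that evaluates to True iff the backdoor assignment does not satisfy $C_j$ and the implication $\alpha\to\beta$ arises from $C_j$; this is a simple Boolean function of $O(1)$ backdoor literals appearing in $C_j$, and aggregating over all clauses yields, for each pair $(\alpha,\beta)$, a gate indicating the presence of the implication $\alpha\to\beta$ in the simplified implication graph. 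Second, I will compute the transitive closure of this implication graph by a standard polynomial-size circuit (e.g., iterating matrix squaring $O(\log|\phi|)$ times, or the naive $O(|\phi|^3)$ layered DP), obtaining for every pair $(\alpha,\beta)$ a gate that is True iff $\alpha$ reaches $\beta$. Finally, the output of $C$ is the conjunction, over every non-backdoor variable $x$, of the negation of $(x\to\neg x)\land(\neg x\to x)$, i.e.\ $C$ outputs True iff no variable lies in an SCC with its negation. Correctness follows from the standard characterization of \textsc{2-SAT} satisfiability via the implication graph.

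The construction is clearly polynomial in $|\phi|$, has bounded fan-in after the usual transformation, and exposes exactly $y_1,\ldots,y_b$ as inputs. Running the hypothesized \textsc{Circuit-SAT} algorithm on $C$ then decides the satisfiability of $\phi$ in time $(2-\eps)^b|\phi|^{O(1)}$, as desired. I do not foresee a real obstacle here: the main thing to verify carefully is that every step of the simplification plus \textsc{2-SAT} decision procedure can be carried out by a polynomial-size circuit whose only inputs are the backdoor variables, which follows from the fact that \textsc{2-SAT} is in $\mathrm{P}$ (indeed in $\mathrm{NL}\subseteq\mathrm{NC}^2$), so no restriction on circuit depth is needed for \circseth.
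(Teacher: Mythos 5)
Your proposal is correct and follows essentially the same route as the paper: expose the backdoor variables as the circuit inputs, hard-wire a polynomial-time \textsc{2-SAT} decision procedure for the residual formula into a polynomial-size circuit, and invoke the assumed \textsc{Circuit-SAT} algorithm. The only difference is the choice of \textsc{2-SAT} subroutine (implication-graph reachability via transitive closure, versus the paper's exhaustive resolution closure), which is immaterial; just be sure to also handle clauses all of whose literals are backdoor literals and are falsified by the assignment (these produce the empty clause rather than an implication), e.g.\ by conjoining a gate that checks no such clause is entirely falsified.
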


\begin{proof}

The idea is to construct a circuit implementing the standard polynomial-time
algorithm for \textsc{2-SAT}. Recall that one way to decide satisfiability for
\textsc{2-SAT} is to run resolution exhaustively, where resolution is the rule
that, given two clauses $C,C'$ that contain two opposite literals $x,\neg x$,
adds to the formula a new clause containing all the other literals of $C,C'$.
It is known that a 2-CNF formula is satisfiable if and only if performing this
exhaustively does not produce the empty formula.

We are given a CNF $\phi$ and a backdoor $B$ of size $b$. Let $x_1,\ldots,x_n$
be the non-backdoor variables of $\phi$. We construct a circuit with $b$
inputs. Observe that there are at most $4n^2$ clauses we can construct with the
variables $x_1,\ldots,x_n$. For each $t\in[0,4n^2]$ and clause $C$ that can be
constructed from these variables we will have in our circuit a gate $x_{t,C}$
whose meaning is that the algorithm has inferred $C$ after $t$ steps. The gates
$x_{0,C}$ can be computed from the inputs: $x_{0,C}$ is the disjunction over
all clauses of $\phi$ that contain $C$ of the conjunction of the negation of
all their other literals. It is now easy to compute $x_{i+1,C}$ from layer
$x_{i,C}$ and the output of the circuit is $x_{4n^2,\emptyset}$. The size of
the circuit is polynomial in $\phi$.  \end{proof}

\end{document}